\documentclass[12pt,pdftex,noinfoline]{imsart}

\RequirePackage[OT1]{fontenc}
\usepackage{amsthm,amsmath,amsfonts,natbib,mathtools,amssymb}
\RequirePackage{hypernat}
\usepackage[ruled,section]{algorithm}
\usepackage{algorithmic}
\usepackage{graphicx}
\usepackage{verbatim}
\usepackage{times}
\usepackage[T1]{fontenc}
\usepackage[text={6.0in,8.6in},centering]{geometry}
\usepackage{subfigure}
\usepackage{graphicx}
\usepackage{hyperref}
\usepackage{yhmath}
\usepackage[usenames,dvipsnames]{xcolor}
\definecolor{shadecolor}{gray}{0.9}
\definecolor{shadecolor}{gray}{0.9}
\hypersetup{citecolor=MidnightBlue}
\hypersetup{linkcolor=MidnightBlue}
\hypersetup{urlcolor=MidnightBlue}
\usepackage{enumerate}
\usepackage{fullpage}
\usepackage{tikz}
\usetikzlibrary{decorations.pathreplacing}
\usetikzlibrary{arrows,positioning}

\setcounter{tocdepth}{2}
\parindent15pt
\footskip30pt
\setcounter{tocdepth}{2}

\let\hat\widehat
\let\tilde\widetilde
\def\Cov{\text{Cov}} 
\def\Var{\text{Var}} 
\def\Corr{\text{Corr}}


\def\text#1{\mbox{\rm #1}}

\newcommand{\argmin}{\mathop{\rm argmin}}

\newcommand{\wt}{\widetilde}

\newcommand{\fnorm}[1]{\|#1\|_{\rm F}}

\newcommand{\Tr}{\mathop{\text{Tr}}}

\newcommand{\supp}{{\rm supp}}
\newcommand{\iprod}[2]{\left \langle #1, #2 \right\rangle}

\newcommand{\TV}{{\sf TV}}

\theoremstyle{plain}
\newtheorem{theorem}{Theorem}[section]
\newtheorem{proposition}[theorem]{Proposition}
\newtheorem{lemma}[theorem]{Lemma}
\newtheorem{corollary}[theorem]{Corollary}
\theoremstyle{remark}

\theoremstyle{definition}

\def\P{{\mathbb P}}

\def\supp{\mathop{\text{supp}\kern.2ex}}
\def\argmin{\mathop{\text{arg\,min}\kern.2ex}}
\let\hat\widehat
\let\tilde\widetilde





\def\shape#1{
  \lower5pt\hbox{
  \hskip-7pt
  \tikzset{circ/.style={circle, draw, fill=black, scale=.2}}
  \begin{tikzpicture}[semithick,scale=.3]
  \node (l1) at (0,.866) [circ]{};
  \node (l2) at (1,.866) [circ]{};
  \node (l3) at (0.5,0) [circ]{};
  #1
  \end{tikzpicture}
  \hskip-8pt}
}

\def\Vshape{
  \draw[-,color=white] (l1) to node [auto] {} (l2);
  \draw[-] (l1) to node [auto] {} (l3);
  \draw[-] (l2) to node [auto] {} (l3);
}

\def\emptyshape{
  \draw[-,color=white] (l1) to node [auto] {} (l3);
  \draw[-,color=white] (l2) to node [auto] {} (l3);
  \draw[-,color=white] (l1) to node [auto] {} (l2);
}

\def\triangleshape{
  \draw[-] (l1) to node [auto] {} (l3);
  \draw[-] (l2) to node [auto] {} (l3);
  \draw[-] (l1) to node [auto] {} (l2);
}

\def\edgeshape{
  \draw[-] (l1) to node [auto] {} (l3);
  \draw[-,color=white] (l2) to node [auto] {} (l3);
  \draw[-,color=white] (l1) to node [auto] {} (l2);
}

\numberwithin{equation}{section}
\graphicspath{{./figs/}}

\begin{document}

\begin{frontmatter}
\title{\Large Testing Network Structure Using Relations Between \vskip-5pt Small Subgraph Probabilities}
\runauthor{Gao and Lafferty}

\begin{aug}
\vskip10pt
\author{\fnms{Chao}
  \snm{Gao${}^{*}$}\ead[label=e1]{chaogao@galton.uchicago.edu}}
 \and 
\author{\fnms{John}
  \snm{Lafferty${}^{*\dag}$}\ead[label=e4]{lafferty@galton.uchicago.edu}}
\vskip10pt
\address{
\begin{tabular}{c}
${}^*$Department of Statistics \\
${}^\dag$Department of Computer Science \\
University of Chicago 
\end{tabular}
\\[10pt]
\today\\[5pt]
\vskip10pt
}
\end{aug}

\begin{abstract}
We study the problem of testing for structure in networks using
relations between the observed frequencies of small subgraphs.
We consider the statistics
\begin{align*}
T_3 & =(\text{edge frequency})^3 - \text{triangle frequency}\\
T_2 & =3(\text{edge frequency})^2(1-\text{edge frequency}) - \text{V-shape frequency}
\end{align*}
and prove a central limit theorem for $(T_2, T_3)$ under an
Erd\H{o}s-R\'{e}nyi null model. We then 
analyze the power of the associated $\chi^2$ test statistic under a general class of
alternative models. In particular, when the alternative is a
$k$-community stochastic block model, with $k$ unknown, the power of
the test approaches one.  Moreover, the signal-to-noise ratio
required is strictly weaker than that required for community
detection.  We also study the relation with other statistics over
three-node subgraphs, and analyze the error under two natural
algorithms for sampling small subgraphs.  Together, our results show
how global structural characteristics of networks can be inferred from
local subgraph frequencies, without requiring the global community
structure to be explicitly estimated.
\end{abstract}
 
\vskip20pt 
\end{frontmatter}

\maketitle

\vskip10pt

\let\weaklyto\rightsquigarrow

\section{Introduction}

The statistical properties of graphs and networks have been
intensively studied in recent years, resulting in a rich and detailed
body of knowledge on stochastic graph models. Examples include graphons
and the stochastic block model \citep{holland83,lovasz12}, preferential
attachment models \citep{price76,barabasi99}, and
other generative network models \citep{bollobas01}. This work often seeks to model the
network structures observed in ``naturally occurring'' settings, such
as social media.  A focus has been to develop simple models that
can be rigorously studied, while still capturing some of the phenomena
observed in actual data. Related work has developed procedures to find
structure in networks, for example using spectral algorithms for finding
communities \citep{rohe2011spectral,jin2015,arias-castro2014}.  Another line of research has studied estimation and
detection of signals on graphs where
the structure of the signal is exploited to develop efficient
procedures \citep{dfs16,arias-castro2011}.

In this work our focus is different. We are interested in studying how
global structural properties of graphs and networks might be inferred
from purely local properties. In the absence of a probability model to
generate the graph, this is a classical mathematical topic. For
example, the Euler characteristic relates a graph invariant (genus) to
local graph counts (number of edges and faces). More general relations
between local structure and global invariants lie at the heart of
combinatorics and algebraic topology; topological data analysis is the
study of such relations under a data sampling model.  In this paper we
study how the presence of communities in a network might be detected
from deviations in small subgraph probabilities, focusing on
invariants between the probabilities of small subgraphs, such as edges
and triangles, that must hold for unstructured random graphs.

Our work is inspired by \cite{ugander2013subgraph}, who present striking data on the
empirical distributions of 3-node and 4-node subgraphs of Facebook friend
networks, comparing them to the distributions that would be obtained under an
Erd\H{o}s-R\'{e}nyi model. 
Notably, the subgraph frequencies of the Facebook subnetworks appear surprisingly close
to the corresponding probabilities under an Erd\H{o}s-R\'{e}nyi model, even though the
friend networks are expected to exhibit community structure.
However, the subgraph frequencies are not arbitrary. In fact, the global graph structure places
purely combinatorial restrictions on the subgraph probabilities,
sometimes called homomorphism constraints \citep{razborov2008}.
The interaction between these aspects are discussed by \cite{ugander2013subgraph}, who pose
the broad research question ``What properties of social graphs are `social'
properties and what properties are `graph' properties?''  Their work develops two
complementary methods to shed light on this question. First, they propose a
generative model that extends the Erd\H{o}s-R\'{e}nyi model and better matches the
empirical data. Second, they develop methods to bound the homomorphism
constraints that determine the feasible space of subgraph probabilities.

In the present work we take a complementary, statistical approach to
distinguishing graphs with social structure from random graphs using only small
subgraph frequencies, framing the problem in terms of
statistical testing.  We consider several questions, focusing on
the test statistics
\begin{align*}
T_3 & =(\text{edge frequency})^3 - \text{triangle frequency}\\
T_2 & =3(\text{edge frequency})^2(1-\text{edge frequency}) - \text{V-shape frequency}
\end{align*}
Noting, for example, the population invariant  $p^3 - \P(\text{triangle}) = 0$ under 
an Erd\H{o}s-R\'{e}nyi null model, we study these two statistics for testing 
against different alternative network models.  The
following is a high-level summary of our findings; precise statements are given
later.

\begin{itemize}

\item We show that, under appropriate normalization, $(T_2,T_3) \weaklyto  \text{multivariate normal}$.

\item Under a fairly general class of alternative models, the power of
the natural test using $T_2$ and $T_3$ approaches one.

\item When the alternative is a stochastic block model with two communities, the
power approaches one under the optimal scaling of the signal-to-noise
ratio for community detection.

\item When the alternative is a $k$-community stochastic block model, with $k$ unknown, the power of the
test again approaches one.  But the signal-to-noise ratio required is strictly
weaker than that required for community detection.

\item The sampling error is analyzed under two natural algorithms for sampling
small subgraphs.
\end{itemize}

Together, our results show how global structural characteristics of networks can be
inferred from local subgraph frequencies, without requiring the global community
structure to be explicitly estimated. Recent work of
\cite{maugis2017} considers a closely related problem of using small
subgraph counts to test whether a collection of networks is drawn from
a known graphon null model. \cite{bubeck2014} study tests based on 
signed triangles for distinguishing 
Erd\H{o}s-R\'{e}nyi graphs from random geometric graphs in the dense
regime. \cite{banerjee2016contiguity} and \cite{banerjee2017} study tests based on signed circles and establish the relation to likelihood ratio statistic for stochastic block models.

The remainder of the paper is organized as follows. In Section~\ref{sec:equations}
we introduce various subgraph
frequency equations and discuss how to use them to do network testing.
We state and explain our results that give a theoretical analysis of
the tests, for a range of alternative distributions, in
Section~\ref{sec:power}. Section~\ref{sec:sample}, we propose two network
sampling schemes to facilitate computation, and discuss tradeoffs
between computation and network sparsity. Section~\ref{sec:simulation} presents the results of numerical studies. Proofs
of all of the technical results are deferred to Section~\ref{sec:pf}.

We close this section by introducing the notation used in the
paper. For an integer $m$, we use $[m]$ to denote the set
$\{1,2,...,m\}$. Given a set $S$, the cardinality is denoted by $|S|$, and
$\mathbb{I}_S$ is the associated indicator function. When $A$ is a
square matrix, $\Tr(A)$ denotes its trace. For two matrices
$A,B\in\mathbb{R}^{d_1\times d_2}$, their trace inner product is
$\iprod{A}{B}=\Tr(AB^T)$. We use $\mathbb{P}$ and $\mathbb{E}$ to
denote generic probability and expectation, with respect to distributions
determined from the context.



\section{Relations Between Small Subgraph Frequencies}\label{sec:equations}

We now describe the basic relations between small subgraph
probabilities that we study.  There are four three-node subgraphs to consider: The empty graph (\shape{\emptyshape}), a
single edge (\shape{\edgeshape}), the V-shape
\hbox{(\shape{\Vshape})}, and the triangle (\shape{\triangleshape}).
Let $ A = (A_{ij})_{1\leq i<j\leq n}$ denote the adjacency matrix 
for a given undirected network on $n$ nodes.
We shall use subscripts that indicate the number of edges in the 
subgraph.  Thus, the relative frequency of triangles among 
all three-node subgraphs is 
$$\hat{F}_3=\frac{1}{{n\choose 3}}\sum_{1\leq i<j<k\leq
  n}A_{ij}A_{jk}A_{ki},$$
the $V$-shape frequency is 
$$\hat{F}_2=\frac{1}{{n\choose 3}}\sum_{1\leq i<j<k\leq n}\bigl\{(1-A_{ij})A_{jk}A_{ki}+A_{ij}(1-A_{jk})A_{ki}+A_{ij}A_{jk}(1-A_{ki})\bigr\},$$
the single edge frequency is 
$$\hat{F}_1=\frac{1}{{n\choose 3}}\sum_{1\leq i<j<k\leq n}\bigl\{(1-A_{ij})(1-A_{jk})A_{ki}+A_{ij}(1-A_{jk})(1-A_{ki})+(1-A_{ij})A_{jk}(1-A_{ki})\bigr\},$$
and the empty graph frequency is 
$$\hat{F}_0=\frac{1}{{n\choose 3}}\sum_{1\leq i<j<k\leq
  n}(1-A_{ij})(1-A_{jk})(1-A_{ki}).$$
Under an Erd\H{o}s-R\'{e}nyi model with edge probability $p$, the
population subgraph frequencies of the above four shapes are easily
seen to be
\begin{equation}
\mathbb{E}\hat{F}_3=p^3,\quad \mathbb{E}\hat{F}_2=3p^2(1-p),\quad\mathbb{E}\hat{F}_1=3p(1-p)^2,\quad\text{and}\quad \mathbb{E}\hat{F}_0=(1-p)^3.\label{eq:pop-freq}
\end{equation}
These expressions suggest that the subgraph frequencies can also be
estimated plugging in the empirical edge frequency
\begin{equation}
\hat{p}=\frac{1}{{n\choose 2}}\sum_{1\leq i<j\leq n}A_{ij}.\label{eq:p-hat}
\end{equation}
Under the Erd\H{o}s-R\'{e}nyi model, one can expect that the two ways
of estimating subgraph frequencies are close. This leads us to study the following four relations:
\begin{eqnarray*}
T_0 &=& (1-\hat{p})^3-\hat{F}_0,\\
T_1 &=& 3\hat{p}(1-\hat{p})^2-\hat{F}_1, \\
T_2 &=& 3\hat{p}^2(1-\hat{p})-\hat{F}_2, \\
T_3 &=& \hat{p}^3-\hat{F}_3.
\end{eqnarray*}
The quantities $T_0$, $T_1$, $T_2$ and $T_3$ are the differences
between the empirical subgraph frequencies and their plugin estimates.
It is easy to see that $T_0+T_1+T_2+T_3=0$; thus, there are at most
three degrees of freedom among the four equations. The following
proposition shows that for a sparse Erd\H{o}s-R\'{e}nyi graph, the
asymptotic number of degrees of freedom is actually only two.

\begin{proposition}\label{prop:corr}
Suppose that $p=o(1)$. Then
$$\Corr(T_2,T_3)=o(1),\quad \Corr(T_1,T_3)=o(1),\quad\text{and}\quad \Corr(T_1,T_2)=1+o(1),$$
where $o(1)$ indicates a quantity that converges to $0$ as $n\rightarrow\infty$.
\end{proposition}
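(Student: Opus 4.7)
The plan is to reduce the three correlations to a calculation about $(T_2,T_3)$ alone by exploiting an exact combinatorial relation, and then to compute the relevant variances and covariance at leading order under $p=o(1)$. Summing the number of edges over all triples gives $\sum_{\{i,j,k\}}(A_{ij}+A_{jk}+A_{ki}) = (n-2)\binom{n}{2}\hat{p} = 3\binom{n}{3}\hat{p}$, and re-grouping this sum by the number of edges in the triple yields $\hat{F}_1+2\hat{F}_2+3\hat{F}_3 = 3\hat{p}$. The binomial identity $\sum_k k\binom{3}{k}x^k(1-x)^{3-k} = 3x$ applied at $x=\hat{p}$ gives the analogue $\sum_k k f_k(\hat{p}) = 3\hat{p}$ for the plugin terms $f_k(\hat{p}) = \binom{3}{k}\hat{p}^k(1-\hat{p})^{3-k}$, and subtracting yields $T_1+2T_2+3T_3 = 0$ identically, i.e.\ $T_1 = -2T_2 - 3T_3$. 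Combined with $T_0+T_1+T_2+T_3 = 0$ this leaves exactly two degrees of freedom, so all three requested correlations are determined by $\Var(T_2)$, $\Var(T_3)$, and $\Cov(T_2,T_3)$ alone.

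Next I would find the orders of $\Var(T_2)$ and $\Var(T_3)$. Both $\hat{F}_2$ and $\hat{F}_3$ are U-statistics of degree $3$ with kernels $h_V$ and $h_\triangle$ equal to the indicators of V-shape and triangle. Taylor-expanding $f_k(\hat{p})$ around $p$ gives leading stochastic fluctuation $f_k'(p)(\hat{p}-p)$, and a short direct computation of $\Cov(\hat{F}_k,A_{ij})$ (only the $n-2$ triples containing edge $\{i,j\}$ contribute) shows that the single-edge Hoeffding projection of $\hat{F}_k$ has the same slope $f_k'(p)$ to leading order. These linear parts therefore cancel in $T_k$, and the surviving contribution to $\Var(T_k)$ is dominated by the single-triple ``diagonal'' of $\hat{F}_k$, of order $\Var(h_k)/\binom{n}{3}$. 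Since $\E h_V \asymp p^2$ and $\E h_\triangle \asymp p^3$, this gives $\Var(T_2) = \Theta(p^2/n^3)$ and $\Var(T_3) = \Theta(p^3/n^3)$, so $\Var(T_3)/\Var(T_2) = \Theta(p) \to 0$.

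The main obstacle is showing $\Cov(T_2,T_3) = o(\sqrt{\Var(T_2)\Var(T_3)})$. Expanding gives
\[
\Cov(T_2,T_3) \;=\; \Cov\bigl(3\hat{p}^2(1-\hat{p}),\hat{p}^3\bigr) - \Cov\bigl(3\hat{p}^2(1-\hat{p}),\hat{F}_3\bigr) - \Cov(\hat{F}_2,\hat{p}^3) + \Cov(\hat{F}_2,\hat{F}_3),
\]
and the same linearisation argument cancels the leading edge-projection pieces across the four terms. What remains of $\Cov(\hat{F}_2,\hat{F}_3)$ decomposes by the overlap of the V-shape triple and the triangle triple: disjoint or single-vertex-sharing triples contribute zero by independence; edge-sharing triples contribute $O(p^4/n^2)$ since a V-shape and a triangle on triples sharing only one edge require four additional independent edges; and same-triple contributions have magnitude $O(p^3/n^3)$ because $h_V h_\triangle \equiv 0$. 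Collecting terms gives $|\Cov(T_2,T_3)| = O(p^3/n^3)$, while $\sqrt{\Var(T_2)\Var(T_3)} = \Theta(p^{5/2}/n^3)$, so $\Corr(T_2,T_3) = O(p^{1/2}) = o(1)$.

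Finally, the identity $T_1 = -2T_2 - 3T_3$ together with $\Var(T_3) = o(\Var(T_2))$ and $|\Cov(T_2,T_3)| = o(\Var(T_2))$ gives $\Var(T_1) = 4\Var(T_2)(1+o(1))$ and $\Cov(T_1,T_2) = -2\Var(T_2)(1+o(1))$, yielding $|\Corr(T_1,T_2)| = 1+o(1)$. Similarly $\Cov(T_1,T_3) = -2\Cov(T_2,T_3) - 3\Var(T_3)$, which divided by $\sqrt{\Var(T_1)\Var(T_3)} = 2\sqrt{\Var(T_2)\Var(T_3)}(1+o(1))$ gives $\Corr(T_1,T_3) = -\Corr(T_2,T_3) + o(1) = o(1)$. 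The technically heavy step is the covariance bookkeeping of the previous paragraph; everything else is elementary U-statistic variance calculus.
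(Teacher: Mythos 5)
Your proof is correct and takes a genuinely different route from the paper's. The paper expands each $T_j$ explicitly in terms of the exactly centered statistics $R_3=\binom{n}{3}^{-1}\sum_{i<j<k}(A_{ij}-p)(A_{ik}-p)(A_{jk}-p)$ and $R_2=\binom{n}{3}^{-1}\sum_{i<j<k}\bigl[(A_{ij}-p)(A_{jk}-p)+(A_{ik}-p)(A_{jk}-p)+(A_{ij}-p)(A_{ik}-p)\bigr]$, uses the exact orthogonality $\mathbb{E}[R_2R_3]=0$ for all pairs of triples, and reads off that $T_3$ is asymptotically proportional to $R_3$ while $T_2$ and $T_1$ are asymptotically proportional to $R_2$. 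You instead derive the exact edge-counting identity $T_1+2T_2+3T_3=0$, which reduces everything to $(T_2,T_3)$, and replace the exact orthogonality by an overlap-case bound on $\Cov(\hat F_2,\hat F_3)$. Your variance orders $\Var(T_2)\asymp p^2/n^3$ and $\Var(T_3)\asymp p^3/n^3$ agree with the paper's $\Sigma_p$; your covariance bound $O(p^3/n^3)$ is not tight (the true order is $6p^4(1-p)^2/\binom{n}{3}$, visible in the off-diagonal of $\Sigma_p$), but it suffices, since it still gives $\Corr(T_2,T_3)=O(p^{1/2})=o(1)$. What the paper's route buys is that the orthogonality of $R_2$ and $R_3$ is exact term by term, so no overlap bookkeeping is needed; what your route buys is that the $T_1$ statements come for free from the identity.

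One substantive point: your identity exposes a sign error in the proposition and in the paper's proof. Since $T_1=-2T_2-3T_3$ and $\Var(T_3)=o(\Var(T_2))$, you correctly get $\Cov(T_1,T_2)=-2\Var(T_2)(1+o(1))$ and $\Var(T_1)=4\Var(T_2)(1+o(1))$, hence $\Corr(T_1,T_2)=-1+o(1)$, not $+1+o(1)$. The paper's displayed representation $T_1=-3R_3-(1-3p)R_2+3(\hat p-2p)(\hat p-p)^2$ is inconsistent with its own (correct) representations of $T_2$ and $T_3$: combining those with $T_1=-2T_2-3T_3$ forces the coefficient of $R_2$ in $T_1$ to be $+(2-3p)$, so that $T_1\approx 2R_2$ while $T_2\approx -R_2$. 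You hedged by writing $|\Corr(T_1,T_2)|=1+o(1)$; the absolute value is doing real work there, and you should state the sign explicitly as $-1$. This does not affect the substantive conclusion that $(T_0,T_1,T_2,T_3)$ carry only two asymptotic degrees of freedom.
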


The proposition shows that in order to understand the asymptotic
distribution of $T_0$, $T_1$, $T_2$, and $T_3$, it is sufficient to
study the asymptotic distribution of $(T_2,T_3)$.  However, this is
not a trivial task. Take $T_3$ for example. Although the leading term in
the expansion of $\hat{p}^3$ can be easily shown to be asymptotically
Gaussian using the delta method, and the leading term in the
expansion of $\hat{F}_3$ can also be shown to be asymptotically
Gaussian using the theory of incomplete U-statistics
\citep{nowicki1988subgraph,janson1991asymptotic}, the two leading
terms exactly cancel each other in $T_3$.  
It is therefore somewhat surprising that the joint
asymptotic distribution of $(T_2,T_3)$ is still Gaussian, as shown
in the next theorem.

\begin{theorem}\label{thm:vanilla}
Assume $(1-p)^{-1}=O(1)$ and $np\rightarrow\infty$. Then we have
$$\sqrt{{n\choose 3}}\Sigma_p^{-1/2}\begin{pmatrix}
T_2\\
T_3
\end{pmatrix}\leadsto N\left(\begin{pmatrix}
0 \\
0
\end{pmatrix},\begin{pmatrix}
1 & 0 \\
0 & 1
\end{pmatrix}\right),$$
where
\begin{equation} \label{eq:cov-def}
\Sigma_p=\begin{pmatrix}
3p^2(1-p)^2(1-3p)^2+9p^3(1-p)^3 & -6p^4(1-p)^2 \\
-6p^4(1-p)^2 & p^3(1-p)^3+3p^4(1-p)^2
\end{pmatrix}.
\end{equation}
\end{theorem}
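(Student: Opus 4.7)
The plan is to rewrite $T_2$ and $T_3$ as exact linear combinations of two completely degenerate U-statistics plus remainders that vanish after normalization, and then prove a joint CLT for those U-statistics. Set $\xi_{ij}=A_{ij}-p$, $u=\hat p-p$, and
\[
R_2 \defeq \binom{n}{3}^{-1}\sum_{i<j<k}(\xi_{ij}\xi_{jk}+\xi_{jk}\xi_{ki}+\xi_{ki}\xi_{ij}), \qquad R_3 \defeq \binom{n}{3}^{-1}\sum_{i<j<k}\xi_{ij}\xi_{jk}\xi_{ki}.
\]
Expanding $A_{ij}=p+\xi_{ij}$ inside $\hat F_3$ and $\hat F_2$ and using the identity $\binom{n}{3}^{-1}\sum_{i<j<k}(\xi_{ij}+\xi_{jk}+\xi_{ki})=3u$, one gets
\[
\hat F_3 = p^3 + 3p^2 u + p R_2 + R_3, \qquad \hat F_2 = 3p^2(1-p) + 3p(2-3p)u + (1-3p)R_2 - 3R_3,
\]
while Taylor-expanding the plugin statistics gives $\hat p^3 = p^3 + 3p^2 u + 3pu^2 + u^3$ and $3\hat p^2(1-\hat p) = 3p^2(1-p) + 3p(2-3p)u + 3(1-3p)u^2 - 3u^3$. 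The critical observation is that the $O(u)$ parts agree term-by-term in each pair and therefore cancel in the differences, yielding the exact identities
\[
T_3 = -R_3 - pR_2 + 3pu^2 + u^3, \qquad T_2 = 3R_3 - (1-3p)R_2 + 3(1-3p)u^2 - 3u^3.
\]

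Next I would argue that the $u^2$ and $u^3$ pieces are asymptotically negligible after applying $\sqrt{\binom{n}{3}}\,\Sigma_p^{-1/2}$. From $\Var(u)=p(1-p)/\binom{n}{2}=\Theta(p/n^2)$ one obtains $u^2=O_P(p/n^2)$ and $u^3=O_P(p^{3/2}/n^3)$, so the remainder vector has norm $O_P(p/n^2)$; a direct inspection of \eqref{eq:cov-def} shows that $\|\Sigma_p^{-1/2}\|=\Theta(p^{-3/2})$, so the normalized remainder is $O_P(n^{3/2}\cdot p^{-3/2}\cdot p/n^2)=O_P(1/\sqrt{np})=o_P(1)$ under $np\to\infty$. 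By Slutsky the theorem reduces to the joint CLT $\sqrt{\binom{n}{3}}(R_2,R_3)^\T \to N(0,\Sigma_R)$ for an explicit covariance $\Sigma_R$ determined by the linear relations above.

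The main technical obstacle is this joint CLT. Both $R_2$ and $R_3$ are completely degenerate: the Hoeffding projection of either onto any single edge vanishes identically (each monomial is a product of $\xi$'s on distinct edges, all mean-zero), so the non-degenerate U-statistic CLT used for subgraph counts in \citet{nowicki1988subgraph,janson1991asymptotic} does not apply directly. My plan is to invoke Cramer-Wold, fix $(a,b)\in\mathbb{R}^2$, and prove a univariate CLT for $aR_2+bR_3$ via a martingale CLT. Ordering the edges $e_1<\cdots<e_N$ lexicographically and setting $\mathcal{F}_i=\sigma(\xi_{e_1},\dots,\xi_{e_i})$, one writes $aR_2+bR_3=\sum_{i=1}^N \Delta_i$, where each increment $\Delta_i=\xi_{e_i}\cdot W_i$ is $\xi_{e_i}$ times an $\mathcal{F}_{i-1}$-measurable polynomial $W_i$ in earlier $\xi$'s (a weighted sum over edges sharing a vertex with $e_i$ for the $R_2$ part, and a sum over pairs of earlier edges completing a triangle with $e_i$ for the $R_3$ part). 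The two required checks are that the predictable quadratic variation $p(1-p)\sum_i W_i^2$ concentrates at $a^2\binom{n}{3}\Var(R_2)+b^2\binom{n}{3}\Var(R_3)$, and a Lindeberg-type tail condition which follows from an $L^4$ bound $\sum_i \E\Delta_i^4\to 0$. Both reduce to combinatorial moment estimates on products of $\xi_{ij}$'s supported on small edge-sharing or triangle-sharing subgraphs, which are tractable because the $\xi_{ij}$ are independent and bounded by $1$. A cleaner alternative is the fourth-moment (de Jong) CLT for homogeneous multilinear forms, applied to each of $R_2,R_3$.

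Finally, the entries of $\Sigma_p$ come from direct second-moment calculations. By independence and centering of the $\xi_{ij}$'s, $\Var(R_2)=3p^2(1-p)^2/\binom{n}{3}$ and $\Var(R_3)=p^3(1-p)^3/\binom{n}{3}$, while $\Cov(R_2,R_3)=0$ by a parity argument: any candidate expectation $\E[\xi_{e_1}\xi_{e_2}\xi_{f_1}\xi_{f_2}\xi_{f_3}]$ involves five $\xi$-factors, so some edge must appear an odd number of times and the expectation vanishes. Substituting into the identities for $T_2,T_3$ and multiplying by $\binom{n}{3}$ reproduces each entry of \eqref{eq:cov-def}; in particular the off-diagonal $-6p^4(1-p)^2$ arises as $p(1-3p)\cdot 3p^2(1-p)^2-3\cdot p^3(1-p)^3 = 3p^3(1-p)^2[(1-3p)-(1-p)]$.
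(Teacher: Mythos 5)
Your proposal is correct and follows essentially the same route as the paper: the identical decomposition $T_3=-R_3-pR_2+O_P(u^2)$, $T_2=3R_3-(1-3p)R_2+O_P(u^2)$ with the remainders killed by Slutsky, then Cram\'{e}r--Wold plus a martingale CLT for the completely degenerate forms $R_2,R_3$, and the same second-moment computations (including $\Cov(R_2,R_3)=0$ by parity) to recover $\Sigma_p$. The only difference is cosmetic: you filter edge-by-edge in lexicographic order, while the paper reveals one vertex at a time, a choice it makes so that the same martingale argument also covers the node-sampling statistic of Theorem \ref{thm:sample-node}.
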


The theorem gives a precise characterization of the asymptotic distribution of $(T_2,T_3)$. The condition $np\rightarrow\infty$ means that the network has a diverging degree. By the covariance structure (\ref{eq:cov-def}), it is easy to check that when $p=o(1)$, we have
$$\Corr(T_2,T_3)=\frac{\Cov(T_2,T_3)}{\sqrt{\Var(T_2)}\sqrt{\Var(T_3)}}=O(p^{3/2})=o(1),$$
which is consistent with Proposition \ref{prop:corr}.
The fact that $T_2$ and $T_3$ are asymptotically independent under
this scaling of $p$ suggests that one can build a chi-squared test to test whether a given network is generated by an Erd\H{o}s-R\'{e}nyi model. The proper normalization factor in the chi-squared test can be estimated using the empirical edge frequency (\ref{eq:p-hat}).

\begin{corollary}\label{cor:chi}
Assume $p=o(1)$ and $np\rightarrow\infty$. Then 
\begin{equation}
T^2=:{n\choose
  3}\left(\frac{T_2^2}{3\hat{p}^2(1-\hat{p})^2(1-3\hat{p})^2+9\hat{p}^3(1-\hat{p})^3}
+ \frac{T_3^2}{\hat{p}^3(1-\hat{p})^3+3\hat{p}^4(1-\hat{p})^2}\right)\leadsto \chi_2^2.\label{eq:chi^2}
\end{equation}
\end{corollary}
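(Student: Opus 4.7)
The proof proposal is to reduce Corollary \ref{cor:chi} to the joint CLT in Theorem \ref{thm:vanilla} together with Slutsky-type arguments that justify (i) replacing $\Sigma_p$ by its diagonal and (ii) replacing $p$ by $\hat p$ in the normalizing constants. The continuous mapping theorem applied to $x \mapsto \|x\|^2$ and Theorem \ref{thm:vanilla} immediately give
\[
\binom{n}{3}\begin{pmatrix} T_2 \\ T_3 \end{pmatrix}^{\!\!T}\Sigma_p^{-1}\begin{pmatrix} T_2 \\ T_3 \end{pmatrix} \leadsto \chi^2_2,
\]
so the task is to show that the quadratic form on the left is asymptotically equivalent to the diagonal, $\hat p$-based sum appearing in \eqref{eq:chi^2}.

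First I would carry out the asymptotic decoupling. Writing $a_p = 3p^2(1-p)^2(1-3p)^2+9p^3(1-p)^3$, $c_p = p^3(1-p)^3+3p^4(1-p)^2$, and $b_p = -6p^4(1-p)^2$, the assumption $p=o(1)$ gives $a_p \asymp p^2$, $c_p \asymp p^3$, and $b_p = O(p^4)$, so $b_p^2/(a_p c_p) = O(p^3) = o(1)$. Inverting the $2\times 2$ matrix $\Sigma_p$ and expanding, the quadratic form equals
\[
\binom{n}{3}\!\left(\frac{T_2^2}{a_p} + \frac{T_3^2}{c_p} - \frac{2 b_p T_2 T_3}{a_p c_p - b_p^2}\right)(1+o(1)).
\]
By Theorem \ref{thm:vanilla}, $\binom{n}{3} T_2^2 / a_p$ and $\binom{n}{3} T_3^2 / c_p$ are $O_p(1)$, so Cauchy--Schwarz bounds the cross term by a factor of order $|b_p|/\sqrt{a_p c_p} = O(p^{3/2}) = o(1)$ times an $O_p(1)$ quantity. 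Hence the off-diagonal contribution is $o_p(1)$ and the quadratic form with $\Sigma_p$ replaced by $\mathrm{diag}(a_p,c_p)$ still converges weakly to $\chi^2_2$.

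Next I would plug in $\hat p$ for $p$. Since $\hat p$ is a simple average of ${n\choose 2}$ Bernoulli$(p)$ variables (pairwise dependence aside, the variance is of order $p/n^2$ which is much smaller than $p^2$), one has $\hat p / p \to 1$ in probability under $np\to\infty$. Because $a_p$ and $c_p$ are polynomials in $p$ whose leading terms under $p=o(1)$ are $3p^2$ and $p^3$ respectively, the ratios $a_{\hat p}/a_p$ and $c_{\hat p}/c_p$ converge to $1$ in probability by the continuous mapping theorem. Dividing $\binom{n}{3}T_2^2$ and $\binom{n}{3}T_3^2$ by these estimated normalizers instead of the true ones therefore only multiplies each term by $1+o_p(1)$.

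The main obstacle, and the only delicate step, is the decoupling in the second paragraph: one must check that the polynomial lower-order corrections inside $a_p$ and $c_p$ do not bite when divided by $\hat p$-based estimators, i.e., that the $(1-3\hat p)^2$ factor does not vanish or change sign while still being consistent for $(1-3p)^2$. Under $p=o(1)$ both are $1+o_p(1)$, so this is harmless. Assembling the three steps via Slutsky's theorem gives $T^2 \leadsto \chi^2_2$, which is the claim of Corollary \ref{cor:chi}.
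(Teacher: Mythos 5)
Your proposal is correct and follows essentially the same route as the paper, which simply invokes Theorem \ref{thm:vanilla} together with Slutsky's theorem; you have merely filled in the details (continuous mapping for the quadratic form, negligibility of the off-diagonal term since $|b_p|/\sqrt{a_p c_p}=O(p^{3/2})=o(1)$, and consistency of the $\hat p$-based normalizers). Nothing further is needed.
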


The natural $\alpha$-level test statistic is defined as
\begin{equation}
\phi_{\alpha}=\mathbb{I}\{T^2>C_{\alpha}\},\label{eq:test-alpha}
\end{equation}
where $C_{\alpha}$ is a constant satisfying
$\mathbb{P}(\chi_2^2>C_{\alpha})=\alpha$. Corollary \ref{cor:chi} then
implies that the Type-I error $\mathbb{P}_0\phi_{\alpha}$ is
asymptotically at the level $\alpha$ under an Erd\H{o}s-R\'{e}nyi
model.

\def\flac#1#2{{#1}/{#2}}
\def\given{\,|\,}

\section{Analysis of Statistical Power}\label{sec:power}

We now turn to an analysis of the power of the test $T^2$, adopting a fairly standard decision-theoretic framework for 
hypothesis
testing \citep{ingster2012nonparametric,baraud2002non}. Consider a
general simple versus composite hypothesis testing setting
$$H_0: \{A_{ij}\}_{1\leq i<j\leq n}\sim \mathbb{P}_0,\quad H_1: \{A_{ij}\}_{1\leq i<j\leq n}\sim \mathbb{P}_1\in\mathcal{P}(\delta),$$
where $\delta$ characterizes the deviation from the null model. Given a testing function $\phi$, its risk is defined as
$$\mathbb{P}_0\phi + \sup_{\mathbb{P}_1\in \mathcal{P}(\delta)}\mathbb{P}_1(1-\phi)$$
and the minimax risk is
$$\mathcal{R}(\delta)=\inf_{\phi}\left\{\mathbb{P}_0\phi +
\sup_{\mathbb{P}_1\in \mathcal{P}(\delta)}\mathbb{P}_1(1-\phi)\right\}.$$
Under an appropriate scaling condition 
$S(\delta)\rightarrow\infty$ on the deviation parameter $\delta$, a testing function $\phi$ is said to be consistent if
$$\mathbb{P}_0\phi + \sup_{\mathbb{P}_1\in
  \mathcal{P}(\delta)}\mathbb{P}_1(1-\phi)
\longrightarrow 0$$
as $S(\delta)\rightarrow\infty$.
It is said to be minimax optimal in case
\begin{equation*}
\limsup_{S(\delta)\to\infty} \frac{\mathbb{P}_0\phi +
  \sup_{\mathbb{P}_1\in
    \mathcal{P}(\delta)}\mathbb{P}_1(1-\phi)}{\mathcal{R}(\delta)} <
\infty.
\end{equation*}

The testing procedure we consider is the chi-squared statistic in
\eqref{eq:test-alpha}. Its Type-I error converges to $\alpha$, as
shown by Corollary \ref{cor:chi}. We slightly modify the test so that
it has a Type-I error that converges to zero, defining
\begin{equation}
\phi_n=\mathbb{I}\{T^2>C_n\},\label{eq:test-def}
\end{equation}
where $C_n$ is a sequence that diverges to infinity arbitrarily slowly.
\begin{theorem}\label{thm:type-1}
Assume an Erd\H{o}s-R\'{e}nyi model with $p$ such that $p=o(1)$ and
$n^2p\rightarrow 1$. Then 
$$\mathbb{P}_0\phi_n\rightarrow 0.$$
\end{theorem}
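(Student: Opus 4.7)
The plan is to deduce Theorem \ref{thm:type-1} as an immediate consequence of Corollary \ref{cor:chi}. Under the stated scaling, the hypotheses of that corollary are satisfied, so under the Erd\H{o}s-R\'{e}nyi null we have the weak convergence $T^2 \leadsto \chi_2^2$. The test $\phi_n = \mathbb{I}\{T^2>C_n\}$ differs from the fixed-level test $\phi_\alpha$ of \eqref{eq:test-alpha} only in using a threshold $C_n$ that diverges arbitrarily slowly. The intuition is simple: a fixed cutoff $C_\alpha$ yields an asymptotic Type-I error equal to $\mathbb{P}(\chi_2^2 > C_\alpha) = \alpha$, so slowly inflating the cutoff past every fixed level should drive the Type-I error below $\alpha$ for every $\alpha>0$, hence to zero.

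To make this precise I would use a standard Portmanteau-type argument. Fix an arbitrary $M>0$. Since $C_n \to \infty$, there exists $N=N(M)$ such that $C_n \ge M$ for all $n \ge N$; for such $n$,
\begin{equation*}
\mathbb{P}_0 \phi_n \;=\; \mathbb{P}_0(T^2 > C_n) \;\le\; \mathbb{P}_0(T^2 > M).
\end{equation*}
Passing to $\limsup$ in $n$ and invoking the weak convergence $T^2 \leadsto \chi_2^2$ (via the Portmanteau theorem at the continuity point $M$ of the chi-squared distribution function) gives
\begin{equation*}
\limsup_{n\to\infty} \mathbb{P}_0 \phi_n \;\le\; \mathbb{P}(\chi_2^2 > M).
\end{equation*}
Since $M>0$ was arbitrary and $\mathbb{P}(\chi_2^2 > M) \to 0$ as $M\to\infty$, the desired conclusion $\mathbb{P}_0 \phi_n \to 0$ follows.

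There is no genuine obstacle, since all of the probabilistic work has already been done in proving the CLT of Theorem \ref{thm:vanilla} and its chi-squared consequence in Corollary \ref{cor:chi}; the remaining argument is purely deterministic. The only point that requires care is to verify that the scaling assumption of Theorem \ref{thm:type-1} is strong enough to ensure that the hypotheses of Corollary \ref{cor:chi} hold, so that the weak convergence $T^2 \leadsto \chi_2^2$ is indeed available under $\mathbb{P}_0$; this is a routine comparison of the rate conditions on $p$ and $np$ with the quantities appearing in the normalization \eqref{eq:chi^2}.
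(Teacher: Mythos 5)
There is a genuine gap, and it is exactly at the point you flag as ``routine'': the hypotheses of Corollary \ref{cor:chi} are \emph{not} satisfied under the assumptions of Theorem \ref{thm:type-1}. Corollary \ref{cor:chi} (via Theorem \ref{thm:vanilla}) requires $np\rightarrow\infty$, whereas Theorem \ref{thm:type-1} only assumes $n^2p\rightarrow\infty$ (the ``$n^2p\rightarrow 1$'' in the statement is evidently a typo, since the paper's proof explicitly invokes the condition $n^2p\rightarrow\infty$). This is strictly weaker: for example $p=n^{-3/2}$ gives $n^2p\rightarrow\infty$ but $np\rightarrow 0$, so the CLT and hence the weak convergence $T^2\leadsto\chi_2^2$ are simply not available in the regime the theorem is meant to cover. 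The paper even remarks immediately after the theorem that ``consistency in terms of Type-I error is a weaker requirement than the asymptotic distribution property in Corollary \ref{cor:chi}'' --- the whole point of stating Theorem \ref{thm:type-1} separately is that it holds for sparser graphs where the distributional limit fails. Your Portmanteau step is fine in itself, but it only yields the theorem under the extra assumption $np\rightarrow\infty$, which defeats the purpose.

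The paper's actual argument avoids the CLT entirely: it shows the weaker statement $T^2=O_P(1)$, which together with $C_n\rightarrow\infty$ immediately gives $\mathbb{P}_0(T^2>C_n)\rightarrow 0$. Concretely, using the representations \eqref{eq:T1-rep} and \eqref{eq:T2-rep} of $T_2,T_3$ in terms of the centered sums $R_2,R_3$ and $(\hat{p}-p)$, together with the second-moment identities \eqref{eq:cross-m-Y} and \eqref{eq:cross-m-W}, one gets $\Var(R_3)\asymp p^3/n^3$, $\Var(R_2)\asymp p^2/n^3$ and $\mathbb{E}(\hat{p}-p)^2\asymp p/n^2$, whence $n^3T_3^2/p^3=O_P(1)$ and $n^3T_2^2/p^2=O_P(1)$ by Chebyshev; the condition $n^2p\rightarrow\infty$ is what makes $\hat{p}/p\rightarrow 1$ in probability so that the empirical normalizers in \eqref{eq:chi^2} behave like their population counterparts. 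To repair your proof you would either need to restrict to $np\rightarrow\infty$ (weakening the theorem) or replace the appeal to Corollary \ref{cor:chi} with this direct variance bound.
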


Note that consistency in terms of Type-I error is a weaker requirement
than the asymptotic distribution property in Corollary \ref{cor:chi}. 
We now study the power under various alternative models.

\subsection{Stochastic block model alternatives}

Stochastic block models serve as popular benchmarks in network
analysis problems such as community
detection \citep{bickel2009nonparametric,rohe2011spectral,mossel2012stochastic,lei2015consistency,abbe2016exact,zhang2016minimax,gao2015achieving}
and parameter
estimation \citep{bickel2011method,wolfe2013nonparametric,gao2015rate,borgs2015private,gao2016optimal,castillo2017}.
Hypothesis testing in the setting of stochastic block models have been
considered
by \cite{lei2016goodness,wang2015likelihood,bickel2016hypothesis}. However,
there is no procedure in the literature that can distinguish between a
stochastic block model and an Erd\H{o}s-R\'{e}nyi model under the
minimal signal-to-noise ratio requirement, with the only exception given by a forthcoming work by \cite{banerjee2017}. We tackle this problem 
using subgraph statistics.

We consider $A_{ij}\sim\text{Bernoulli}(p_{ij})$ independently for all
$1\leq i<j\leq n$ under the alternative distribution. As a special case,
the stochastic block model with two
communities is
$$\mathcal{S}(\pi,a,b)=\left\{\{p_{ij}\}_{1\leq i<j\leq n}: p_{ij}=a\mathbb{I}_{z(i)=z(j)}+b\mathbb{I}_{z(i)\neq z(j)}\text{ for some }a,b\in[0,1], z\in\mathcal{Z}_{n,2}(\pi)\right\},$$
where
$$\mathcal{Z}_{n,2}(\pi)=\left\{z\in\{1,2\}^n: n\pi\leq|\{i\in[n]:z(i)=1\}|\leq n(1-\pi)\right\}.$$
In other words,  the
parameters $a$ and $b$ are within-cluster and between-cluster
connectivity probabilities, and the number $\pi$ controls the size of the
two communities.

The following theorem shows that the power of the proposed
test \eqref{eq:test-def} converges to one under an appropriate
condition on the signal-to-noise ratio.

\begin{theorem}\label{thm:SBM}
Assume $n^2(a+b)\rightarrow\infty$, $a+b=O(n^{-1/2})$ and $\pi\in(0,1/2)$ is a constant that does not depend on $n$. Then, whenever
\begin{equation}
\frac{n(a-b)^2}{a+b}\longrightarrow\infty, \label{eq:SNR}
\end{equation}
it follows that
$$\inf_{\mathbb{P}_1\in \mathcal{S}(\pi,a,b)}\mathbb{P}_1\phi\longrightarrow 1.$$
\end{theorem}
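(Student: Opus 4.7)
The aim is to show that under any $\mathbb{P}_1 \in \mathcal{S}(\pi,a,b)$ satisfying \eqref{eq:SNR}, the chi-squared statistic $T^2$ diverges in $\mathbb{P}_1$-probability; since $C_n \to \infty$ arbitrarily slowly, this forces $\mathbb{P}_1\phi_n \to 1$. Throughout, condition on a label vector $z \in \mathcal{Z}_{n,2}(\pi)$ with community-$1$ fraction $t \in [\pi,1-\pi]$, and set $s = t(1-t)$, $c = a-b$, and $\bar p = (1-2s)a + 2sb$, the average edge probability under $z$.

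The first step computes the alternative means. Grouping the triples that enter $\hat F_2$ and $\hat F_3$ by their community coloring (all-same or two-plus-one), one gets $\mathbb{E}_1\hat F_3 = (1-3s)a^3 + 3s\,ab^2$ with an analogous expression for $\hat F_2$. Substituting $a = \bar p + 2sc$ and $b = \bar p - (1-2s)c$ and expanding about $\bar p$, the linear-in-$c$ pieces cancel against the plug-in terms, leaving
\begin{align*}
\mathbb{E}_1 T_3 &= -3s(1-4s)\,\bar p\,c^2 - 2s^2(3-8s)\,c^3 + O(\bar p^2/n^2),\\
\mathbb{E}_1 T_2 &= -3s(1-4s)(1-3\bar p)\,c^2 + 6s^2(3-8s)\,c^3 + O(\bar p/n^2),
\end{align*}
with remainders coming from the variance corrections to $\mathbb{E}_1\hat p^2$ and $\mathbb{E}_1\hat p^3$, negligible under $a+b = O(n^{-1/2})$.

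The second step verifies that under \eqref{eq:SNR}, at least one of the normalized signals $\binom{n}{3}^{1/2}|\mathbb{E}_1 T_j|/\sigma_j$ diverges uniformly in $z$. Because $s \in [\pi(1-\pi), 1/4]$ and $3-8s \geq 1$, the cubic coefficient satisfies $2s^2(3-8s) \geq 2\pi^2(1-\pi)^2 > 0$, so $\mathbb{E}_1 T_3$ always carries a non-trivial $c^3$ contribution. With $\Sigma_{33} \asymp \bar p^3$ and $\Sigma_{22} \asymp \bar p^2$ from \eqref{eq:cov-def}, a short case analysis on $(s, c/\bar p)$ gives
\[
\binom{n}{3}\left(\frac{(\mathbb{E}_1 T_2)^2}{\Sigma_{22}} + \frac{(\mathbb{E}_1 T_3)^2}{\Sigma_{33}}\right) \;\gtrsim\; \min\!\left\{\left(\frac{nc^2}{\bar p}\right)^{3},\; n\cdot\left(\frac{nc^2}{\bar p}\right)^{2}\right\} \longrightarrow \infty,
\]
with implicit constants depending only on $\pi$. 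Here the first bound uses $T_3$ (applicable when $s = 1/4$ or $|c| \gtrsim \bar p$), while the second uses $T_2$ (applicable when the two terms in $\mathbb{E}_1 T_3$ cancel, which requires $c < 0$, $s$ away from $1/4$, and $|c| \asymp \bar p$; the $\mathbb{E}_1 T_2$ terms never cancel for $c < 0$). One then controls variance via the Hoeffding decomposition of the incomplete U-statistics $\hat F_2, \hat F_3$ under the SBM, which combined with $a+b = O(n^{-1/2})$ gives $\Var_1(T_j) \lesssim \Sigma_{jj}/\binom{n}{3}$ for $j \in \{2,3\}$. Together with the consistency $\hat\sigma_j^2 \to \Sigma_{jj}$ (from $\hat p \to \bar p$ in probability) and Chebyshev's inequality applied to whichever of $T_2, T_3$ carries the dominant mean, this yields $T^2 \to \infty$ in $\mathbb{P}_1$-probability, hence $\mathbb{P}_1(T^2 > C_n) \to 1$.

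The main obstacle is the variance control in the third step: unlike under the null, the summands defining $\hat F_2, \hat F_3$ have laws that depend on the community coloring of the triple, so the Hoeffding projections must be evaluated block by block, and the sparsity hypothesis $a+b = O(n^{-1/2})$ is precisely what keeps the higher-order variance contributions of smaller order than the leading null variance. A secondary subtlety is the uniform-in-$z$ handling of the balanced case $s = 1/4$: there the $c^2$ signal vanishes in both $T_2$ and $T_3$, forcing reliance on the $c^3$ piece in $T_3$, which is why the combined chi-squared form of $T^2$ is essential rather than either coordinate alone.
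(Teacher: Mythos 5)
Your proposal is correct, and its core is the same as the paper's: identify the population signal in the triangle relation under the alternative, show it is of order $(a+b)(a-b)^2+(a-b)^3$ uniformly over admissible labels, and dominate the stochastic fluctuations of $\hat{p}^3$ and $\hat{F}_3$ by second-moment bounds (your ``Hoeffding decomposition'' step is exactly the paper's Lemma \ref{lem:var-p-q}, and your expansion of $\mathbb{E}_1T_3$ agrees term by term with the paper's formula for $\delta(a,b)=\bar{q}(a,b)-\bar{p}(a,b)^3$, which the paper derives via the symmetrization $\delta(a,b)\pm\delta(b,a)$ rather than by expanding in powers of $c=a-b$ around $\bar{p}$). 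The one genuine difference is that the paper lower-bounds $T^2$ by its $T_3$ coordinate alone and never touches $T_2$: this suffices because for $a\geq b$ (the convention signalled by the remark ``$a-b\geq 0$'' following the theorem) the two terms of $\delta(a,b)$ have the same sign, the $c^3$ coefficient is bounded below by $\pi(1-\pi)/2$, and cubing \eqref{eq:SNR} gives divergence. Your extra case analysis with $T_2$ buys coverage of the disassortative case $a<b$, where the $c^2$ and $c^3$ contributions to $\mathbb{E}_1T_3$ can indeed cancel and a one-coordinate bound would fail; the price is the additional computation of $\mathbb{E}_1T_2$ and a case split whose boundary you describe slightly imprecisely (cancellation in $\mathbb{E}_1T_3$ can also occur with $s\rightarrow 1/4$ and $|c|\asymp(1-4s)\bar{p}=o(\bar{p})$, though your $T_2$ bound still diverges there). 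Both routes are sound; the paper's is shorter, yours is robust to the sign of $a-b$.
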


Note that for a stochastic block model in $\mathcal{S}(\pi,a,b)$ with
a constant $\pi\in(0,1/2)$ that does not depend on $n$, it reduces to
an Erd\H{o}s-R\'{e}nyi model when $a-b=0$. Therefore, the quantity
$a-b \geq 0$ measures the deviation of a stochastic block model from an Erd\H{o}s-R\'{e}nyi model.
Theorem \ref{thm:SBM} says that for a sparse stochastic block model
that satisfies $a+b= O(1/\sqrt{n})$, the power of the proposed
test \eqref{eq:test-def} is asymptotically one as long as the
signal-to-noise ratio condition \eqref{eq:SNR} is satisfied.
Theorem \ref{thm:type-1} then implies that the proposed test \eqref{eq:test-def}
is consistent under the condition \eqref{eq:SNR}.
In fact, this scaling is also necessary for a consistent test.

\begin{theorem}\label{thm:SBM-lower}
Assume $a+b=o(1)$, $\pi\in(0,1/2)$ is a constant that does not depend on $n$ and
$$\frac{n(a-b)^2}{2(a+b)}<1-c,$$
for some arbitrarily small constant $c\in(0,1)$.
Then, for any testing procedure $\phi$ with Type-I error that
converges to zero 
under the Erd\H{o}s-R\'{e}nyi model with $p=(a+b)/2$, its power satisfies
$$\inf_{\mathbb{P}_1\in \mathcal{S}(\pi,a,b)}\mathbb{P}_1\phi\longrightarrow 0.$$
\end{theorem}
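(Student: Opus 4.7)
The plan is to use the classical Le Cam second-moment method, combined with a Cauchy--Schwarz contiguity argument: a bounded chi-squared divergence together with vanishing Type-I error forces the average power over a Bayesian alternative to vanish, which in turn bounds the infimum in the statement. I place a prior $\mu$ on $\mathcal{S}(\pi,a,b)$ by sampling labels $z(1),\ldots,z(n)$ i.i.d.\ uniformly from $\{1,2\}$; since $\pi<1/2$, a Chernoff bound gives $\mu(\{z\in\mathcal{Z}_{n,2}(\pi)\})=1-o(1)$, so restricting $\mu$ to $\mathcal{Z}_{n,2}(\pi)$ changes the mixture $\bar{\mathbb{P}}:=\mathbb{E}_{z\sim\mu}\mathbb{P}_z$ by a negligible amount. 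Using $\phi\in\{0,1\}$, Cauchy--Schwarz applied to $\mathbb{E}_{\bar{\mathbb{P}}}\phi=\mathbb{E}_{\mathbb{P}_0}[\phi\cdot d\bar{\mathbb{P}}/d\mathbb{P}_0]$ yields
$$
\inf_{\mathbb{P}_1\in\mathcal{S}(\pi,a,b)}\mathbb{P}_1\phi \;\leq\; \mathbb{E}_{\bar{\mathbb{P}}}\phi \;\leq\; \sqrt{\mathbb{P}_0\phi}\cdot\sqrt{\chi^2(\bar{\mathbb{P}},\mathbb{P}_0)+1},
$$
so since $\mathbb{P}_0\phi\to 0$ by hypothesis, it suffices to show $\chi^2(\bar{\mathbb{P}},\mathbb{P}_0)+1$ is bounded by a constant depending only on $c$.

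The product structure of Bernoulli measures gives
$$
\chi^2(\bar{\mathbb{P}},\mathbb{P}_0)+1 \;=\; \mathbb{E}_{z,z'\sim\mu}\prod_{i<j}\!\left[1+\frac{(p_{ij}^z-p)(p_{ij}^{z'}-p)}{p(1-p)}\right].
$$
With the spin encoding $s_i^z:=2\mathbb{I}\{z(i)=1\}-1\in\{\pm 1\}$ one has $p_{ij}^z-p=\frac{a-b}{2}s_i^z s_j^z$, and the key simplification is that $\tau_i:=s_i^z s_i^{z'}$ is an i.i.d.\ Rademacher sequence under the product prior. Writing $\gamma:=(a-b)^2/(4p(1-p))$ and $S:=\sum_{i=1}^n\tau_i$, the identity $\sum_{i<j}\tau_i\tau_j=(S^2-n)/2$ together with $1+x\leq e^x$ bounds the product by $e^{-n\gamma/2}\mathbb{E}_\tau e^{\gamma S^2/2}$. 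A Hubbard--Stratonovich linearization $e^{\gamma S^2/2}=\mathbb{E}_Z e^{\sqrt{\gamma}ZS}$ for $Z\sim N(0,1)$, followed by $\cosh(u)\leq e^{u^2/2}$, gives $\mathbb{E}_\tau e^{\gamma S^2/2}=\mathbb{E}_Z\cosh^n(\sqrt{\gamma}Z)\leq(1-n\gamma)^{-1/2}$ provided $n\gamma<1$.

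To finish, the assumptions $n(a-b)^2/(2(a+b))<1-c$ and $a+b=o(1)$ with $p=(a+b)/2$ give $p(1-p)=\frac{a+b}{2}(1+o(1))$ and hence $n\gamma=\frac{n(a-b)^2}{2(a+b)}(1+o(1))\leq 1-c/2$ for all $n$ large enough, so $\chi^2(\bar{\mathbb{P}},\mathbb{P}_0)+1\leq(c/2)^{-1/2}$. Substituting into the first display yields $\mathbb{E}_{\bar{\mathbb{P}}}\phi\leq\sqrt{\mathbb{P}_0\phi}\cdot(c/2)^{-1/4}\to 0$, which is the claim. I expect the principal technical obstacle to be the quadratic MGF step: the bound $(1-n\gamma)^{-1/2}$ blows up as $n\gamma\uparrow 1$, so matching the sharp threshold in the hypothesis requires exploiting the slack $c$ together with careful tracking of the $(1+o(1))$ correction coming from $1-p\to 1$, with essentially no room to spare in the inequalities $1+x\leq e^x$ and $\cosh(u)\leq e^{u^2/2}$.
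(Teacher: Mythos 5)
Your proposal is correct and follows essentially the same route as the paper: a uniform prior on the labels, Cauchy--Schwarz against the Erd\H{o}s--R\'{e}nyi null, a bounded chi-squared divergence of the label-mixture (the paper's Proposition \ref{prop:affinity}, whose computation reduces to exactly your $\mathbb{E}_\tau e^{\gamma S^2/2}$ with $S=n\rho$ and $\gamma=t$), and a total-variation correction to pass from the unrestricted mixture to labels in $\mathcal{Z}_{n,2}(\pi)$. The only difference is that where the paper invokes Lemma 5.5 of \cite{mossel2012stochastic} to bound $\mathbb{E}\exp(n^2t\rho^2/2)$, you prove that bound directly via the Gaussian linearization $e^{\gamma S^2/2}=\mathbb{E}_Z e^{\sqrt{\gamma}ZS}$ and $\cosh(u)\le e^{u^2/2}$, which makes the argument self-contained.
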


This lower bound is essentially a result in \cite{mossel2012stochastic}. We slightly modify their proof so that it is also applicable in our setting and holds for a wider range of model parameters. It is interesting to note that the simple test based on relations between subgraph frequencies can distinguish between an Erd\H{o}s-R\'{e}nyi model  and a stochastic block model under the optimal scaling condition. \cite {banerjee2016contiguity} and a forthcoming work by \cite{banerjee2017} justifies this approach by showing that the signed triangle frequency, which is asymptotically equivalent to $T_3$, is the leading term  in the likelihood ratio statistic

Note that Theorem~\ref{thm:SBM} requires a sparsity condition $a+b=O(\flac{1}{\sqrt{n}})$;
but this condition is weak. For a dense network such that $a+b = \Omega(\flac{1}{\sqrt{n}})$, the conclusion of Theorem \ref{thm:SBM} still holds if \eqref{eq:SNR} is replaced by the condition
$$
\frac{n^2(a-b)^6}{(a+b)^5}\longrightarrow\infty.
$$
We note that in this dense regime,
one can accurately estimate the community labels of a stochastic block
model, with high probability. In fact, efficient algorithms in the
literature for perfect community detection only require the sparsity
level $a+b = \Omega(\flac{\log
n}{n})$ \citep{mossel2014consistency,hajek2016achieving,abbe2016exact,gao2015achieving}. Therefore,
in this dense regime, the task essentially becomes a simple
vs.~simple hypothesis testing problem, because of the known (easily recoverable) community label, and a likelihood ratio test directly solves the problem.

For a stochastic block model with more than two communities, our
proposed test still exhibits good power when the sizes of the
communities are roughly equal. Define the space
$$\mathcal{S}(k,a,b)=\left\{\{p_{ij}\}_{1\leq i<j\leq n}: p_{ij}=a\mathbb{I}_{z(i)=z(j)}+b\mathbb{I}_{z(i)\neq z(j)}\text{ for some }a,b\in[0,1], z\in\mathcal{Z}_{n,k}\right\},$$
where
$$\mathcal{Z}_{n,k}=\left\{z\in[k]^n: \frac{n}{k}-1\leq|\{i\in[n]:z(i)=j\}|\leq \frac{n}{k}+1\text{ for all }j\in[k]\right\}.$$
Our next result shows that under an appropriate signal-to-noise ratio
condition, the proposed test \eqref{eq:test-def} again has asymptotic power one.

\begin{theorem}\label{thm:SBM-k}
Assume $n^2(a+b)\rightarrow\infty$ and $a+b=O(n^{-1/2})$. Then, whenever
\begin{equation}
\frac{n(a-b)^2}{k^{4/3}(a+b)}\longrightarrow\infty,\label{eq:SNR-k}
\end{equation}
we have
$$\inf_{\mathbb{P}_1\in \mathcal{S}(k,a,b)}\mathbb{P}_1\phi\longrightarrow 1.$$
\end{theorem}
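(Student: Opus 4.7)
The plan is to adapt the proof of Theorem~\ref{thm:SBM} to track the dependence on $k$, focusing on the $T_3$ coordinate of the chi-squared statistic. In the sparse regime $a+b = O(n^{-1/2})$, the denominators in \eqref{eq:chi^2} satisfy $\Var(T_3)\sim p^3/\binom{n}{3}$ and $\Var(T_2) \sim p^2/\binom{n}{3}$, so $T_3^2/\Var(T_3)$ has the larger signal-to-noise ratio and can by itself force $T^2$ to infinity; the $T_2$ contribution can only help. The proof then reduces to showing that $|\mathbb{E}T_3|/\sqrt{\Var(T_3)}\to\infty$ and invoking Chebyshev uniformly over $\mathbb{P}_1\in\mathcal{S}(k,a,b)$.

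The first step is a population-level identity. Writing the connectivity matrix as $B_{cc'} = b + (a-b)\mathbb{I}_{c=c'}$ and letting $p := (a + (k-1)b)/k$ denote the expected edge density under balanced communities, a direct expansion of $B_{c_1c_2}B_{c_2c_3}B_{c_1c_3}$ averaged over uniform labels (accurate up to $O(1/n)$ corrections from the exact community sizes in $\mathcal{Z}_{n,k}$) yields
\begin{equation*}
\binom{n}{3}^{-1}\sum_{i<j<\ell} p_{ij}p_{j\ell}p_{i\ell} - p^3 = \frac{(k-1)(a-b)^3}{k^3}.
\end{equation*}
Consequently $|\mathbb{E}T_3|$ equals $(k-1)(a-b)^3/k^3$ up to Jensen-type corrections of order $p^2/n^2$ coming from the discrepancy between $\mathbb{E}\hat{p}^3$ and $(\mathbb{E}\hat{p})^3$; these corrections are smaller than the standard deviation by a factor of $(p/n)^{1/2} = o(1)$ and are therefore negligible.

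For the variance, I would repeat the centered expansion of $\hat{p}^3 - \hat{F}_3$ in the indicators $A_{ij} - p_{ij}$ from the proof of Theorem~\ref{thm:vanilla}. The leading term, of order $p^3/\binom{n}{3}$, arises from covariances between pairs of triangles that share exactly one edge; because $p_{ij}\leq a+b = O(n^{-1/2})$ uniformly, each such covariance is $O((a+b)^3)$ and the total yields $\Var(T_3)\lesssim (a+b)^3/n^3$. Combining with the previous step,
\begin{equation*}
\frac{(\mathbb{E}T_3)^2}{\Var(T_3)} \gtrsim \frac{(k-1)^2(a-b)^6/k^6}{(a+b)^3/n^3} \asymp \left(\frac{n(a-b)^2}{k^{4/3}(a+b)}\right)^3\longrightarrow \infty
\end{equation*}
by hypothesis \eqref{eq:SNR-k}. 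Since $n^2(a+b)\to\infty$ implies that $\hat{p}$ concentrates around $p$, the plug-in denominator $\hat{p}^3(1-\hat{p})^3 + 3\hat{p}^4(1-\hat{p})^2$ stays within a constant factor of $\Var(T_3)\binom{n}{3}$, and Chebyshev then yields $T^2\to\infty$ in probability, so $\mathbb{P}_1\phi_n\to 1$ uniformly over $\mathcal{S}(k,a,b)$.

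The hard part will be the variance bookkeeping in the third paragraph: verifying that the delicate cancellation between the leading Gaussian terms in $\hat{p}^3$ and $\hat{F}_3$ (flagged in the paper as the source of the surprise in Theorem~\ref{thm:vanilla}) is preserved under the block-structured $p_{ij}$, so that $\Var(T_3)$ does not blow up to the larger scale $p^2/n^2$ that would follow from treating the two terms separately. This requires carefully matching correction terms in $\Cov(\hat{p}^3,\hat{F}_3)$ induced by pairs of triples sharing an edge; since $\max_{ij}p_{ij}\leq a = O((a+b))$ and the community sizes are balanced, each correction should remain at most a constant multiple of its Erd\H{o}s-R\'{e}nyi counterpart, but this must be verified rather than asserted.
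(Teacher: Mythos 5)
Your overall architecture coincides with the paper's: reduce to the $T_3$ coordinate, compute the population gap between the averaged triangle probability and the cube of the averaged edge probability, show that it dominates the fluctuations, and conclude via the plug-in normalization. Your key identity $\bar q^* - (\bar p^*)^3 = \frac{k-1}{k^3}(a-b)^3$ (with $\bar p^* = \frac{1}{k} a + \frac{k-1}{k}b$) is exactly the computation at the heart of the paper's proof of Theorem~\ref{thm:SBM-k}, and your final signal-to-noise calculation matches. The paper phrases the fluctuation control slightly differently --- it uses the deterministic inequality $T_3^2\geq \frac{1}{2}(\bar{p}^3-\bar{q})^2-2(\hat{p}^3-\bar{p}^3)^2-2(\hat{q}-\bar{q})^2$ together with Lemma~\ref{lem:var-p-q}, rather than a mean/variance/Chebyshev argument for $T_3$ itself --- but these are the same idea.

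The one place your sketch does not hold together as written is the variance bookkeeping that you yourself flag as the hard part. You attribute the leading order $p^3/{n\choose 3}$ to covariances between pairs of triangles sharing exactly one edge, each of size $O((a+b)^3)$; but there are order $n^4$ such pairs, so that accounting would give $(a+b)^3/n^2$, not $(a+b)^3/n^3$ (in fact each such covariance is $O((a+b)^5)$, and the $(a+b)^3/n^3$ contribution comes from the diagonal terms). The paper's Lemma~\ref{lem:var-p-q} does this cleanly by centering each $A_{ij}$ at $p_{ij}$ via the decomposition \eqref{eq:critical}, which splits $\hat q - \bar q$ into mutually uncorrelated linear, quadratic and cubic pieces with variances $O(\rho^5/n^2)$, $O(\rho^4/n^3)$ and $O(\rho^3/n^3)$ respectively; under $a+b=O(n^{-1/2})$ all three are $O(\rho^3/n^3)$. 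This also dissolves your worry about preserving the cancellation between the leading Gaussian terms of $\hat p^3$ and $\hat F_3$: the power proof never uses that cancellation. It bounds $|\hat p^3 - \bar p^3|$ and $|\hat q - \bar q|$ separately, and the un-cancelled fluctuation of $\hat p^3$, of order $\rho^5/n^2$, is already $O(\rho^3/n^3)$ precisely because of the sparsity assumption $a+b=O(n^{-1/2})$ --- which is why that assumption appears in the theorem and why the condition must be strengthened in the dense regime.
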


Condition \eqref{eq:SNR-k} has an interesting dependence on the number
of communities $k$,
which reduces to \eqref{eq:SNR} when $k=2$. We remark that similar
conditions naturally arise in the context of community detection. For
example, in order to achieve weak consistency for community
detection, \cite{chin2015stochastic} require
$\flac{n(a-b)^2}{k^{2}(a+b)}\rightarrow\infty$
and \cite{gao2015achieving} require that
$\flac{n(a-b)^2}{k^{3}(a+b)}\rightarrow\infty$, among others in the
literature.  
Compared to these conditions, \eqref{eq:SNR-k} is the
weakest. However, 
testing and community detection are different network analysis
problems. It is thus an important problem, which we leave to future
investigation, whether or not the curious factor of $k^{4/3}$ in \eqref{eq:SNR-k} is optimal.

Unlike Theorem \ref{thm:SBM}, Theorem \ref{thm:SBM-k} requires the
communities to have equal size. However, our numerical study in
Section \ref{sec:simulation} shows that the proposed test still has
good power when the community sizes are different, suggesting that the condition might be weakened.

\subsection{Configuration models}

A configuration model is designed to model heterogeneity among the
node degrees; see, for example, \cite{van2009random}. It is a
special case of the degree-corrected stochastic block model
\citep{dasgupta2004spectral,karrer2011stochastic} with only one
community.  Under this model, every edge is independently sampled from
a Bernoulli distribution with mean $p_{ij}=\theta_i\theta_j$. A node with
a relatively large parameter $\theta$ tends to have more edges in the
network. Define the parameter space
$$\mathcal{C}(\rho,\delta)=\Bigl\{(\theta_i\theta_j)_{1\leq i<j\leq n}: \|\theta\|_{\infty}^2\leq\rho, V(\theta)\geq \delta\Bigr\},$$
where
$$V(\theta)=\left(\frac{1}{n}\sum_{i=1}^n\theta_i^2\right)^3-\left(\frac{1}{n}\sum_{i=1}^n\theta_i\right)^6.$$
The parameter $\rho$ indicates the sparsity of the network,
and the function $V(\cdot)$ characterizes the deviation of the
configuration model from Erd\H{o}s-R\'{e}nyi,
as shown by the following proposition.
\begin{proposition}\label{prop:delta-configuration}
For any $\theta\in\mathbb{R}^n$, $V(\theta)\geq 0$, and $V(\theta)=0$ if and only if $\theta$ is a constant vector.
\end{proposition}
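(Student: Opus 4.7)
The plan is to reduce the proposition to a one-variable monotonicity observation about the cube function together with the Cauchy--Schwarz (equivalently, Jensen's) inequality. Let $M=\frac{1}{n}\sum_{i=1}^n\theta_i^2$ and $m=\frac{1}{n}\sum_{i=1}^n\theta_i$, so that $V(\theta)=M^3-m^6=M^3-(m^2)^3$. Because $x\mapsto x^3$ is strictly increasing on $\mathbb{R}$, the non-negativity of $V(\theta)$ and its equality case are each equivalent to the corresponding statement about $M-m^2$.

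First I would apply Cauchy--Schwarz to the vectors $\theta=(\theta_1,\dots,\theta_n)$ and $(1,\dots,1)$, which yields
\[
\Bigl(\sum_{i=1}^n\theta_i\Bigr)^2\le n\sum_{i=1}^n\theta_i^2,
\]
equivalently $m^2\le M$, with equality if and only if $\theta$ is proportional to $(1,\dots,1)$, i.e.\ $\theta$ is a constant vector. Since $M\ge 0$ automatically, we have $M\ge m^2\ge 0$, so cubing preserves the inequality: $M^3\ge(m^2)^3$, which gives $V(\theta)\ge 0$.

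For the equality characterization, I would factor the difference of cubes,
\[
V(\theta)=M^3-(m^2)^3=(M-m^2)\bigl(M^2+Mm^2+m^4\bigr),
\]
and observe that the second factor is a sum of non-negative terms. If $\theta$ is a constant vector then $M=m^2$ and $V(\theta)=0$ directly. Conversely, if $V(\theta)=0$, then either $M=m^2$, in which case Cauchy--Schwarz equality forces $\theta$ to be constant, or else the second factor vanishes, which requires $M=0$ and $m=0$; but $M=0$ means $\theta\equiv 0$, which is again a constant vector.

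There is no real obstacle here: the only content is the equality case of Cauchy--Schwarz, and the cube/difference-of-cubes bookkeeping is routine. The one thing worth being careful about is that the statement allows $\theta\in\mathbb{R}^n$ rather than $\theta\in\mathbb{R}^n_{\ge 0}$, which is why I flag $M\ge 0$ explicitly so that monotonicity of $x\mapsto x^3$ on the relevant range can be used to transfer the inequality from $M\ge m^2$ to $M^3\ge (m^2)^3$ without sign issues.
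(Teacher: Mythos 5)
Your proof is correct and follows essentially the same route as the paper's, which simply notes that $\frac{1}{n}\sum_i\theta_i^2\geq\bigl(\frac{1}{n}\sum_i\theta_i\bigr)^2$ and takes cubes. Your version is just a more careful write-up of the same idea, spelling out the equality case of Cauchy--Schwarz and the sign considerations needed to cube the inequality.
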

Thus, $\delta$ is a parameter that can be interpreted as measuring
the deviation from Erd\H{o}s-R\'{e}nyi. The next theorem establishes a level of
$\delta$ that ensures asymptotically strong power.

\begin{theorem}\label{thm:configuration}
Assume $n^2\rho\rightarrow\infty$. Then, whenever
\begin{equation}
\frac{\delta^2}{\flac{\rho^3}{n^3}+\flac{\rho^5}{n^2}}\longrightarrow\infty,\label{eq:SNR-configuration}
\end{equation}
we have
$$\inf_{\mathbb{P}_1\in \mathcal{C}(\rho,\delta)}\mathbb{P}_1\phi\longrightarrow 1.$$
\end{theorem}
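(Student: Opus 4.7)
The plan is to show that the $T_3$ component of the chi-squared statistic alone forces $T^2$ to diverge under any distribution in $\mathcal{C}(\rho,\delta)$, by using $V(\theta)$ as the leading bias of $T_3$. Write $\mu_k=\frac{1}{n}\sum_{i=1}^n\theta_i^k$ so that $V(\theta)=\mu_2^3-\mu_1^6$. Under $p_{ij}=\theta_i\theta_j$, direct expansions give $\mathbb{E}\hat{F}_3=\mu_2^3+O(\rho^3/n)$ (collecting the lower-order symmetric functions of $\theta_i^2$), $(\mathbb{E}\hat{p})^3=\mu_1^6+O(\rho^3/n)$, and $\mathbb{E}\hat{p}^3-(\mathbb{E}\hat{p})^3=3\mathbb{E}\hat{p}\cdot\text{Var}(\hat{p})+\cdots=O(\rho^2/n^2)$. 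Since the SNR hypothesis implies $\delta\gg\rho^{5/2}/n$, and since $\rho\leq 1$ gives $\rho^3/n\leq\rho^{5/2}/n$ and $\rho^2/n^2\leq\rho^{3/2}/n^{3/2}$, all these corrections are $o(\delta)$. Hence $\mathbb{E}T_3=-V(\theta)+o(\delta)$, and $|\mathbb{E}T_3|\geq(1-o(1))\delta$ uniformly over the class.

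The core technical step is the variance bound $\text{Var}(T_3)\lesssim \rho^3/n^3+\rho^5/n^2$, and this will be the main obstacle. Naive separate bounds on $\text{Var}(\hat{p}^3)$ and $\text{Var}(\hat{F}_3)$ are each $O(\rho^3/n^2)$, which is too weak, so the essential point is a cancellation inside $T_3=\hat{p}^3-\hat{F}_3$. Computing the linear (Hoeffding) projections onto single edges, one finds that each edge $\{a,b\}$ is assigned coefficient $6\theta_a\theta_b\mu_2/n^2$ from $\hat{F}_3$ (using that $ab$ lies in $n-2$ triangles) and $6\mu_1^4/n^2$ from $\hat{p}^3$; these cancel exactly in the Erd\H{o}s-R\'enyi limit $\theta_i\equiv \text{const}$ and otherwise produce a residual whose variance is $O(\rho^5/n^2)$, since $|\theta_a\theta_b\mu_2-\mu_1^4|\leq O(\rho^2)$ and $\sum_{a<b}p_{ab}=O(n^2\rho)$. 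The remaining Hoeffding levels add the claimed contributions: pairs of triangles sharing one edge ($O(n^4)$ terms of magnitude $O(\rho^5)$, normalized by $\binom{n}{3}^{-2}$, gives $O(\rho^5/n^2)$), and the fully non-degenerate triangle diagonal ($O(n^3)$ terms of magnitude $O(\rho^3)$ gives $O(\rho^3/n^3)$). By Chebyshev's inequality and the SNR hypothesis, $|T_3|\geq\tfrac12\delta$ on an event of probability $1-o(1)$, uniformly over the class.

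Finally, the chi-squared normalization is controlled by a concentration bound for $\hat{p}$: since $\mathbb{E}\hat{p}\leq \mu_2\leq\rho$ and $\text{Var}(\hat{p})\lesssim\rho/n^2$, one has $\hat{p}\leq C\rho$ with high probability, hence $\hat{p}^3(1-\hat{p})^3+3\hat{p}^4(1-\hat{p})^2\leq C'\rho^3$. Combining with the lower bound on $|T_3|$,
$$T^2\geq\binom{n}{3}\,\frac{T_3^2}{\hat{p}^3(1-\hat{p})^3+3\hat{p}^4(1-\hat{p})^2}\gtrsim\frac{n^3\delta^2}{\rho^3}\longrightarrow\infty,$$
since the SNR condition implies in particular $n^3\delta^2/\rho^3\to\infty$. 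Because $C_n$ diverges arbitrarily slowly, $\phi_n=\mathbb{I}\{T^2>C_n\}\to 1$ in probability uniformly over $\mathcal{C}(\rho,\delta)$, which gives $\inf_{\mathbb{P}_1\in\mathcal{C}(\rho,\delta)}\mathbb{P}_1\phi_n\to 1$.
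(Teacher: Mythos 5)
Your proof is correct and, in substance, follows the same route as the paper's: isolate the population signal $\bar{q}-\bar{p}^3=V(\theta)+O(\rho^3/n)$ with $V(\theta)\geq\delta$, show the stochastic fluctuation of $T_3$ is $O_P\bigl(\sqrt{\rho^5/n^2+\rho^3/n^3}\bigr)$, apply Chebyshev, and bound the $\chi^2$ normalization by $O_P(\rho^3)$ using $n^2\rho\to\infty$. The one point to push back on is your claim that a cancellation between the linear projections of $\hat{p}^3$ and $\hat{F}_3$ is \emph{essential} because the ``naive separate bounds'' are each $O(\rho^3/n^2)$. That premise is not right: the paper's Lemma \ref{lem:var-p-q} controls $\hat{p}^3-\bar{p}^3$ and $\hat{F}_3-\bar{q}$ \emph{separately}, and each already has the claimed order. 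Indeed, the linear part of $\hat{F}_3$ alone assigns edge $\{a,b\}$ the coefficient $\frac{1}{{n\choose 3}}\sum_k p_{ak}p_{bk}=O(\rho^2/n^2)$, so its variance is $O(n^2\cdot\rho^4n^{-4}\cdot\rho)=O(\rho^5/n^2)$ with no help from $\hat{p}^3$; likewise $\Var(\hat{p}^3)\approx 9\bar{p}^4\Var(\hat{p})=O(\rho^5/n^2)$, and the only correlated triangle pairs are those sharing an edge, contributing $O(n^4\rho^5/n^6)$. The cancellation you describe is exact only when $\theta$ is constant, i.e.\ precisely when there is no signal, so it buys nothing; fortunately your residual computation yields the same correct order $O(\rho^5/n^2+\rho^3/n^3)$, so the argument goes through. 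Everything else --- the $o(\delta)$ bookkeeping of the bias terms using $\rho\leq 1$, the Chebyshev step, and the reduction of the denominator to $O(\rho^3)$ --- matches the paper, which organizes the fluctuation control through $T_3^2\geq\frac{1}{2}(\bar{p}^3-\bar{q})^2-2(\hat{p}^3-\bar{p}^3)^2-2(\hat{F}_3-\bar{q})^2$ together with Lemma \ref{lem:var-p-q}.
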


The signal-to-noise ratio condition \eqref{eq:SNR-configuration} has
two regimes. In the sparse regime where $\rho=O(1/\sqrt{n})$, the
scaling condition becomes $S_\rho(\delta) = \flac{n\delta^{2/3}}{\rho}\rightarrow\infty$. In
the dense regime where $\rho=\Omega(1/\sqrt{n})$, the condition becomes
$S_\rho(\delta) = \flac{n\delta}{\rho^{5/2}}\rightarrow\infty$. Since the denominator
$\flac{\rho^3}{n^3}+\flac{\rho^5}{n^2}$
in \eqref{eq:SNR-configuration} is the variance of the empirical
triangle frequency, the condition \eqref{eq:SNR-configuration} cannot
be improved with the proposed testing function \eqref{eq:test-def}.

\subsection{Low-rank latent variable models}

In a latent variable network model \citep{hoff2002latent}, each
network node is associated with an $r$-dimensional latent vector
$\xi_i=(\xi_{i1},...,\xi_{ir})$, and the adjacency matrix is sampled
from a graphon model
$A_{ij} \,|\,(\xi_i,\xi_j)\sim\text{Bernoulli}(f(\xi_i,\xi_j))$,
independently for all $1\leq i<j\leq n$. In this section, we consider
the case where the graphon $f$ admits a low-rank decomposition
$f(\xi_i,\xi_j)=\sum_{l=1}^rg_l(\xi_{il})g_l(\xi_{jl})$ 
in terms of $r$ feature functions
$g_1(\cdot),...,g_r(\cdot)$. Each network node is modeled with the $r$
features according to its latent vector
$\xi_i=(\xi_{i1},...,\xi_{ir})$, and connectivity between two
network nodes is determined by the inner product in this feature space.
For mathematical convenience, we let the latent variables
$\{\xi_{il}\}_{(i,l)\in[n]\times[r]}$ be specified as independent
$\text{Uniform}(0,1)$ random variables.
Define the graphon space
$$\mathcal{F}(\rho,\delta)=\left\{f(x,y)=\sum_{l=1}^rg_l(x_l)g_l(y_l): \sum_{l=1}^r\|g_l\|_{\infty}^2\leq\rho, V(g)\geq \delta\right\},$$
where
$$V(g)=\Tr\left[\left(\mathbb{E}g(\xi)g(\xi)^T\right)^3\right]-\|\mathbb{E}g(\xi)\|^6.$$
Here again, $\rho$ is the sparsity parameter, and $V(g)$ characterizes
the variability of $g$, according to the following proposition.
\begin{proposition}\label{prop:delta-latent}
For any $g$, $V(g)\geq 0$. Moreover, $V(g)=0$ if and only if $g$ is a constant function for each of the $r$ components.
\end{proposition}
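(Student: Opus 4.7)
The plan is to reduce $M := \mathbb{E}[g(\xi) g(\xi)^T]$ to a rank-one plus diagonal matrix, exploiting the independence of the coordinates of $\xi$. Write $\mu_l := \mathbb{E} g_l(\xi_l)$ and $v_l := \Var(g_l(\xi_l))$. For $l \neq l'$, since $\xi_l$ and $\xi_{l'}$ are independent, the off-diagonal entries factor as $M_{l,l'} = \mu_l \mu_{l'}$, while $M_{l,l} = \mu_l^2 + v_l$. Hence
$$M = \mu\mu^T + D, \qquad \mu = (\mu_1,\dots,\mu_r)^T, \qquad D = \diag(v_1,\dots,v_r),$$
and $\|\mathbb{E} g(\xi)\|^6 = (\mu^T\mu)^3 = \Tr[(\mu\mu^T)^3]$ because $\mu\mu^T$ is rank one with eigenvalue $\|\mu\|^2$.

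Next I would expand $\Tr[M^3]$ trilinearly and collapse the eight resulting terms using the cyclic property of the trace to get
$$\Tr[M^3] = \Tr[(\mu\mu^T)^3] + 3\Tr[(\mu\mu^T)^2 D] + 3\Tr[\mu\mu^T D^2] + \Tr[D^3].$$
Since $(\mu\mu^T)^2 = \|\mu\|^2 \mu\mu^T$, this simplifies to
$$V(g) = \Tr[M^3] - \|\mu\|^6 = \sum_{l=1}^{r} v_l^3 + 3\sum_{l=1}^{r} v_l^2 \mu_l^2 + 3\|\mu\|^2 \sum_{l=1}^{r} v_l \mu_l^2,$$
which is a sum of three manifestly nonnegative quantities. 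This immediately gives $V(g) \geq 0$.

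For the equality case, vanishing of the first term $\sum_l v_l^3 = 0$ alone already forces $v_l = 0$ for every $l$, and by the definition of variance this means $g_l(\xi_l) = \mu_l$ almost surely, i.e., each component $g_l$ is almost everywhere a constant function. Conversely, if each $g_l$ is constant then $D = 0$, so $M = \mu\mu^T$ and $\Tr[M^3] = \|\mu\|^6$, giving $V(g)=0$. I do not expect any real obstacle; the only thing to be careful about is the bookkeeping of the trilinear expansion, after which the decomposition into three nonnegative pieces makes both the inequality and its equality characterization transparent.
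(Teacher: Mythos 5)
Your proof is correct, and it takes a cleaner algebraic route than the paper's. You exploit the independence of the coordinates of $\xi$ to write the second-moment matrix as $M=\mathbb{E}g(\xi)g(\xi)^T=\mu\mu^T+D$ with $D=\diag(v_1,\dots,v_r)$, and then the cyclic trace expansion of $(\mu\mu^T+D)^3$ gives the exact identity $V(g)=\sum_l v_l^3+3\sum_l \mu_l^2 v_l^2+3\|\mu\|^2\sum_l \mu_l^2 v_l$, from which both the nonnegativity and the equality characterization ($v_l=0$ for all $l$, i.e.\ each $g_l$ almost surely constant) are immediate. The paper instead expands $\Tr\left[\left(\mathbb{E}g(\xi)g(\xi)^T\right)^3\right]$ and $\|\mathbb{E}g(\xi)\|^6$ as triple sums over $(a,b,c)\in[r]^3$ and compares the two sums term by term, with a case analysis on which of $a,b,c$ coincide; each case reduces to $\mathbb{E}X^2\geq(\mathbb{E}X)^2$, and equality forces $\Var(g_a(\xi_{a1}))=0$ for every $a$. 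The two arguments rest on the same underlying fact (independence of the coordinates, so that only the diagonal of $M$ carries variance), but yours yields an explicit closed-form expression for $V(g)$ as a sum of three manifestly nonnegative pieces, which makes the equality case transparent without any case bookkeeping, while the paper's entrywise comparison avoids the matrix decomposition at the cost of enumerating index patterns. Your trilinear expansion and the simplification via $(\mu\mu^T)^2=\|\mu\|^2\mu\mu^T$ check out, so there is no gap.
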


Thus, $\delta$ is again a parameter that is interpreted as a deviation from Erd\H{o}s-R\'{e}nyi. 

\begin{theorem}\label{thm:latent}
Assume $n^2\rho\rightarrow\infty$. Then, whenever
\begin{equation}
\frac{\delta^2}{\flac{\rho^3}{n^3}+\flac{\rho^6}{n}}\longrightarrow\infty,\label{eq:SNR-f}
\end{equation}
we have
$$\inf_{\mathbb{P}_1\in \mathcal{F}(\rho,\delta)}\mathbb{P}_1\phi\longrightarrow 1.$$
\end{theorem}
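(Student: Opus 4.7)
The plan is to mirror the proof strategy used for Theorem~\ref{thm:configuration}: under any $\mathbb{P}_1 \in \mathcal{F}(\rho,\delta)$ I will show that $T_3$ concentrates around a nonzero mean of magnitude $V(g) \geq \delta$, and conclude that the $T_3$ component of $T^2$ diverges. Dropping the nonnegative $T_2$ summand and using $\hat p^3(1-\hat p)^3 + 3\hat p^4(1-\hat p)^2 \leq 4\hat p^3$ gives the lower bound $T^2 \geq \binom{n}{3} T_3^2/(4\hat p^3)$. Because $\mathbb{E}_1\hat p = \|\mathbb{E} g(\xi)\|^2 \leq \rho$ and $\Var_1(\hat p) = O(\rho/n^2 + \rho^2/n)$, the assumption $n^2\rho \to \infty$ together with Chebyshev's inequality yields $\hat p \leq 2\rho$ with $\mathbb{P}_1$-probability tending to one. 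It therefore suffices to establish
$$|\mathbb{E}_1 T_3| \;\geq\; (1-o(1))\,V(g) \qquad\text{and}\qquad \Var_1(T_3) \;=\; o(V(g)^2),$$
since then Chebyshev gives $T_3^2 \geq \delta^2/2$ with high probability, and the consequence $\delta^2 n^3/\rho^3 \to \infty$ of (\ref{eq:SNR-f}) produces $T^2 \geq \binom{n}{3}\delta^2/(32\rho^3) \to \infty$, so that $\mathbb{P}_1\phi_n \to 1$ uniformly over $\mathcal{F}(\rho,\delta)$.

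For the mean, I write $g(\xi) = (g_1(\xi_1),\ldots,g_r(\xi_r))^T$ so that $f(\xi_i,\xi_j) = g(\xi_i)^T g(\xi_j)$. Independence of the latent features across nodes yields
$$\mathbb{E}_1 \hat F_3 = \Tr\bigl[(\mathbb{E} g(\xi)g(\xi)^T)^3\bigr], \qquad (\mathbb{E}_1 \hat p)^3 = \|\mathbb{E} g(\xi)\|^6,$$
so $(\mathbb{E}_1 \hat p)^3 - \mathbb{E}_1 \hat F_3 = -V(g)$ exactly. The residual bias $\mathbb{E}_1 \hat p^3 - (\mathbb{E}_1 \hat p)^3 = 3(\mathbb{E}_1\hat p)\Var_1(\hat p) + \mathbb{E}_1(\hat p - \mathbb{E}_1 \hat p)^3$ is $O(\rho^2/n^2 + \rho^3/n)$, which under (\ref{eq:SNR-f}) is dominated by $\sqrt{\rho^3/n^3 + \rho^6/n}$ and hence by $\delta \leq V(g)$.

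For the variance I will decompose via the law of total variance, conditioning on $\xi = (\xi_1,\ldots,\xi_n)$. Conditional on $\xi$, the $A_{ij}$'s are independent Bernoullis, so the Erd\H{o}s-R\'{e}nyi-style computations behind Theorem~\ref{thm:vanilla} give $\mathbb{E}_1\Var(\hat p^3 - \hat F_3 \mid \xi) = O(\rho^3/n^3 + \rho^5/n^2)$. The essential new contribution is $\Var(\mathbb{E}[\hat F_3 \mid \xi])$: setting $M_i = g(\xi_i)g(\xi_i)^T$ and $\Lambda = \mathbb{E} M_i$, we have $\mathbb{E}[\hat F_3 \mid \xi] = \binom{n}{3}^{-1}\sum_{i<j<k}\Tr(M_i M_j M_k)$, and by independence the covariance of two such triangles sharing exactly one vertex $i$ reduces to $\Var_{\xi_i}(\Tr(M_i \Lambda^2))$. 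The PSD operator-norm bounds $\|M_i\|_{\rm op}\leq \rho$ and $\|\Lambda\|_{\rm op}\leq \Tr(\Lambda) \leq \rho$ give $|\Tr(M_i \Lambda^2)| \leq \rho^3$, so this variance is $O(\rho^6)$; combined with $O(n^5)$ such pairs, the total contribution is $O(\rho^6/n)$. Pairs of triangles sharing two or no vertices are strictly smaller, and a parallel delta-method computation gives $\Var(\mathbb{E}[\hat p^3 \mid \xi]) = O(\rho^6/n)$. Since $\rho^5/n^2 \leq \rho^3/n^3 + \rho^6/n$ in both the sparse and dense regimes, we conclude $\Var_1(T_3) = O(\rho^3/n^3 + \rho^6/n)$, which under (\ref{eq:SNR-f}) is $o(V(g)^2)$.

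The principal obstacle is the bookkeeping needed to verify that the various cross-covariances, especially $\Cov_1(\hat p^3,\hat F_3)$ and the mixed-moment terms that arise when several latent variables are shared between a triangle and the edges appearing in the expansion of $\hat p^3$, all respect the target bound $\rho^3/n^3 + \rho^6/n$. Each such configuration produces a trace expression of the form $\Tr(M_{i_1}\cdots M_{i_s}\Lambda^t)$ that must be bounded by repeated application of the spectral inequality $\Tr(AB)\leq \|A\|_{\rm op}\Tr(B)$ for PSD matrices, and the cancellations between the $\hat p^3$ and $\hat F_3$ expansions must be tracked carefully through each case. Once this is complete the argument concludes exactly as sketched in the first paragraph.
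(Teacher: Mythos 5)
Your proposal is correct and follows essentially the same route as the paper: reduce to showing the $T_3$ component of $T^2$ diverges, identify the population gap $(\mathbb{E}_1\hat p)^3-\mathbb{E}_1\hat F_3=-V(g)$ with $|V(g)|\geq\delta$, and control the fluctuations via the law of total variance conditional on the latent vectors, with the decisive new term $\Var(\mathbb{E}[\hat F_3\mid\xi])=O(\rho^6/n)$ matching the paper's Lemma on $|\hat q-\tilde q|$. The only difference is cosmetic (you bound $\Var(\mathbb{E}[\hat F_3\mid\xi])$ by pairwise covariances of shared-vertex triangles, while the paper expands $\Tr(U^3)-\Tr(V^3)$ for $U=n^{-1}\sum_i g(\xi_i)g(\xi_i)^T$), and both yield the same order.
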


Theorem \ref{thm:latent} shows that under the signal-to-noise ratio
condition \eqref{eq:SNR-f}, the asymptotic power of the proposed
test \eqref{eq:test-def} is one. Compared
with \eqref{eq:SNR-configuration}, the condition \eqref{eq:SNR-f}
involves an extra term $\flac{\rho^6}{n}$ in the denominator. This is
because in the graphon model
$A_{ij}\given (\xi_i,\xi_j)\sim\text{Bernoulli}(f(\xi_i,\xi_j))$, the
variance of the empirical triangle frequency is of order
$\flac{\rho^3}{n^3}+\flac{\rho^6}{n}$ instead of
$\flac{\rho^3}{n^3}+\flac{\rho^5}{n^2}$, as can be seen from the
variance decomposition 
$$\Var(\hat{F}_3)=\mathbb{E}\Var(\hat{F}_3\given\xi)+\Var(\mathbb{E}(\hat{F}_3\given\xi)).$$
The extra term $\flac{\rho^6}{n}$ is a result of the variability
$\Var(\mathbb{E}(\hat{F}_3\given\xi))$ of the latent vectors.
Under this latent variable model we again see two scaling regimes. The
sparse regime is now $\rho = O(\flac{1}{n^{2/3}})$ with the scaling is $S_\rho(\delta) =
n \delta^{2/3}/\rho \rightarrow\infty$, and the dense regime is $\rho
= \Omega(\flac{1}{n^{2/3}})$ with the scaling $S_\rho(\delta) =
n\delta^2 / \rho^6 \rightarrow \infty$.

\section{Subgraph Sampling Schemes}\label{sec:sample}

Computation of the testing procedure, though straightforward, requires summation over ${n\choose 3}$ triples. This may be computationally expensive for extremely large networks. This section proposes two sampling schemes that allow us to only go over a small subset of all the ${n\choose 3}$ triples while computing the testing statistic. Theoretical analysis suggests an interesting trade-off between computation and network sparsity.

\subsection{Sampling network nodes}

The first sampling method is conducted on the set of nodes. Let $\mathcal{M}\subset[n]$ be a set of uniform sampling without replacement of size $|\mathcal{M}|=m$. For each node $i\in\mathcal{M}$, we go over all pairs in $\{1\leq j<k\leq n:j,k\neq i\}$. For the triple $(i,j,k)$, it contributes to counts of edges, triangles and V-shapes by
$$\frac{A_{ij}+A_{ik}+A_{jk}}{3},\quad A_{ij}A_{ik}A_{jk},$$
and
$$A_{ij}A_{ik}(1-A_{jk})+A_{ij}A_{jk}(1-A_{ik})+A_{ik}A_{jk}(1-A_{ij}).$$
Taking averages, we obtain
\begin{eqnarray}
\nonumber \hat{p}^{\mathcal{M}} &=& \frac{1}{m{n-1\choose 2}}\sum_{i\in\mathcal{M}}\sum_{\{1\leq j<k\leq n:j,k\neq i\}}\frac{A_{ij}+A_{ik}+A_{jk}}{3}, \\
\nonumber \hat{F}_3^{\mathcal{M}} &=& \frac{1}{m{n-1\choose 2}}\sum_{i\in\mathcal{M}}\sum_{\{1\leq j<k\leq n:j,k\neq i\}}A_{ij}A_{ik}A_{jk}, \\
\nonumber \hat{F}_2^{\mathcal{M}} &=& \frac{1}{m{n-1\choose 2}}\sum_{i\in\mathcal{M}}\sum_{\{1\leq j<k\leq n:j,k\neq i\}}\Big[A_{ij}A_{ik}(1-A_{jk})+A_{ij}A_{jk}(1-A_{ik})+A_{ik}A_{jk}(1-A_{ij})\Big].
\end{eqnarray}
Rather than a summation over ${n\choose 3}$ triples, the above
quantities are computed by a summation over $m{n-1\choose 2}$ triples. Moreover, computations of $\hat{F}_2^{\mathcal{M}}$ and $\hat{F}_3^{\mathcal{M}}$ can be done by just looking at the local neighboring graph of each node in $\mathcal{M}$.
This makes the node sampling scheme very useful for extremely large networks such as Facebook data. For example, Given a person $i$ in the Facebook network, the empirical frequencies of triangles and V-shapes among the sub-network of all her friends including herself are
$$\frac{1}{{n-1\choose 2}}\sum_{\{1\leq j<k\leq n:j,k\neq i\}}A_{ij}A_{ik}A_{jk},$$
and
$$\frac{1}{{n-1\choose 2}}\sum_{\{1\leq j<k\leq n:j,k\neq i\}}\Big[A_{ij}A_{ik}(1-A_{jk})+A_{ij}A_{jk}(1-A_{ik})+A_{ik}A_{jk}(1-A_{ij})\Big].$$
Then, $\hat{F}_2^{\mathcal{M}}$ and $\hat{F}_3^{\mathcal{M}}$ are just averages of the above quantities over $i\in\mathcal{M}$. Similar sampling schemes of computing other subgraph frequencies have been considered in \cite{ugander2013subgraph}. Here we provide a theoretical justification of this heuristic approach.

With the help of $\hat{p}^{\mathcal{M}}$, $\hat{F}_2^{\mathcal{M}}$ and $\hat{F}_3^{\mathcal{M}}$, we define
\begin{eqnarray*}
T_2^{\mathcal{M}} &=& 3(\hat{p}^{\mathcal{M}})^2(1-\hat{p}^{\mathcal{M}})-\hat{F}_2^{\mathcal{M}},\\
T_3^{\mathcal{M}} &=& (\hat{p}^{\mathcal{M}})^3-\hat{F}_3^{\mathcal{M}}.
\end{eqnarray*}
The next theorem derives the joint asymptotic distribution of $(T_2^{\mathcal{M}}, T_3^{\mathcal{M}})$. We require that $m=m(n)$ is a nondecreasing function of $n$ that satisfies $m(1)=1$ and $m(l)\leq l$ for all $l\in[n]$.

\begin{theorem}\label{thm:sample-node}
Assume $(1-p)^{-1}=O(1)$, $p^3mn^2\rightarrow\infty$ and $m\rightarrow\infty$, and then we have,
$$\sqrt{f(n,m)}\Sigma_p^{-1/2}\begin{pmatrix}
T_2^{\mathcal{M}}\\
T_3^{\mathcal{M}}
\end{pmatrix}\leadsto N\left(\begin{pmatrix}
0 \\
0
\end{pmatrix},\begin{pmatrix}
1 & 0 \\
0 & 1
\end{pmatrix}\right),$$
where
$$f(n,m)=\frac{m^2(n-1)^2(n-2)^2}{36{m\choose 3}+16{m\choose 2}(n-m)+4m{n-m\choose 2}},$$
and $\Sigma$ is defined in (\ref{eq:cov-def}).
\end{theorem}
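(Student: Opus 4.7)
The plan is to mimic the proof of Theorem \ref{thm:vanilla}, but accounting for the additional randomness coming from the uniformly sampled set $\mathcal{M}$, which is independent of the adjacency matrix $A$. Since every per-triple contribution to $\hat{p}^{\mathcal{M}}$, $\hat{F}_2^{\mathcal{M}}$, and $\hat{F}_3^{\mathcal{M}}$ is symmetric in the three indices $i,j,k$, each statistic can be rewritten as a weighted sum over unordered triples with weight $|T\cap\mathcal{M}|$; for instance
$$\hat{F}_3^{\mathcal{M}}=\frac{1}{m\binom{n-1}{2}}\sum_{T=\{a,b,c\}}|T\cap\mathcal{M}|\,A_{ab}A_{bc}A_{ca}.$$
Writing $A_{ij}=p+\xi_{ij}$ with centered independent $\xi_{ij}$, I would expand each of $(\hat{p}^{\mathcal{M}})^3$, $3(\hat{p}^{\mathcal{M}})^2(1-\hat{p}^{\mathcal{M}})$, $\hat{F}_3^{\mathcal{M}}$, and $\hat{F}_2^{\mathcal{M}}$ as a polynomial in the $\xi_{ij}$'s.

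The first substantive step is to check that the linear-in-$\xi$ contributions cancel inside both $T_2^{\mathcal{M}}$ and $T_3^{\mathcal{M}}$, exactly as they do in the unsampled case. This cancellation is preserved under any symmetric triple-weighting, because regardless of the weights the linear part of $\hat{F}_3^{\mathcal{M}}$ equals $3p^2$ times the linear part of $\hat{p}^{\mathcal{M}}$, which cancels with the linear part of $(\hat{p}^{\mathcal{M}})^3$; a parallel identity with coefficient $(6p-9p^2)$ handles $T_2^{\mathcal{M}}$. After cancellation, the dominant contribution to $T_3^{\mathcal{M}}$ is
$$-\frac{1}{m\binom{n-1}{2}}\sum_{T}|T\cap\mathcal{M}|\,\xi_{ij}\xi_{jk}\xi_{ki}\;+\;\text{quadratic-in-}\xi\text{ residuals},$$
with an analogous decomposition for $T_2^{\mathcal{M}}$ in which the leading $\xi\xi$ cross-terms carry an explicit $(1-3p)$ factor.

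Because distinct triangular triples $T\neq T'$ give uncorrelated products $\xi_{ij}\xi_{jk}\xi_{ki}$, conditioning on $\mathcal{M}$ and computing second moments yields
$$\Var(T_3^{\mathcal{M}}\mid\mathcal{M})\sim\frac{[p(1-p)]^3}{[m\binom{n-1}{2}]^2}\sum_{T}|T\cap\mathcal{M}|^2,$$
and the combinatorial identity
$$\sum_{T}|T\cap\mathcal{M}|^2=9\binom{m}{3}+4\binom{m}{2}(n-m)+m\binom{n-m}{2}$$
matches, up to an overall factor of $4$, the denominator of $f(n,m)$, so that $\Var(T_3^{\mathcal{M}}\mid\mathcal{M})$ has precisely the order $(\Sigma_p)_{22}/f(n,m)$. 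The analogous variance and covariance computations for $T_2^{\mathcal{M}}$ reproduce the remaining entries of $\Sigma_p$. As a sanity check, at $m=n$ one obtains $f(n,n)=\binom{n}{3}$, recovering Theorem \ref{thm:vanilla}. Since $\mathcal{M}$ is independent of $A$ and $|T\cap\mathcal{M}|^2$ concentrates when $m\to\infty$, the conditional variance agrees with the unconditional variance to leading order, and a martingale or Janson-type CLT for incomplete $U$-statistics applied to the cubic and quadratic $\xi$-sums completes the argument; the conditions $m\to\infty$ and $p^3 mn^2\to\infty$ guarantee that the effective number of summands diverges and that the triangle-level variance dominates the residual noise.

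The main obstacle will be the bookkeeping required to couple the $\mathcal{M}$-randomness with the $A$-randomness, showing that all of the quadratic and higher residual terms in the $\xi$-expansion are $o_P$ of the leading terms uniformly over typical realizations of $\mathcal{M}$, and that the asymptotic independence of $T_2^{\mathcal{M}}$ and $T_3^{\mathcal{M}}$ survives the sampling. A secondary technicality is verifying that the quadratic-$\xi$ pieces coming from $3p(\hat{p}^{\mathcal{M}}-p)^2$ and from $p\sum_T|T\cap\mathcal{M}|(\xi_{ij}\xi_{jk}+\text{perms})$ combine to the correct $3p^4(1-p)^2$ entry of $(\Sigma_p)_{22}$ with the same weighted-triple count.
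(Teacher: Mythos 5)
Your outline follows essentially the same route as the paper: center the entries, observe that the linear terms cancel so that $T_2^{\mathcal M},T_3^{\mathcal M}$ reduce to fixed linear combinations of the degenerate weighted sums (the paper's $R_2^{\mathcal M},R_3^{\mathcal M}$, with triple weights $|T\cap\mathcal M|\in\{0,1,2,3\}$) plus a remainder of order $(\hat p^{\mathcal M}-p)^2=O_P(p/(mn))$, compute the normalization via the count $9\binom{m}{3}+4\binom{m}{2}(n-m)+m\binom{n-m}{2}$ that yields $f(n,m)$, and finish with a CLT plus Slutsky. The one caveat is that these sums are completely degenerate, so standard incomplete-$U$-statistic limit theorems do not apply directly; of your two proposed closing tools only the martingale route (which is what the paper uses, revealing nodes one at a time and verifying the conditional variance and Lindeberg conditions under $m\to\infty$ and $p^3mn^2\to\infty$) actually carries the argument.
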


As a consequence, we obtain the following asymptotic distribution of the chi-squared statistic.

\begin{corollary}\label{cor:chi-node}
Assume $p=o(1)$ and $p^3mn^2\rightarrow\infty$ and $m\rightarrow\infty$, and then we have
\begin{equation}
(T^{\mathcal{M}})^2\leadsto \chi_2^2,\label{eq:chi^2-node}
\end{equation}
where $T^{\mathcal{M}}$ is defined in the same way as $T$ in (\ref{eq:chi^2}), except that $T_2,T_3,\hat{p}, {n\choose 3}$ are replaced by $T_2^{\mathcal{M}},T_3^{\mathcal{M}},\hat{p}^{\mathcal{M}}, f(n,m)$.
\end{corollary}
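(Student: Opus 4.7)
\textbf{Proof plan for Corollary \ref{cor:chi-node}.} The plan mirrors the passage from Theorem \ref{thm:vanilla} to Corollary \ref{cor:chi} in the unsampled setting, with Theorem \ref{thm:sample-node} supplying the underlying CLT. Write $V = (T_2^{\mathcal{M}}, T_3^{\mathcal{M}})^\T$, let $D_p = \diag(\sigma_2^2(p), \sigma_3^2(p))$ collect the diagonal entries of $\Sigma_p$, and let $\widehat{D}$ denote the same diagonal matrix with $p$ replaced by $\hat{p}^{\mathcal{M}}$, so that $(T^{\mathcal{M}})^2 = f(n,m)\, V^\T \widehat{D}^{-1} V$. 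Theorem \ref{thm:sample-node} together with the continuous mapping theorem yields $f(n,m)\, V^\T \Sigma_p^{-1} V \leadsto \chi_2^2$. It then suffices to show $f(n,m)\,V^\T(\widehat{D}^{-1} - \Sigma_p^{-1}) V = o_P(1)$ and invoke Slutsky's theorem.

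I would split this into two reductions. First, replace $\Sigma_p^{-1}$ by $D_p^{-1}$. Because $p = o(1)$, the off-diagonal entry $-6p^4(1-p)^2$ of $\Sigma_p$ satisfies $|c|/\sqrt{\sigma_2^2(p)\sigma_3^2(p)} = O(p^{3/2})$, echoing the asymptotic diagonalization in the remark after Theorem \ref{thm:vanilla}. A direct computation for a $2 \times 2$ matrix shows that the off-diagonal entry of $\Sigma_p^{-1}$ is of order $1/p$ and its diagonal entries deviate from those of $D_p^{-1}$ by $O(p)$ and $O(1)$ respectively. Combined with the coordinate-wise magnitudes $|T_2^{\mathcal{M}}| = O_P(p/\sqrt{f(n,m)})$ and $|T_3^{\mathcal{M}}| = O_P(p^{3/2}/\sqrt{f(n,m)})$ implied by Theorem \ref{thm:sample-node}, each contribution to $f(n,m)\,V^\T(\Sigma_p^{-1} - D_p^{-1})V$ is $O_P(p^{3/2} + p^3) = o_P(1)$. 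Second, to replace $D_p$ by $\widehat{D}$ one needs $\sigma_j^2(\hat{p}^{\mathcal{M}})/\sigma_j^2(p) \stackrel{P}{\to} 1$ for $j = 2, 3$. Since $\sigma_2^2(p) \asymp p^2$ and $\sigma_3^2(p) \asymp p^3$ under $p = o(1)$, each ratio reduces to the relative consistency $\hat{p}^{\mathcal{M}}/p \stackrel{P}{\to} 1$. I would establish this by writing $\hat{p}^{\mathcal{M}} = \sum_{u < v} c_{uv} A_{uv}$ as a linear combination of the independent Bernoulli variables, verifying $\E\hat{p}^{\mathcal{M}} = p$ from $\sum_{u<v} c_{uv} = 1$, and bounding $\Var(\hat{p}^{\mathcal{M}}) = p(1-p) \sum_{u<v} c_{uv}^2$. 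A short case analysis on whether $u, v$ lie in $\mathcal{M}$ gives $\sum_{u<v} c_{uv}^2 \lesssim 1/(mn)$, hence $\Var(\hat{p}^{\mathcal{M}}) \lesssim p/(mn)$. The algebraic identity $(pmn)^3 = (p^3 mn^2)(m^2 n)$ then shows that $p^3 mn^2 \to \infty$ together with $m \to \infty$ forces $pmn \to \infty$, so Chebyshev's inequality yields $(\hat{p}^{\mathcal{M}} - p)/p = O_P(1/\sqrt{pmn}) = o_P(1)$.

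The main obstacle is the anisotropy in the diagonal approximation: because $\sigma_2^2(p)$ and $\sigma_3^2(p)$ decay at different rates in $p$, the two coordinates of $V$ concentrate at different rates, and the error $\Sigma_p^{-1} - D_p^{-1}$ must be bounded entry by entry against the matching coordinate of $V$ rather than through a single operator-norm bound. Once this coordinate-wise bookkeeping is carried out and the relative consistency of $\hat{p}^{\mathcal{M}}$ is in hand, the conclusion follows by a routine application of Slutsky's theorem.
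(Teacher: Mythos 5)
Your proposal is correct and follows the same route the paper takes: the paper disposes of this corollary (as with Corollary \ref{cor:chi}) in one line as a direct consequence of Theorem \ref{thm:sample-node} via Slutsky's theorem, and your argument simply fills in the details of that step — the asymptotic diagonalization of $\Sigma_p$ under $p=o(1)$ and the relative consistency $\hat{p}^{\mathcal{M}}/p\stackrel{P}{\to}1$, both of which check out (your bound $\Var(\hat{p}^{\mathcal{M}})\lesssim p/(mn)$ matches the paper's own estimate in the proof of Theorem \ref{thm:sample-node}). No gaps.
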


With (\ref{cor:chi-node}), one can use $(T^{\mathcal{M}})^2$ to obtain an $\alpha$-level test and a $p$-value. Next, we study the theoretical Type-I and Type-II errors of a procedure based on this testing statistic.
Consider the testing function
\begin{equation}
\phi^{\mathcal{M}}=\mathbb{I}\left\{(T^{\mathcal{M}})^2>C_n\right\},\label{eq:test-M}
\end{equation}
where $C_n$ is a sequence that varies to infinity arbitrarily slowly.

\begin{theorem}\label{thm:error-M}
Under an Erd\H{o}s-R\'{e}nyi model with edge probability $p$ that satisfies $p=o(1)$ and $p^3mn^2\rightarrow\infty$, then $\mathbb{P}\phi^{\mathcal{M}}\rightarrow 0$. For the power under alternatives, we have the following situations:
\begin{enumerate}
\item
Under the $2$-community stochastic block model $\mathcal{S}(\pi,a,b)$, where $n(a+b)\rightarrow\infty$, $\pi\in(0,1/2)$ is a constant that does not depend on $n$. Then, when $a+b=O(n^{-2/3})$ and
\begin{equation}
\frac{(mn^2)^{1/3}(a-b)^2}{a+b}\rightarrow\infty,\label{eq:SNR-M}
\end{equation}
we have $\mathbb{P}_1\in \inf_{\mathcal{S}(\pi,a,b)}\mathbb{P}_1\phi^{\mathcal{M}}\rightarrow 1$.
The conclusion still holds when $a+b=\Omega(n^{-2/3})$ if (\ref{eq:SNR-M}) is replaced by $\frac{m(a-b)^6}{(a+b)^6}\rightarrow\infty$.
\item Under the balanced $k$-community stochastic block model $\mathcal{S}(k,a,b)$, where $n(a+b)\rightarrow\infty$. Then, when $a+b=O(n^{-2/3})$ and
\begin{equation}
\frac{(mn^2)^{1/3}(a-b)^2}{k^{4/3}(a+b)}\rightarrow\infty,\label{eq:SNR-M-k}
\end{equation}
we have $\inf_{\mathbb{P}_1\in \mathcal{S}(k,a,b)}\mathbb{P}_1\phi^{\mathcal{M}}\rightarrow 1$.
The conclusion still holds when $a+b=\Omega(n^{-2/3})$ if (\ref{eq:SNR-M-k}) is replaced by $\frac{m(a-b)^6}{k^4(a+b)^6}\rightarrow\infty$.
\item Under the configuration model $\mathcal{C}(\rho,\delta)$, where
  $n\rho\rightarrow\infty$ and
\begin{equation}
\frac{\delta^2}{\flac{\rho^3}{mn^2}+\flac{\rho^6}{m}}\rightarrow\infty,\label{eq:SNR-M-C}
\end{equation}
we have $\inf_{\mathbb{P}_1\in \mathcal{C}(\rho,\delta)}\mathbb{P}_1\phi^{\mathcal{M}}\rightarrow 1$.
\item Under the low-rank latent variable model
  $\mathcal{F}(\rho,\delta)$, where $n\rho\rightarrow\infty$, when
\begin{equation}
\frac{\delta^2}{\flac{\rho^3}{mn^2}+\flac{\rho^6}{m}}\rightarrow\infty,\label{eq:SNR-M-L}
\end{equation}
we have $\inf_{\mathbb{P}_1\in \mathcal{F}(\rho,\delta)}\mathbb{P}_1\phi^{\mathcal{M}}\rightarrow 1$.
\end{enumerate}
\end{theorem}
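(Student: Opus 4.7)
The Type-I error statement is immediate from Corollary \ref{cor:chi-node}: since $(T^{\mathcal{M}})^2 \leadsto \chi_2^2$ under the null, any slowly diverging cutoff $C_n$ forces the rejection probability to tend to zero.

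For the four power statements the strategy is uniform. Because
$$
(T^{\mathcal{M}})^2 \,\geq\, \frac{f(n,m)\,(T_3^{\mathcal{M}})^2}{(\hat p^{\mathcal{M}})^3(1-\hat p^{\mathcal{M}})^3 + 3(\hat p^{\mathcal{M}})^4(1-\hat p^{\mathcal{M}})^2},
$$
it suffices to show under each alternative that $|T_3^{\mathcal{M}}|$ exceeds its own standard deviation by a slowly diverging factor, and then invoke Chebyshev. The key observation is that $\hat p^{\mathcal{M}}$ and $\hat F_3^{\mathcal{M}}$ are unbiased for $\bar p := \mathbb{E}\hat p$ and $\mathbb{E}\hat F_3$, so a Taylor expansion of $(\hat p^{\mathcal{M}})^3$ yields
$$
\mathbb{E}_1 T_3^{\mathcal{M}} \;=\; \bar p^{\,3} - \mathbb{E}_1 \hat F_3 \;+\; O\bigl(\bar p\,\Var_1(\hat p^{\mathcal{M}})\bigr),
$$
whose leading term coincides with $\mathbb{E}_1 T_3$ already computed in the proofs of Theorems \ref{thm:SBM}, \ref{thm:SBM-k}, \ref{thm:configuration}, \ref{thm:latent}; these give lower bounds of order $(a-b)^3$ in the two-community SBM, $(a-b)^3/k^2$ in the balanced $k$-community case, and $V(\theta)$ or $V(g)$ in the configuration and latent-variable models. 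Sampling therefore does not change the signal to leading order; only the variance has to be recomputed.

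Turn to $\Var_1 T_3^{\mathcal{M}}$. Writing $\hat F_3^{\mathcal{M}} = m^{-1} \sum_{i \in \mathcal{M}} \hat\tau_i$ with $\hat\tau_i = \binom{n-1}{2}^{-1} \sum_{j<k,\,j,k\neq i} A_{ij} A_{ik} A_{jk}$, and conditioning on $\mathcal{M}$ to handle sampling randomness separately, a Hoeffding-style decomposition groups pairs of triangles $(T_1,T_2)$ contributing to $\Var_1(\hat F_3^{\mathcal{M}})$ by how many vertices they share. The combinatorial counts $\binom{m}{3}$, $\binom{m}{2}(n-m)$, and $m\binom{n-m}{2}$ of shared-vertex patterns are exactly the three terms in the denominator of $f(n,m)$, and the associated projection variances evaluate to orders $\bar p^3$, $\bar p^4$, $\bar p^5$ respectively under the SBM and configuration alternatives, with an additional contribution of order $\bar p^6/m$ in the latent-variable case arising from $\Var(\mathbb{E}(\hat F_3^{\mathcal{M}}\mid\xi))$. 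Combining these bounds with the cross-covariance against $(\hat p^{\mathcal{M}})^3$ and plugging into Chebyshev reproduces exactly the four signal-to-noise conditions \eqref{eq:SNR-M}--\eqref{eq:SNR-M-L}.

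\textbf{Main obstacle.} The bulk of the work is the node-sampled variance decomposition, where two subtleties need care. First, in the dense regimes $a+b=\Omega(n^{-2/3})$ the Taylor remainder from expanding $(\hat p^{\mathcal{M}})^3$ about $\bar p^3$ is no longer negligible against the signal of order $(a-b)^3$, which is what forces the alternative conditions $m(a-b)^6/(a+b)^6\to\infty$ in the SBM statements in place of \eqref{eq:SNR-M} and \eqref{eq:SNR-M-k}; handling this regime requires absorbing the remainder into the noise and inflating the SNR condition accordingly. Second, in the latent-variable model one must track the between-latent variance $\Var(\mathbb{E}(\hat F_3^{\mathcal{M}}\mid\xi))$ separately; because this piece is driven by the empirical second-moment matrix $\frac{1}{m}\sum_{i\in\mathcal{M}}g(\xi_i)g(\xi_i)^T$ over only the $m$ sampled nodes, it scales as $\rho^6/m$ rather than $\rho^6/n$, which is precisely the second term in the denominator of \eqref{eq:SNR-M-L} and serves as the analogous upper bound used for \eqref{eq:SNR-M-C}.
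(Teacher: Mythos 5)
Your overall strategy is the paper's: Type-I error from Corollary \ref{cor:chi-node}; for power, lower-bound $(T^{\mathcal{M}})^2$ by its $T_3^{\mathcal{M}}$ component, reuse the population signal $\bar p^3-\bar q$ already computed in the proofs of Theorems \ref{thm:SBM}, \ref{thm:SBM-k}, \ref{thm:configuration} and \ref{thm:latent}, and recompute only the fluctuations of $(\hat p^{\mathcal{M}})^3$ and $\hat F_3^{\mathcal{M}}$ under node sampling. The paper packages that last step as Lemmas \ref{lem:var-p-q-M} and \ref{lem:var-p-q-latent-M}, conditioning on $A$ and using the variance of a without-replacement sample of the node-local statistics $H_i$ and $G_i$; your direct shared-vertex decomposition is a workable alternative in principle.

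The gap is in the variance bookkeeping, and it is not cosmetic. First, the $\rho^6/m$ term in \eqref{eq:SNR-M-C} and \eqref{eq:SNR-M-L} is not specific to the latent-variable model and does not come from $\Var(\mathbb{E}(\hat F_3^{\mathcal{M}}\mid\xi))$: it arises under \emph{every} alternative from the node-sampling variance $\mathbb{E}\Var(\hat q^{\mathcal{M}}\mid A)=O(\rho^6/m+\cdots)$, because the local triangle densities $G_i$ have second moment of order $\rho^6$, and the same order enters through $3\bar p^2(\hat p^{\mathcal{M}}-\bar p)=O_P(\rho^3/\sqrt m)$. It is precisely this $O_P(\rho^3/\sqrt m)$ \emph{standard deviation} --- not the $O(\bar p\,\Var_1(\hat p^{\mathcal{M}}))=O(\rho^3/m)$ bias you invoke --- that forces the dense-regime conditions $m(a-b)^6/(a+b)^6\to\infty$ in parts 1 and 2; comparing the signal $(a-b)^3$ against the bias would yield the weaker (and insufficient) condition $m(a-b)^3/(a+b)^3\to\infty$. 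Attributing the $\rho^6/m$ term to the latent model alone also leaves the configuration-model denominator in \eqref{eq:SNR-M-C} unaccounted for. Second, the counts $\binom{m}{3}$, $\binom{m}{2}(n-m)$, $m\binom{n-m}{2}$ in the denominator of $f(n,m)$ classify \emph{triples} by how many sampled nodes they contain (hence multiplicities $3,2,1$, all weighted by the same per-triple variance); they are not a shared-vertex classification of pairs of triangles, and the associated contributions are not of orders $\bar p^3,\bar p^4,\bar p^5$ (distinct triangles sharing an edge contribute covariance of order $\rho^5$, and those sharing at most one vertex contribute zero under independent edges). With the variance accounting corrected along the lines of Lemmas \ref{lem:var-p-q-M} and \ref{lem:var-p-q-latent-M}, the rest of your argument goes through as in the paper.
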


The theorem gives performance guarantees of the testing function (\ref{eq:test-M}) constructed by sampling nodes. It covers both the null model and a list of alternative distributions. When the network is not too dense, namely, $a+b=O(n^{-2/3})$ or $\rho=O(n^{-2/3})$, the conditions (\ref{eq:SNR-M})-(\ref{eq:SNR-M-L}) correspond to (\ref{eq:SNR}), (\ref{eq:SNR-k}), (\ref{eq:SNR-configuration}) and (\ref{eq:SNR-f}). In other words, the role of $n$ is replaced by $(mn^2)^{1/3}$ in the node sampling scheme. This is because the testing statistic is constructed using $m{n-1\choose 2}\asymp mn^2$ triples instead of the original ${n\choose 3}\asymp n^3$ ones. As a consequence, we require more stringent signal-to-noise ratio conditions in Theorem \ref{thm:error-M}. Thus, one can use a smaller $m$ if a network is denser and contains more signal. Take the condition (\ref{eq:SNR-M}) for instance, the minimal order of $m$ that is required is at least  of order $\frac{(a+b)^3}{n^2(a-b)^6}$.

\subsection{Sampling network triples}

Another sampling approach is to directly sample from all ${n\choose 3}$ triples. We consider uniform sampling with replacement. Then, sampling a triple $(i,j,k)$ is equivalent to sampling the three indexes uniformly from $[n]$ without replacement, which can be done in a straightforward way.
Let $\Delta\subset\{(i,j,k):1\leq i<j<k\leq n\}$ be the set of triples that is sampled uniformly with replacement. Note that $\Delta$ is a multiset that possibly contains repeated elements. Define
\begin{eqnarray*}
\hat{p}^{\Delta} &=& \frac{1}{|\Delta|}\sum_{(i,j,k)\in\Delta}\frac{A_{ij}+A_{jk}+A_{ik}}{3}, \\
\hat{F}_2^{\Delta} &=& \frac{1}{|\Delta|}\sum_{(i,j,k)\in\Delta}\Big[A_{ij}A_{ik}(1-A_{jk})+A_{ij}A_{jk}(1-A_{ik})+A_{ik}A_{jk}(1-A_{ij})\Big]\\
\hat{F}_3^{\Delta} &=& \frac{1}{|\Delta|}\sum_{(i,j,k)\in\Delta}A_{ij}A_{ik}A_{jk}.
\end{eqnarray*}
Then, the two subgraph equations we need are
\begin{eqnarray*}
T_2^{\Delta} &=& 3(\hat{p}^{\Delta})^2(1-\hat{p}^{\Delta})-\hat{F}_2^{\Delta},\\
T_3^{\Delta} &=& (\hat{p}^{\Delta})^3 - \hat{F}_3^{\Delta}.
\end{eqnarray*}
The joint asymptotic distribution of $(T_2^{\Delta},T_3^{\Delta})$ is derived in the following theorem.

\begin{theorem}\label{thm:sample-triple}
Assume $(1-p)^{-1}=O(1)$, $|\Delta|p^3\rightarrow\infty$ and $np\rightarrow\infty$, and then we have
$$\sqrt{\frac{{n\choose 3}|\Delta|}{{n\choose 3}+|\Delta|}}\Sigma_p^{-1/2}\begin{pmatrix}
T_2^{\Delta}\\
T_3^{\Delta}
\end{pmatrix}\leadsto N\left(\begin{pmatrix}
0 \\
0
\end{pmatrix},\begin{pmatrix}
1 & 0 \\
0 & 1
\end{pmatrix}\right),$$
where $\Sigma$ is defined in (\ref{eq:cov-def}).
\end{theorem}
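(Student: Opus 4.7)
The plan is to decompose $T_k^{\Delta}=T_k+R_k$ with $R_k:=T_k^{\Delta}-T_k$, separating the ``graph randomness'' in $T_k$ (already handled by Theorem~\ref{thm:vanilla}) from the ``sampling randomness'' in $R_k$. Since $\Delta$ is drawn independently of $A$, the first step is to Taylor-expand the nonlinear functionals of $\hat p^{\Delta}$ about $\hat p$:
\begin{align*}
3(\hat p^{\Delta})^2(1-\hat p^{\Delta}) &= 3\hat p^2(1-\hat p) + (6\hat p - 9\hat p^2)(\hat p^{\Delta}-\hat p) + O\bigl((\hat p^{\Delta}-\hat p)^2\bigr),\\
(\hat p^{\Delta})^3 &= \hat p^3 + 3\hat p^2(\hat p^{\Delta}-\hat p) + O\bigl((\hat p^{\Delta}-\hat p)^2\bigr).
\end{align*}
This rewrites $R=(R_2,R_3)$, up to remainder, as the average $|\Delta|^{-1}\sum_{t\in\Delta}U^{(t)}$, where conditional on $A$ the vectors $U^{(t)}=(U_2^{(t)},U_3^{(t)})$ are i.i.d., bounded, and centered with
\[
U_2^{(t)}=(6\hat p-9\hat p^2)(X^{(t)}-\hat p)-(Y_2^{(t)}-\hat F_2),\qquad U_3^{(t)}=3\hat p^2(X^{(t)}-\hat p)-(Y_3^{(t)}-\hat F_3),
\]
with $X^{(t)}$ the edge average and $Y_k^{(t)}$ the corresponding subgraph indicator of the $t$-th sampled triple.

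The second step is to apply the conditional multivariate CLT to $\sqrt{|\Delta|}R$ given $A$. A direct moment calculation under the Erd\H{o}s-R\'enyi null---using $\Var(X)=p(1-p)/3$, $\Cov(X,Y_3)=p^3(1-p)$, $\Cov(X,Y_2)=p^2(1-p)(2-3p)$, the identity $Y_2Y_3\equiv 0$ (a V-shape and a triangle are mutually exclusive on three vertices), and $\Var(Y_k)=\mathbb{E}[Y_k](1-\mathbb{E}[Y_k])$---shows that the per-triple covariance $\Cov(U^{(t)}\mid A)$ converges in probability to precisely $\Sigma_p$ in~(\ref{eq:cov-def}). For instance, the $(3,3)$ entry collapses to $p^3(1-p)^2(1+2p)=p^3(1-p)^3+3p^4(1-p)^2$, the $(2,2)$ entry to $3p^2(1-p)^2(1-3p+6p^2)=3p^2(1-p)^2(1-3p)^2+9p^3(1-p)^3$, and the off-diagonal to $-6p^4(1-p)^2$, all exactly matching~(\ref{eq:cov-def}). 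The Taylor remainders are of order $p^2/|\Delta|$ for $T_3^\Delta$ and $p/|\Delta|$ for $T_2^\Delta$, which after the final scaling by $\sqrt{|\Delta|}$ are dominated by the main terms of orders $\sqrt{p^3}$ and $\sqrt{p^2}$, respectively, under the assumption $|\Delta|p^3\to\infty$.

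The final step combines $T$ and $R$ via asymptotic independence. Because $T$ is $\sigma(A)$-measurable while the conditional law of $\sqrt{|\Delta|}R\mid A$ converges to $N(0,\Sigma_p)$ with a \emph{deterministic} limit covariance, conditioning on $A$ and factoring the characteristic function yields
\[
\mathbb{E}\Bigl[e^{\,i\sqrt{{n\choose 3}}\langle s,T\rangle+i\sqrt{|\Delta|}\langle t,R\rangle}\Bigr]=\mathbb{E}\Bigl[e^{\,i\sqrt{{n\choose 3}}\langle s,T\rangle}\,\mathbb{E}\bigl[e^{\,i\sqrt{|\Delta|}\langle t,R\rangle}\bigm|A\bigr]\Bigr]\longrightarrow e^{-s^{\T}\Sigma_p s/2}\cdot e^{-t^{\T}\Sigma_p t/2},
\]
after Theorem~\ref{thm:vanilla} is invoked for $\sqrt{{n\choose 3}}T$ and bounded convergence is used to remove the residual conditional expectation. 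Hence $T^{\Delta}=T+R$ has asymptotic covariance $\Sigma_p\bigl(1/{n\choose 3}+1/|\Delta|\bigr)=\Sigma_p\cdot({n\choose 3}+|\Delta|)/({n\choose 3}|\Delta|)$, and rescaling by $\sqrt{{n\choose 3}|\Delta|/({n\choose 3}+|\Delta|)}$ gives the claimed standard multivariate normal limit. The main technical hurdle is making the conditional CLT uniform in $A$---that is, showing $\Cov(U^{(t)}\mid A)$ concentrates at $\Sigma_p$ and that a Lindeberg condition holds in probability---but both reduce to routine variance bounds using $np\to\infty$ and $\hat p\to p$.
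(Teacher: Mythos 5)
Your proposal is correct and follows essentially the same route as the paper's proof: both decompose the statistic into graph randomness plus sampling randomness, prove a conditional CLT for the sampling part given $A$ (with conditional covariance concentrating at $\Sigma_p$ under $np\to\infty$ and a Lyapunov/Lindeberg check under $|\Delta|p^3\to\infty$), and combine it with the unconditional limit for the full-graph statistic via asymptotic independence to obtain the harmonic-mean effective sample size $\tbinom{n}{3}|\Delta|/(\tbinom{n}{3}+|\Delta|)$. The only cosmetic difference is that you Taylor-expand about $\hat p$ and work with the raw triple indicators, whereas the paper centers at the true $p$ and uses the exact algebraic decomposition of $T_k^{\Delta}$ in terms of $R_2^{\Delta},R_3^{\Delta}$; your covariance entries and remainder orders check out against \eqref{eq:cov-def}.
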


As a consequence, we obtain the following asymptotic distribution of the chi-squared statistic.

\begin{corollary}\label{cor:chi-triple}
Assume $p=o(1)$, $|\Delta|p^3\rightarrow\infty$ and $np\rightarrow\infty$, and then we have
\begin{equation}
(T^{\Delta})^2\leadsto \chi_2^2,\label{eq:chi^2-triple}
\end{equation}
where $T^{\Delta}$ is defined in the same way as $T$ in (\ref{eq:chi^2}), except that $T_2,T_3,\hat{p}, {n\choose 3}$ are replaced by $T_2^{\Delta},T_3^{\Delta},\hat{p}^{\Delta}, \frac{{n\choose 3}|\Delta|}{{n\choose 3}+|\Delta|}$.
\end{corollary}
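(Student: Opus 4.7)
The plan is to mirror the proof of Corollary~\ref{cor:chi}, using Theorem~\ref{thm:sample-triple} as the joint CLT input and then Slutsky plus continuous mapping to pass from the population normalization $\Sigma_p$ to its plug-in version based on $\hat p^{\Delta}$. The key structural observation is that when $p=o(1)$, the off-diagonal entry of $\Sigma_p$ is $-6p^4(1-p)^2$, while the diagonal entries are of order $p^2$ and $p^3$ respectively; therefore $\Sigma_p$ is asymptotically a diagonal matrix (its correlation is $O(p^{3/2})=o(1)$, exactly as in Proposition~\ref{prop:corr} and the remark after Theorem~\ref{thm:vanilla}). So Theorem~\ref{thm:sample-triple} says that after standardizing each coordinate by its own asymptotic standard deviation, the vector $(T_2^{\Delta},T_3^{\Delta})$ converges to a pair of independent standard normals, and the sum of squares converges to $\chi^2_2$.

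Next I would replace the population quantities in the normalizing factors by $\hat p^{\Delta}$. The critical step is to show $\hat p^{\Delta}/p\to 1$ in probability. Writing
\[
\hat p^{\Delta}-p=(\hat p^{\Delta}-\E[\hat p^{\Delta}\mid A])+(\hat p-p),
\]
the second term has variance $O(p/n^2)$ and is $o(p)$ since $np\to\infty$. For the first term, conditional on $A$, $\hat p^{\Delta}$ is an average of $|\Delta|$ i.i.d.\ bounded summands each with mean at most $\hat p$ and variance at most $\hat p$, so its conditional variance is $O(\hat p/|\Delta|)$, which is $o(p^2)$ under $|\Delta|p^3\to\infty$. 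Combining, $\hat p^{\Delta}=p(1+o_{\mathbb P}(1))$, and because $p=o(1)$ the factor $(1-3\hat p^{\Delta})^2$ tends to one as well.

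Finally, the normalizing expressions
\[
3(\hat p^{\Delta})^2(1-\hat p^{\Delta})^2(1-3\hat p^{\Delta})^2+9(\hat p^{\Delta})^3(1-\hat p^{\Delta})^3
\quad\text{and}\quad
(\hat p^{\Delta})^3(1-\hat p^{\Delta})^3+3(\hat p^{\Delta})^4(1-\hat p^{\Delta})^2
\]
are continuous functions of $\hat p^{\Delta}$ that, divided by their population analogues evaluated at $p$, converge to $1$ in probability. Applying Slutsky's theorem coordinate-wise to the standardized vector delivered by Theorem~\ref{thm:sample-triple}, and then the continuous mapping theorem to the sum of squares, yields $(T^{\Delta})^2\leadsto\chi_2^2$.

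The only step requiring any real care is the consistency of $\hat p^{\Delta}$ on the correct scale $p$ (rather than, say, just boundedness), since the normalizing factors are homogeneous of degree $2$ in $p$ for the $T_2$ component and of degree $3$ for the $T_3$ component; a naive $\hat p^{\Delta}=p+o_{\mathbb P}(1)$ would be far too weak when $p\to 0$. This is what forces the hypothesis $|\Delta|p^3\to\infty$ (matching the condition in Theorem~\ref{thm:sample-triple}), and is the main place where the sampling parameter interacts with sparsity.
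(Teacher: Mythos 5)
Your proposal is correct and follows the same route the paper takes: the paper treats this corollary (like Corollary~\ref{cor:chi}) as a direct consequence of the corresponding joint CLT, here Theorem~\ref{thm:sample-triple}, via Slutsky's theorem, using $p=o(1)$ to make $\Sigma_p$ asymptotically diagonal. Your additional care in verifying $\hat p^{\Delta}/p\to 1$ in probability at the correct scale is exactly the content the paper leaves implicit, and your accounting of where $|\Delta|p^3\to\infty$ and $np\to\infty$ enter is consistent with the paper's variance calculations for $\hat p^{\Delta}$ in the proof of Theorem~\ref{thm:sample-triple}.
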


The above asymptotic distribution results are analogous to Theorem \ref{thm:vanilla} and Corollary \ref{cor:chi}. The term ${n\choose 3}$ in Theorem \ref{thm:vanilla} and Corollary \ref{cor:chi} is replaced by $\frac{{n\choose 3}|\Delta|}{{n\choose 3}+|\Delta|}$ in Theorem \ref{thm:sample-triple} and Corollary \ref{cor:chi-triple}. Note that $\frac{{n\choose 3}|\Delta|}{{n\choose 3}+|\Delta|}$ and ${n\choose 3}$ are of the same order whenever $|\Delta|=\Omega(n^3)$. Therefore, in the following theorem that studies the testing errors, we only consider the regime $|\Delta|=o(n^3)$.

\begin{theorem}\label{thm:error-Delta}
Without loss of generality, we consider the situation $|\Delta|=o(n^3)$.
Under an Erd\H{o}s-R\'{e}nyi model with edge probability $p$ that satisfies $p=o(1)$ and $|\Delta|p^3\rightarrow\infty$, then
$\mathbb{P}\phi^{\Delta}\rightarrow 0$.
For the alternatives, we have the following situations:
\begin{enumerate}
\item
Under the $2$-community stochastic block model $\mathcal{S}(\pi,a,b)$, where $n^2(a+b)\rightarrow\infty$, $\pi\in(0,1/2)$ is a constant that does not depend on $n$. Then, when $a+b=O(n/\sqrt{|\Delta|})$ and
\begin{equation}
\frac{|\Delta|^{1/3}(a-b)^2}{a+b}\rightarrow\infty,\label{eq:SNR-Delta}
\end{equation}
we have $\inf_{\mathbb{P}_1\in \mathcal{S}(\pi,a,b)}\mathbb{P}_1\phi^{\Delta}\rightarrow 1$.
The conclusion still holds when $a+b=\Omega(n/\sqrt{|\Delta|})$ if (\ref{eq:SNR-Delta}) is replaced by $\frac{n^2(a-b)^6}{(a+b)^5}\rightarrow\infty$.
\item Under the balanced $k$-community stochastic block model $\mathcal{S}(k,a,b)$, where $n^2(a+b)\rightarrow\infty$. Then, when $a+b=O(n/\sqrt{|\Delta|})$ and
\begin{equation}
\frac{|\Delta|^{1/3}(a-b)^2}{k^{4/3}(a+b)}\rightarrow\infty,\label{eq:SNR-Delta-k}
\end{equation}
we have $\inf_{\mathbb{P}_1\in \mathcal{S}(k,a,b)}\mathbb{P}_1\phi^{\Delta}\rightarrow 1$.
The conclusion still holds when $a+b=\Omega(n/\sqrt{|\Delta|})$ if (\ref{eq:SNR-Delta-k}) is replaced by $\frac{n^2(a-b)^6}{k^4(a+b)^5}\rightarrow\infty$.
\item Under the configuration model $\mathcal{C}(\rho,\delta)$, where
  $n^2\rho\rightarrow\infty$, when
\begin{equation}
\frac{\delta^2}{\flac{\rho^3}{|\Delta|}+\flac{\rho^6}{n}}\rightarrow\infty,\label{eq:SNR-Delta-C}
\end{equation}
we have $\inf_{\mathbb{P}_1\in \mathcal{C}(\rho,\delta)}\mathbb{P}_1\phi^{\Delta}\rightarrow 1$.
\item Under the low-rank latent variable model
  $\mathcal{F}(\rho,\delta)$, where $n^2\rho\rightarrow\infty$, when
\begin{equation}
\frac{\delta^2}{\flac{\rho^3}{|\Delta|}+\flac{\rho^6}{n}}\rightarrow\infty,\label{eq:SNR-Delta-L}
\end{equation}
we have $\inf_{\mathbb{P}_1\in \mathcal{F}(\rho,\delta)}\mathbb{P}_1\phi^{\Delta}\rightarrow 1$.
\end{enumerate}
\end{theorem}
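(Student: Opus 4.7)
The null claim follows immediately from Corollary~\ref{cor:chi-triple}: the hypotheses $p=o(1)$ and $|\Delta|p^3\to\infty$ imply $np\to\infty$ (since $|\Delta|\le\binom{n}{3}$), so $(T^{\Delta})^2\leadsto\chi_2^2$ and any $C_n\to\infty$ yields $\mathbb{P}\phi^{\Delta}\to 0$. The remainder of the proof concerns the four alternatives.

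For each alternative I would follow the three-step template used in the proofs of Theorems~\ref{thm:SBM}, \ref{thm:SBM-k}, \ref{thm:configuration} and \ref{thm:latent}, adapted to the sampled setting as in the proof of Theorem~\ref{thm:error-M}. The first step is to compute the signal: since the triple sampling is unbiased conditional on the graph, the tower property gives $\mathbb{E}T_j^{\Delta}=\mathbb{E}T_j$, so the mean calculations from the unsampled case transfer directly (e.g.\ $\mathbb{E}T_3\asymp(a-b)^3$ under $\mathcal{S}(\pi,a,b)$, with analogous order-$\delta$ lower bounds for the configuration and latent models). The second step is to bound the total variance by the law of total variance,
\begin{equation*}
\Var(T_j^{\Delta}) \;=\; \Var(T_j) \,+\, \mathbb{E}\Var(T_j^{\Delta}\mid A).
\end{equation*}
The first summand is the full-sample variance already bounded in the corresponding unsampled proof. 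The second summand is a sampling variance: since $\hat{F}_3^{\Delta}$ is an average of $|\Delta|$ i.i.d.\ Bernoulli triple products, its contribution is of order $(a+b)^3/|\Delta|$ for the SBM and $\rho^3/|\Delta|$ for the configuration and latent models; the cubic term $(\hat{p}^{\Delta})^3$ yields strictly lower-order contributions after expansion around $\hat{p}^3$. The third step is Chebyshev's inequality: the numerator $(\mathbb{E}T_3)^2$ dominates $\Var(T_j^{\Delta})$ precisely under the stated conditions \eqref{eq:SNR-Delta}--\eqref{eq:SNR-Delta-L}, and the denominators in $(T^{\Delta})^2$ concentrate at the correct normalizing factors by consistency of $\hat{p}^{\Delta}$; therefore $(T^{\Delta})^2\to\infty$ in probability under each alternative, giving asymptotic power one.

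The regime split in each of \eqref{eq:SNR-Delta}--\eqref{eq:SNR-Delta-L} is determined by which term in $\Var(T_j^{\Delta})$ dominates. For the SBM, the sampling variance $(a+b)^3/|\Delta|$ dominates the intrinsic variance $\asymp(a+b)^5/n^2+(a+b)^3/n^3$ precisely in the sparse regime $a+b=O(n/\sqrt{|\Delta|})$, yielding $|\Delta|^{1/3}(a-b)^2/(a+b)\to\infty$; in the dense regime the intrinsic variance dominates and we recover the condition $n^2(a-b)^6/(a+b)^5\to\infty$ from the dense case of Theorem~\ref{thm:SBM} unchanged. The $k$-community, configuration, and latent cases proceed identically, with the relevant mean and variance inputs provided by the proofs from Section~\ref{sec:power}.

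The main obstacle is the bookkeeping for $\mathbb{E}\Var(T_j^{\Delta}\mid A)$ under each alternative: because $\hat{p}^{\Delta}$ and $\hat{F}_j^{\Delta}$ are built from the same sampled triples they are correlated, and the cubic term $(\hat{p}^{\Delta})^3-\hat{p}^3$ requires Taylor expansion and control of third- and sixth-order centered moments of $\hat{p}^{\Delta}-\hat{p}$. These reduce to moment bounds on sums of i.i.d.\ bounded random variables, but the interaction with the non-product structure of $\Expect[A_{ij}A_{jk}A_{ki}]$ under the alternatives (community assignments or latent vectors) is where the careful casework is needed in order to separate the signal $(\mathbb{E}T_j)^2$ from cross terms that might otherwise mask it.
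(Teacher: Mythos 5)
Your proposal is correct and follows essentially the same route as the paper: the null case is dispatched by Corollary~\ref{cor:chi-triple}, and each alternative is handled by comparing the deterministic signal $\bar{q}-\bar{p}^3$ (computed in the proofs of Theorems~\ref{thm:SBM}--\ref{thm:latent}) against a variance that decomposes, via conditioning on $A$, into the full-sample variance plus a sampling variance of order $\rho^3/|\Delta|$ --- exactly the content of Lemmas~\ref{lem:var-p-q-Delta} and \ref{lem:var-p-q-latent-Delta}, and your regime split at $a+b\asymp n/\sqrt{|\Delta|}$ matches the paper's. The only imprecision is the claim $\mathbb{E}T_j^{\Delta}=\mathbb{E}T_j$ (false exactly, because $(\hat{p}^{\Delta})^3$ is a nonlinear function of the sampled triples), but you correctly note that the cube contributes only lower-order terms after expansion, which is how the paper handles it.
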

Note that if $|\Delta|$ is replaced by $n^3$, then Theorem \ref{thm:error-Delta} will recover the results in Section \ref{sec:power}. Stronger signal-to-noise ratio conditions (\ref{eq:SNR-Delta})-(\ref{eq:SNR-Delta-L}) are required for the efficiency of computation. This demonstrates an interesting trade-off between computational budget and network sparsity. For a network with denser edges and stronger signals, there is no need to use the full network to achieve powerful testing results. Consider the situation where $a\asymp b\asymp a-b$ in the 2-community stochastic block model. Then, the condition (\ref{eq:SNR-Delta}) becomes $|\Delta|^{1/3}a\rightarrow\infty$. In practice, $\hat{p}^{-3}$ will be a good suggestion of the order of $|\Delta|$ that is required.


\section{Numerical Studies}\label{sec:simulation}

This section is devoted to numerical studies of the proposed testing procedures in various settings. As a first task, we check whether the theoretical asymptotic distributions derived in Theorem \ref{thm:vanilla} and Corollary \ref{cor:chi} hold in simulations. We generate networks from Erd\H{o}s-R\'{e}nyi distributions. Four combinations of the network size $n$ and the edge probability $p$ are considered. For each scenario, we compute three statistics,
$$\frac{\sqrt{{n\choose 3}}T_3}{\sqrt{p^3(1-p)^3+3p^4(1-p)^2}},\quad \frac{\sqrt{{n\choose 3}}T_2}{\sqrt{3p^2(1-p)^2(1-3p)^2+9p^3(1-p)^3}},$$
and $T^2$ defined in (\ref{eq:chi^2}). Histograms of the three statistics are computed with $1000$ independent experiments. The results are plotted in Figure \ref{fig:1}. Each setting of $(n,p)$ occupies a column. The histograms of normalized $T_3$, normalized $T_2$ and $T^2$ are plotted in the three rows. Each histogram is superimposed with a theoretical density curve in red.
\begin{figure}[tbp]
\centering
\includegraphics[width=6in]{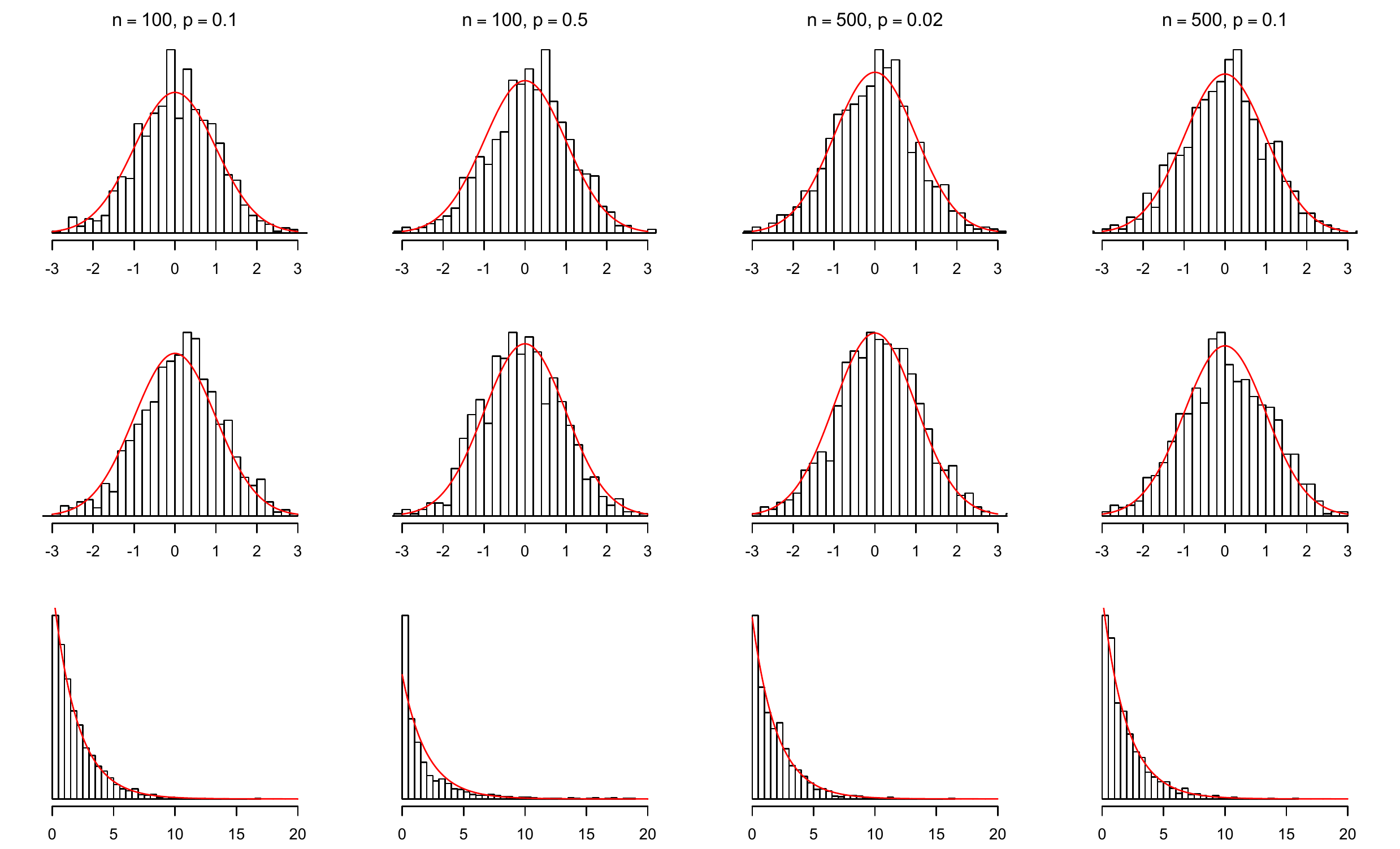}
\caption{Histograms of normalized $T_3$, normalized $T_2$ and $T^2$ for Erd\H{o}s-R\'{e}nyi graphs.}
\label{fig:1}
\end{figure}
The results show excellent matches between the theoretical asymptotic distributions and the empirical ones. The only situation that is slightly off is $T^2$ when $n=100$ and $p=0.5$ (3rd row and 2nd column in Figure \ref{fig:1}). This is because Corollary \ref{cor:chi} requires $p=o(1)$ to ensure asymptotic independence between $T_3$ and $T_2$. In this case, $p=0.5$ implies a dense network, and thus violates the assumption of Corollary \ref{cor:chi}.

Next, we study the power of the proposed chi-squared test under the 2-community stochastic block models.
The testing procedure is $\phi_{\alpha}$ defined in (\ref{eq:test-alpha}). We set $\alpha=0.05$ throughout this section.
The experiments consider SBM with size $n=100$. We study the power by varying both the signal-to-noise ratio $\frac{n(a-b)^2}{a+b}$ and the proportion of the smaller community $\gamma\in(0,1/2]$. In other words, we generate a stochastic block model with the sizes of the two communities roughly $n\gamma$ and $n(1-\gamma)$, and then connect edges between nodes according to their labels with parameters $a$ or $b$. Note that the number $\gamma$ is different from the $\pi$ in Theorem \ref{thm:SBM}, where the latter stands for the lower bound of $\gamma$.
\begin{figure}[tbp]
\centering
\includegraphics[width=5in]{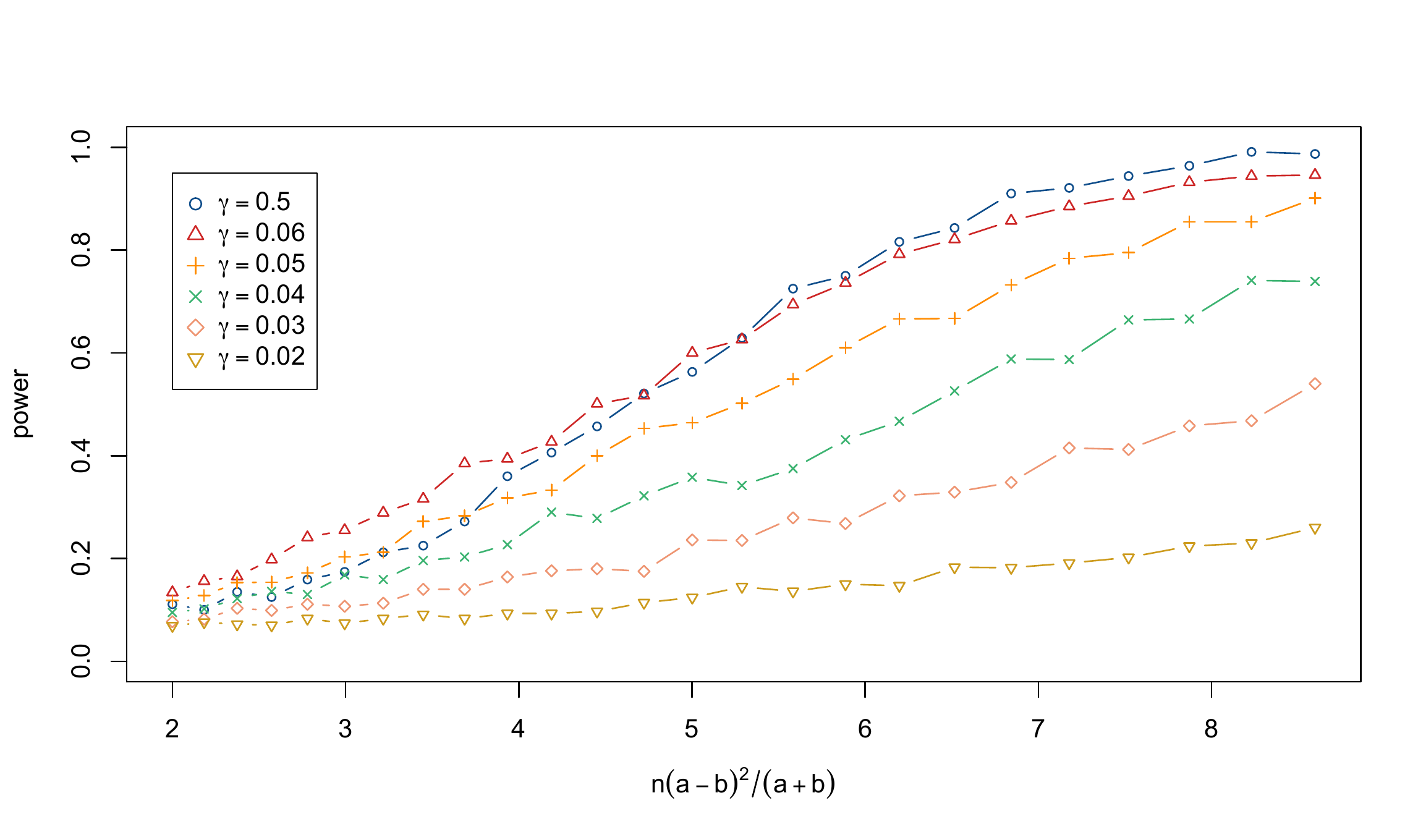}
\caption{Power of the proposed $0.05$-level test under the 2-community stochastic block models.}
\label{fig:2}
\end{figure}
The experiment results are shown in Figure \ref{fig:2}. Each curve is a power function against the signal-to-noise ratio $\frac{n(a-b)^2}{a+b}$ for a given $\gamma$. Each point on a curve is computed by averaging the results of $1000$ independent experiments. Starting from the left, we set $a=0.2+\sqrt{0.002}$ and $b=0.2\sqrt{0.002}$ so that $\frac{n(a-b)^2}{a+b}=2$, which corresponds to the theoretical information limit in Theorem \ref{thm:SBM-lower}. Then, we range $(a,b)$ in the set $\{(0.2+\sqrt{0.002}i,0.2-\sqrt{0.002}i):i\in\{1,2,...,25\}\}$. Figure \ref{fig:2} shows a power behavior that is well predicted by Theorem \ref{thm:SBM} and Theorem \ref{thm:SBM-lower}. As the number $\frac{n(a-b)^2}{a+b}$ becomes larger, the power increases to $1$. A surprising fact we learn from this simulation is the behavior of power against $\gamma$. As $\gamma$ decreases from $0.5$ to $0$, the $2$-community stochastic block model converges to an Erd\H{o}s-R\'{e}nyi model. Therefore, we expect a loss of power as $\gamma$ decreases. However, according to our experiments, the power barely decreases before $\gamma$ drops below $0.06$. This suggests that our test maintains the signal of the model for a very wide range of $\gamma$. After $\gamma$ drops below $0.06$, the power curves gradually become flat as $\gamma$ continues to decreases.

We then study the power of the proposed test under the $k$-community stochastic block models. The goal is to reveal the dependence on $k$ of the power curve.
\begin{figure}[tbp]
\centering
\includegraphics[width=5in]{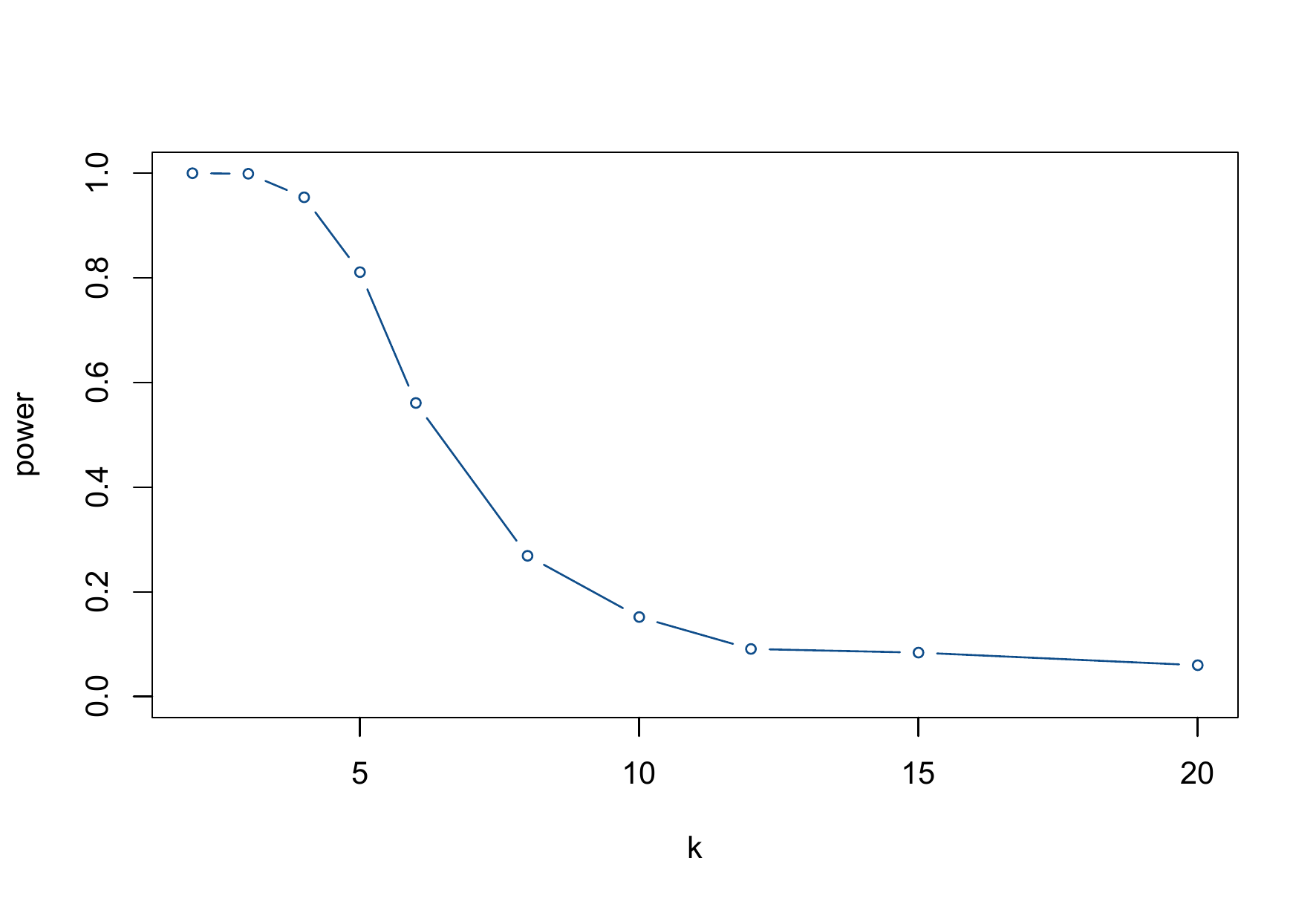}
\caption{Power of the proposed $0.05$-level test under the balanced $k$-community stochastic block models.}
\label{fig:3}
\end{figure}
Figure \ref{fig:3} shows an experiment by fixing $n=120$, $a=0.3$ and $b=0.1$. The number of communities $k$ varies in the set $\{2,3,4,5,6,8,10,12,15,20\}$. Again, the power at each point is calculated by averaging over $1000$ independent experiments. According to Figure \ref{fig:3}, the power is nearly $1$ when $k=2$ and $k=3$, and drops to a reasonable $0.8$ when $k=5$. After that, it decreases quickly to $0$. This dependence on $k$ is well predicated by Theorem \ref{thm:SBM-k}.

Besides the balanced $k$-community stochastic block models, it would be interesting to study an unbalanced SBM with $k>2$. Though this situation is not theoretically studied in the paper, numerical experiments may shed some light on the behavior of the power. We consider stochastic block models with $n=120$ and $k=3$. Among the three communities, the size of the third community is fixed as $40$. The number $\gamma$ indicates the proportion of the first community in the remaining $80$ nodes. In other words, the first community roughly has $80\gamma$ nodes and the second community roughly has $80(1-\gamma)$ nodes.
\begin{figure}[tbp]
\centering
\includegraphics[width=5in]{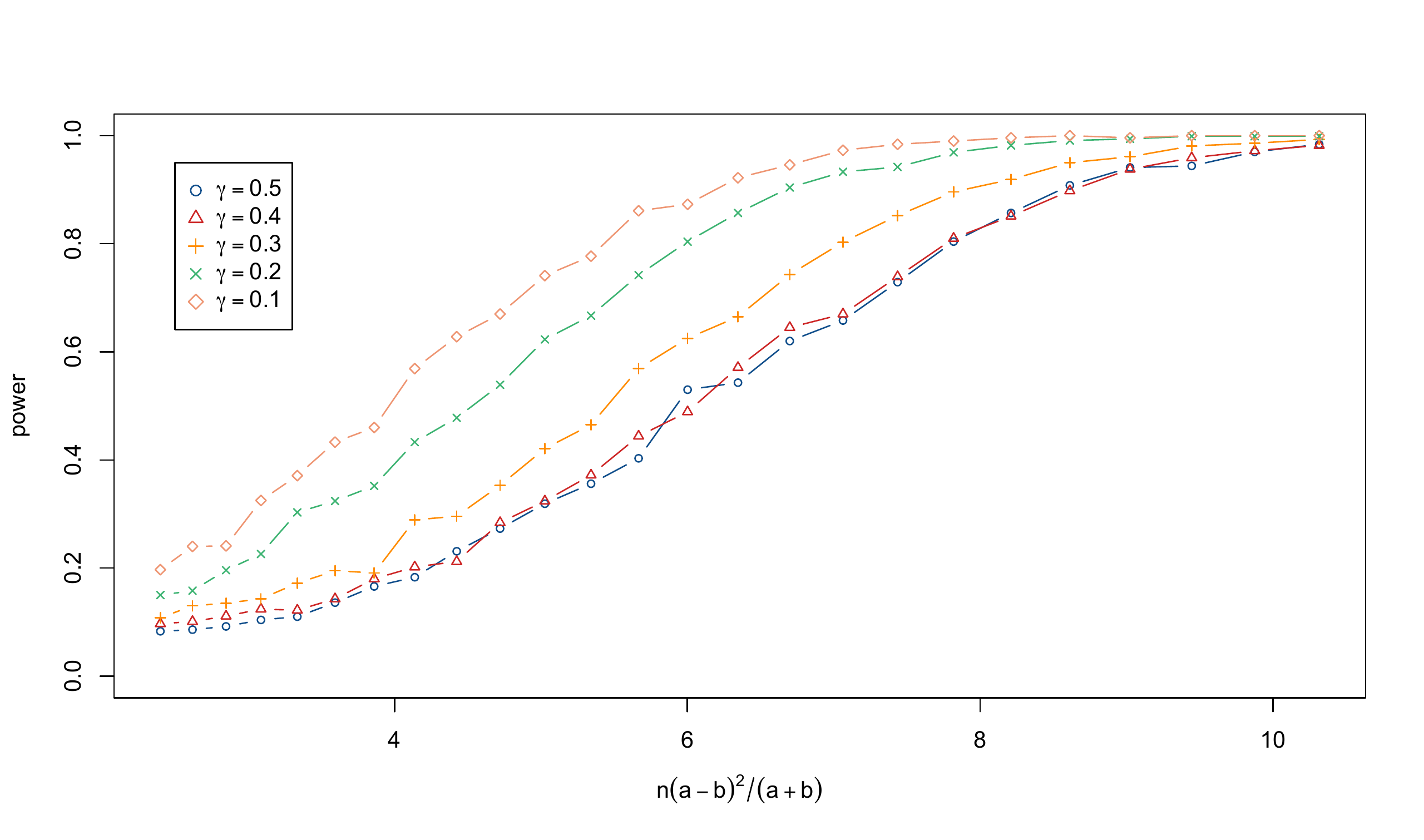}
\caption{Power of the proposed $0.05$-level test under the unbalanced 3-community stochastic block models.}
\label{fig:4}
\end{figure}
The experiment results are shown in Figure \ref{fig:4}. Each curve is a power function against the signal-to-noise ratio $\frac{n(a-b)^2}{a+b}$ for a given $\gamma$. Each point on a curve is computed by averaging the results of $1000$ independent experiments. The numbers $(a,b)$ range in the set $\{(0.2+\sqrt{0.002}i,0.2-\sqrt{0.002}i):i\in\{1,2,...,25\}\}$. The five curves correspond to five values of $\gamma$ in $\{0.5,0.4,0.3,0.2,0.1\}$. It is interesting to note that even when $\gamma\neq 0.5$, the test still has power. Moreover, the power actually increases as we decrease $\gamma$. It suggests that the balanced community size assumption in Theorem \ref{thm:SBM-k} can potentially be relaxed.

Now we study the power of our test under the configuration model. Consider a configuration model with size $n$ and each entry $\theta$ sampled from a beta distribution. Note that for $\theta_i\sim Beta(\alpha,\beta)$, the mean and variance are given by the formulas
$$\mathbb{E}\theta_i=\frac{\alpha}{\alpha+\beta},\quad\text{and}\quad \Var(\theta_i)=\frac{\alpha\beta}{(\alpha+\beta)^2(\alpha+\beta+1)}.$$
We consider the parametrization $\beta=\frac{1-h}{h}\alpha$. This leads to
$$\mathbb{E}\theta_i=h,\quad\text{and}\quad \Var(\theta_i)=\frac{h^2(1-h)}{\alpha+h}=\frac{h^2(1-h)}{\alpha}\left(1+O(h/\alpha)\right).$$
Therefore, $h$ can be roughly understood as the sparsity parameter of the network, and $\alpha$ can roughly be interpreted as the deviation from Erd\H{o}s-R\'{e}nyi models. Note that as $\alpha\rightarrow\infty$, the configuration model converges to an Erd\H{o}s-R\'{e}nyi model.
\begin{figure}[tbp]
\centering
\includegraphics[width=5in]{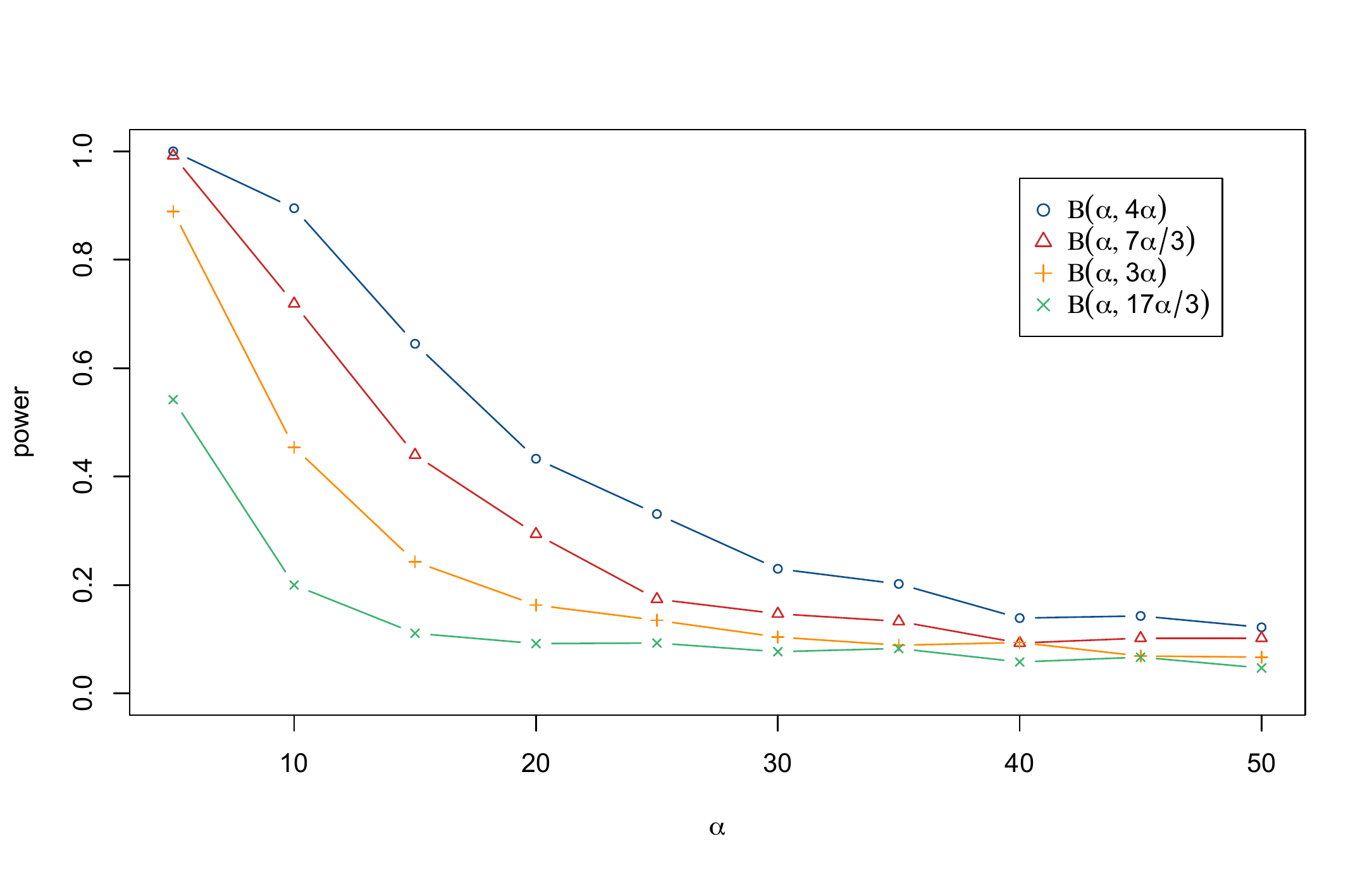}
\caption{Power of the proposed $0.05$-level test under the configuration models.}
\label{fig:5}
\end{figure}
Figure \ref{fig:5} shows four power curves with the values of $h$ in $\{0.3,0.25,0.2,0.15\}$, which corresponds to the distributions $\text{Beta}(\alpha,4\alpha)$, $\text{Beta}(\alpha,7\alpha/3)$, $\text{Beta}(\alpha,3\alpha)$ and $\text{Beta}(\alpha,17\alpha/3)$, respectively. Each point in Figure \ref{fig:5} is calculated by averaging $1000$ independent experiments, and for each experiment, we sample an independent $\theta$ from the beta distribution. This is to eliminate the variability of the experiments in sampling $\theta$. Thus, the points in Figure \ref{fig:5} are understood as the Bayes risk for the configuration model or the risk for the rank-one latent variable model. All the power curves decrease to $0$ as $\alpha$ increases. Moreover, the power also decreases with the sparsity level $h$. This is well suggested by the results in Theorem \ref{thm:configuration} and Theorem \ref{thm:latent} given that the signal-to-noise ratio in (\ref{eq:SNR-configuration}) and (\ref{eq:SNR-f}) is approximately $\frac{n^3\delta^2}{\rho^3}\asymp\frac{n^3h^6}{\alpha^2}$.

Finally, we study numerically the performance of the two network sampling procedures in Section \ref{sec:sample}. We first study sampling network nodes by checking the asymptotic distributions in Theorem \ref{thm:sample-node} and Corollary \ref{cor:chi-node}. We generate networks from Erd\H{o}s-R\'{e}nyi distributions. Four combinations of the network size $n$, the edge probability $p$, and sampling size $m$ are considered. For each scenario, we compute three statistics,
$$\frac{\sqrt{f(n,m)}T_3^{\mathcal{M}}}{\sqrt{p^3(1-p)^3+3p^4(1-p)^2}},\quad \frac{\sqrt{f(n,m)}T_2^{\mathcal{M}}}{\sqrt{3p^2(1-p)^2(1-3p)^2+9p^3(1-p)^3}},$$
and $(T^{\mathcal{M}})^2$ defined in (\ref{eq:chi^2-node}). Histograms of the three statistics are computed with $1000$ independent experiments. The results are plotted in Figure \ref{fig:6}. Each setting of $(n,p,m)$ occupies a column. The histograms of normalized $T_3^{\mathcal{M}}$, normalized $T_2^{\mathcal{M}}$ and $(T^{\mathcal{M}})^2$ are plotted in the three rows. Each histogram is superimposed with a theoretical density curve in red.
\begin{figure}[tbp]
\centering
\includegraphics[width=6in]{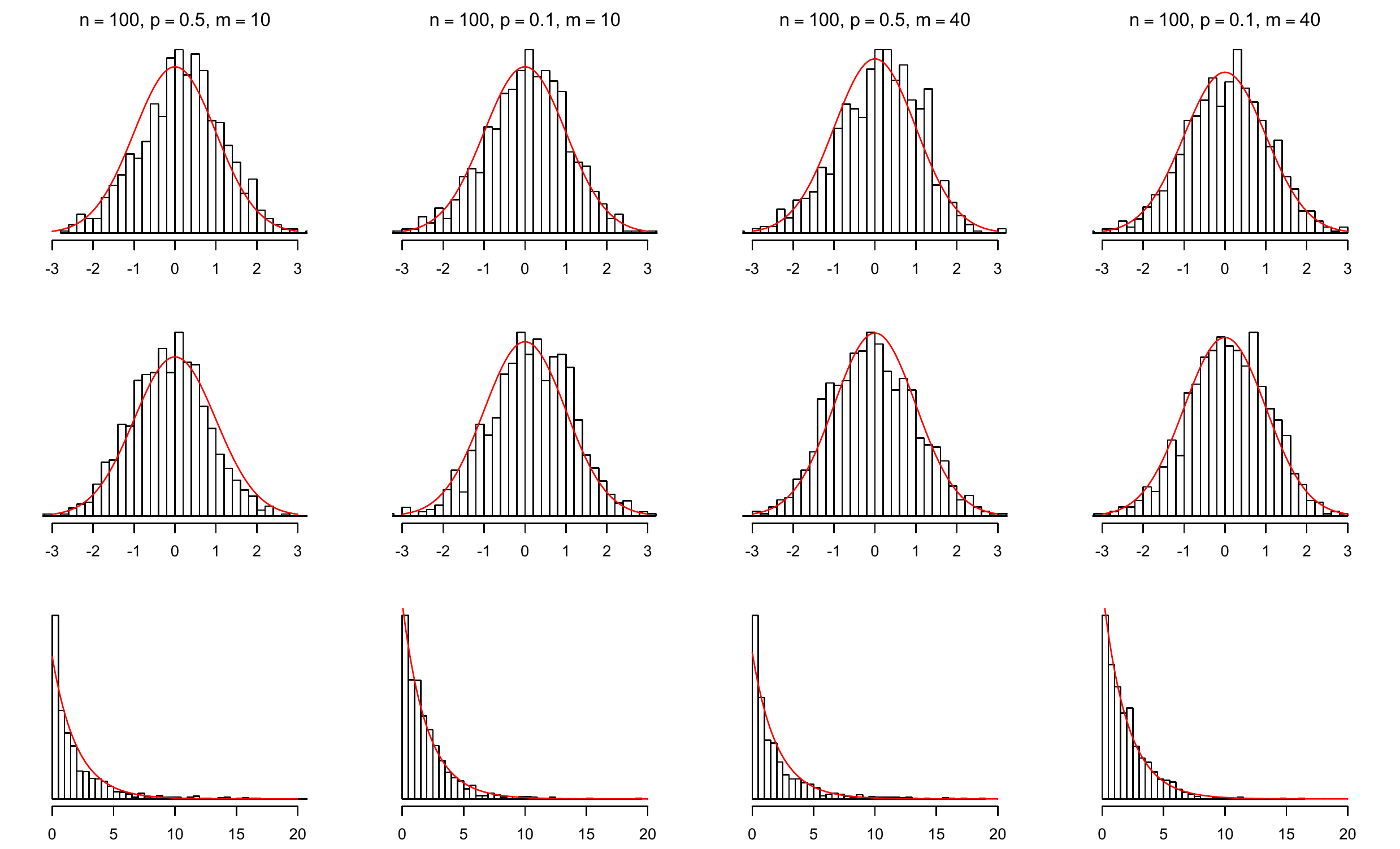}
\caption{Histograms of normalized $T_3^{\mathcal{M}}$, normalized $T_2^{\mathcal{M}}$ and $(T^{\mathcal{M}})^2$ for Erd\H{o}s-R\'{e}nyi graphs.}
\label{fig:6}
\end{figure}
The results show great matches between the theoretical asymptotic distributions and the empirical ones. This is true even for very small values of $m$.

We also study sampling network triples by checking the asymptotic distributions in Theorem \ref{thm:sample-triple} and Corollary \ref{cor:chi-triple}. We generate networks from Erd\H{o}s-R\'{e}nyi distributions. Four combinations of the network size $n$, the edge probability $p$, and sampling size $|\Delta|$ are considered. For each scenario, we compute three statistics,
$$\frac{\sqrt{\frac{{n\choose 3}|\Delta|}{{n\choose 3}+|\Delta|}}T_3^{\Delta}}{\sqrt{p^3(1-p)^3+3p^4(1-p)^2}},\quad \frac{\sqrt{\frac{{n\choose 3}|\Delta|}{{n\choose 3}+|\Delta|}}T_2^{\Delta}}{\sqrt{3p^2(1-p)^2(1-3p)^2+9p^3(1-p)^3}},$$
and $(T^{\Delta})^2$ defined in (\ref{eq:chi^2-node}). Histograms of the three statistics are computed with $1000$ independent experiments. The results are plotted in Figure \ref{fig:7}. Each setting of $(n,p,|\Delta|)$ occupies a column. The histograms of normalized $T_3^{\Delta}$, normalized $T_2^{\Delta}$ and $(T^{\Delta})^2$ are plotted in the three rows. Each histogram is superimposed with a theoretical density curve in red.
\begin{figure}[tbp]
\centering
\includegraphics[width=6in]{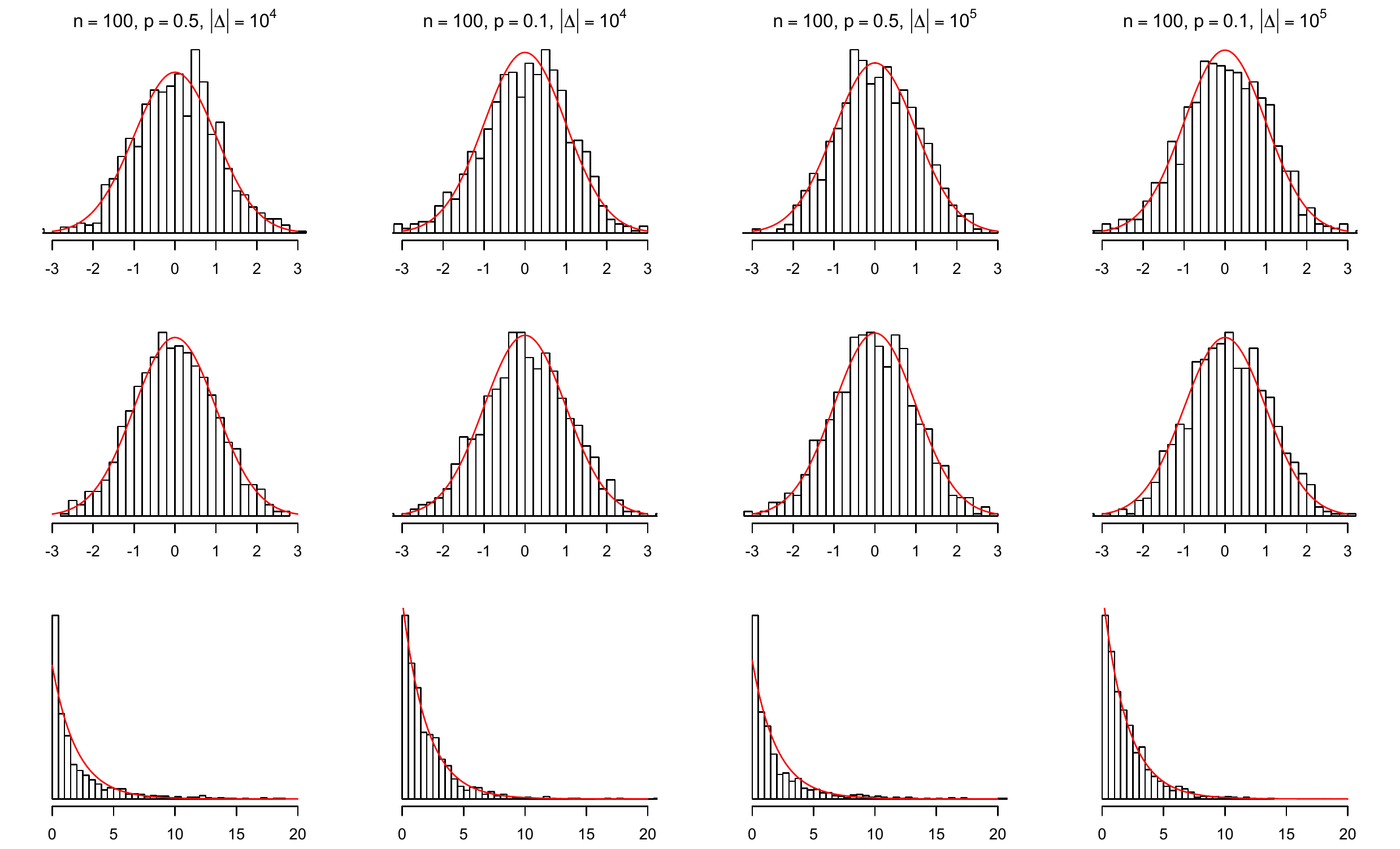}
\caption{Histograms of normalized $T_3^{\Delta}$, normalized $T_2^{\Delta}$ and $(T^{\Delta})^2$ for Erd\H{o}s-R\'{e}nyi graphs.}
\label{fig:7}
\end{figure}
Among the twelve plots in Figure \ref{fig:7}, there are slight mismatches between the histograms and the theoretical density curves for the chi-squared distributions on the first and the third columns. This is because $p=0.5$, and the assumption $p=o(1)$ in Corollary \ref{cor:chi-triple} is not satisfied. For the remaining ten plots, they all match the theoretical distributions quite well.

We close this section by studying the power of the test with the two sampling methods under the setting of the balanced 2-community stochastic block models. Figure \ref{fig:8} and Figure \ref{fig:9} show power curves for the balanced 2-community stochastic block model with $n=100$ and $(a,b)$ in the set $\{(0.3+0.01*i,0.2-0.01*i):i\in[10]\}$. Each point is an average of $1000$ independent experiments.
\begin{figure}[tbp]
\centering
\includegraphics[width=5in]{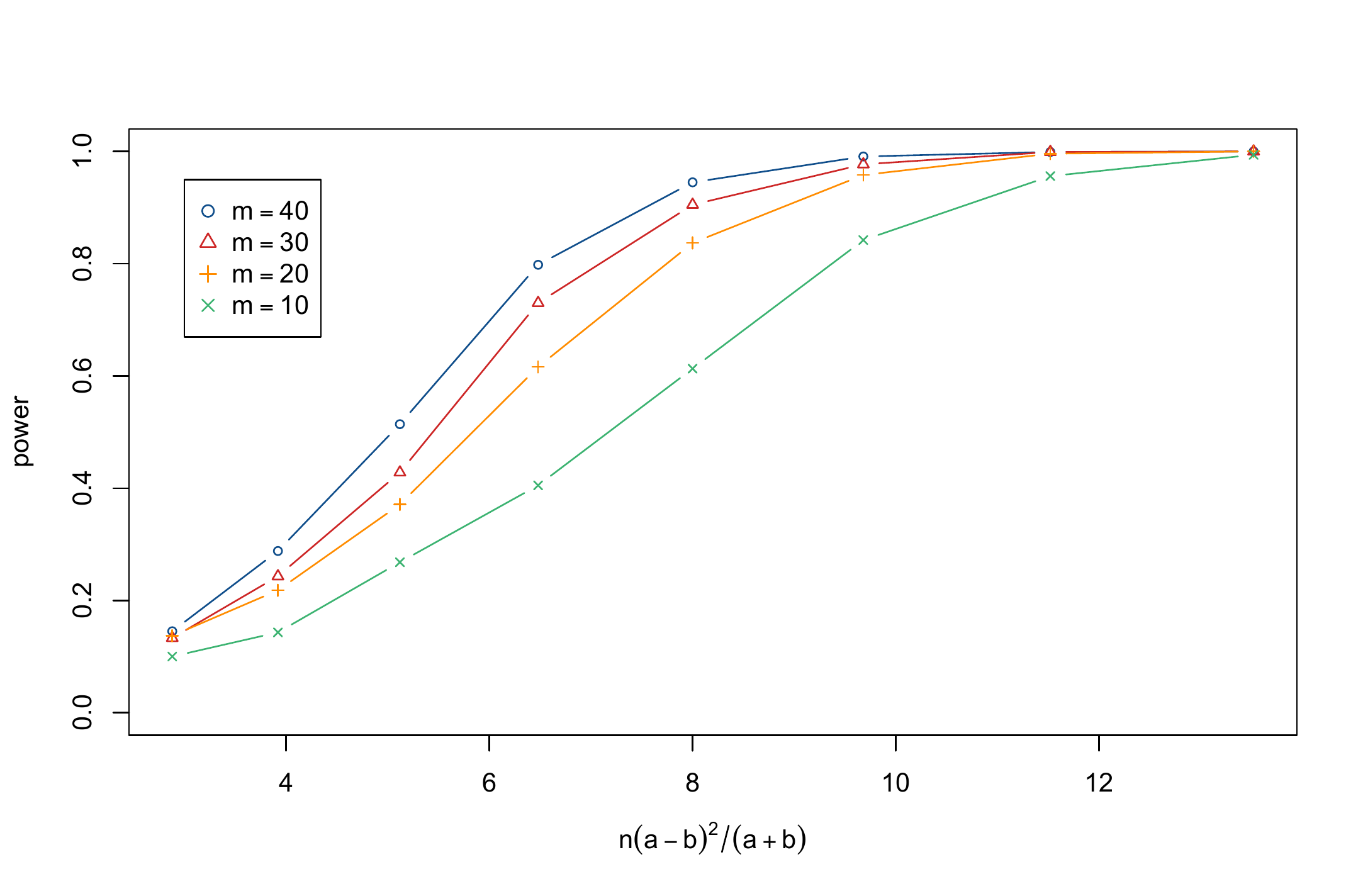}
\caption{Power of the proposed $0.05$-level test with node sampling under the balanced 2-community stochastic block models.}
\label{fig:8}
\end{figure}
For the testing procedure with node sampling, the cases $m=40, 30, 20, 10$ are considered. Figure \ref{fig:8} clearly shows a trade-off between the signal/sparsity and sampling budget. Namely, for the same value of power, it can be achieved with either a high signal-to-noise ratio and a low $m$ or a low signal-to-noise ratio but a high $m$.
\begin{figure}[tbp]
\centering
\includegraphics[width=5in]{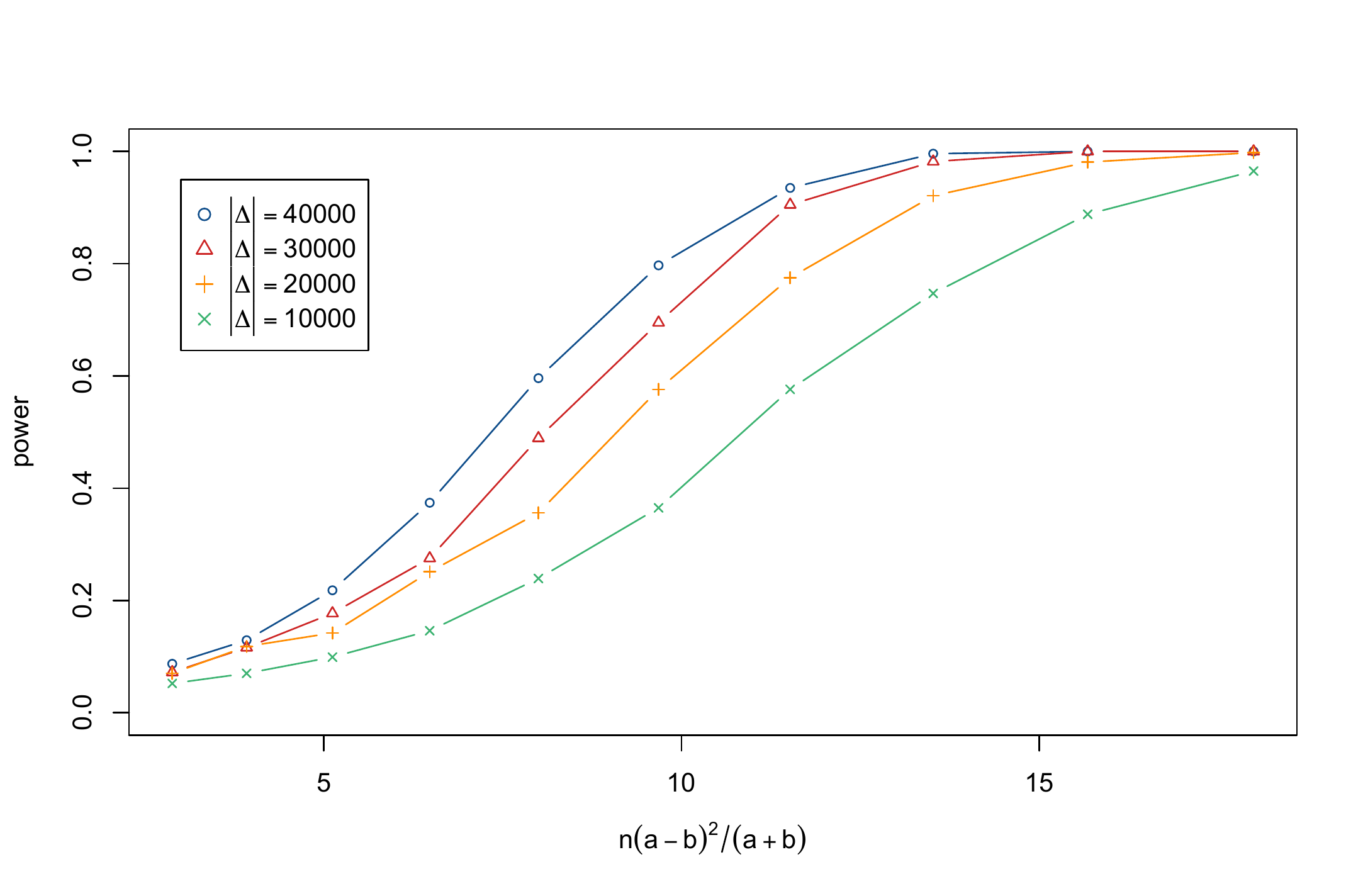}
\caption{Power of the proposed $0.05$-level test with triple sampling under the balanced 2-community stochastic block models.}
\label{fig:9}
\end{figure}
Figure \ref{fig:9} considers triple sampling with $|\Delta|=4*10^4, 3*10^4, 2*10^4, 10^4$. The same phenomenon is also found here. To better show the relation between sampling budget and network signal and sparsity, we plot the power against the adjusted signal-to-noise ratio suggested by Theorem \ref{thm:error-M} and Theorem \ref{thm:sample-triple}.
\begin{figure}[tbp]
\centering
\includegraphics[width=5in]{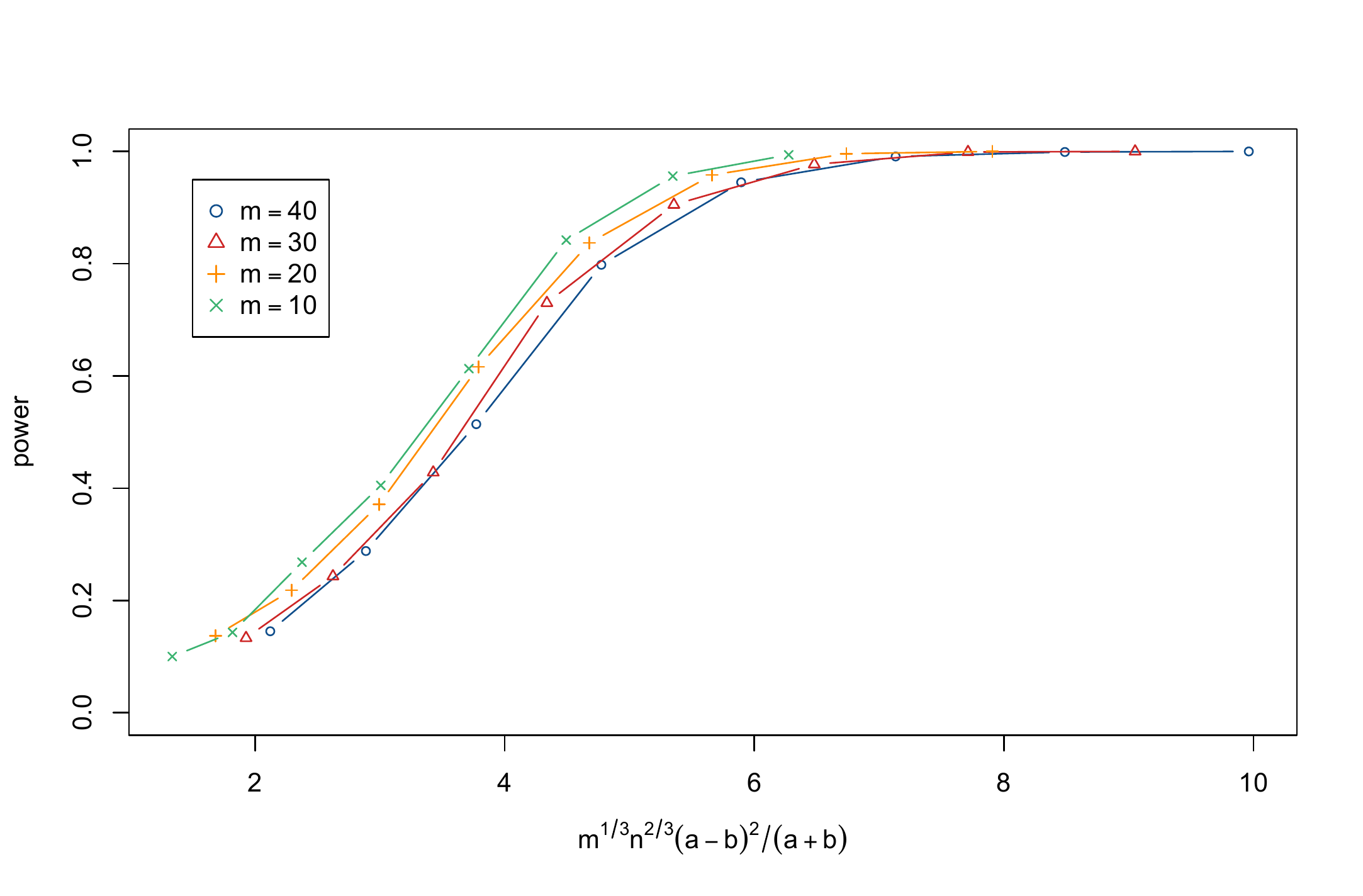}
\caption{Power of the proposed $0.05$-level test with node sampling under the balanced 2-community stochastic block models. The x-axis is the adjusted signal-to-noise ratio in Theorem \ref{thm:error-M}.}
\label{fig:10}
\end{figure}
Figure \ref{fig:10} shows the same set of power curves in Figure \ref{fig:8}. The only difference is that Figure \ref{fig:10} uses $\frac{m^{1/3}n^{2/3}(a-b)^2}{a+b}$ instead of $\frac{n(a-b)^2}{a+b}$ on the x-axis. Note that this is the adjusted signal-to-noise ratio given by (\ref{eq:SNR-M}) for node sampling. Interestingly, the four power curves in Figure \ref{fig:10} are well aligned, which validates the results in Theorem \ref{thm:error-M}.
\begin{figure}[tbp]
\centering
\includegraphics[width=5in]{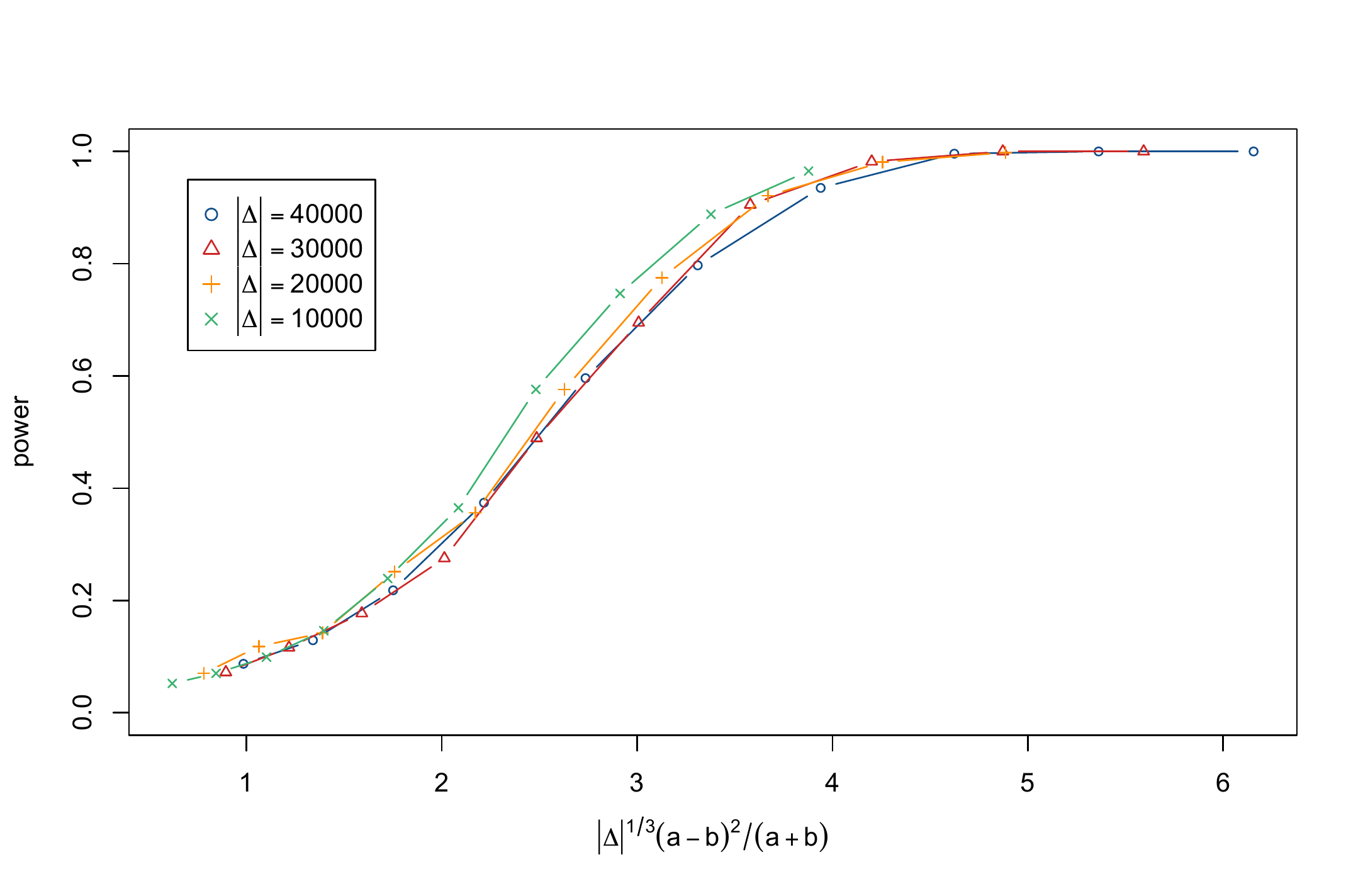}
\caption{Power of the proposed $0.05$-level test with triple sampling under the balanced 2-community stochastic block models. The x-axis is the adjusted signal-to-noise ratio in Theorem \ref{thm:sample-triple}.}
\label{fig:11}
\end{figure}
Similarly, Figure \ref{fig:11} is another version of Figure \ref{fig:9} by using $\frac{|\Delta|^{1/3}(a-b)^2}{a+b}$ on the x-axis. This number is given by (\ref{eq:SNR-Delta}) as the adjusted signal-to-noise ratio for triple sampling. Again, the four power curves are well aligned. This gives a numerical evidence that the results in Theorem \ref{thm:sample-triple} are sharp.


\section{Proofs}\label{sec:pf}

The section of proofs are organized as follows. Results related to asymptotic distributions are proved in Section \ref{sec:pf-ad}, which covers the proof of Theorem \ref{thm:vanilla}, Corollary \ref{cor:chi}, Theorem \ref{thm:type-1}, Theorem \ref{thm:sample-node} and Theorem \ref{thm:sample-triple}. Analyses of testing errors in Theorem \ref{thm:SBM}, Theorem \ref{thm:SBM-k}, Theorem \ref{thm:configuration}, Theorem \ref{thm:latent}, Theorem \ref{thm:error-M} and Theorem \ref{thm:error-Delta} are proved in Section \ref{sec:pf-power}. The lower bound result Theorem \ref{thm:SBM-lower} is proved in Section \ref{sec:pf-lb}. Finally, Propositions \ref{prop:corr}, \ref{prop:delta-configuration}, \ref{prop:delta-latent} and all technical lemmas are proved in Section \ref{sec:pf-tech}.

\subsection{Proofs of asymptotic distributions}\label{sec:pf-ad}

This section proves Theorem \ref{thm:vanilla}, Corollary \ref{cor:chi}, Theorem \ref{thm:type-1}, Theorem \ref{thm:sample-node} and Theorem \ref{thm:sample-triple}. Before stating the proofs of these results, we first state and prove another theorem. Define
\begin{eqnarray*}
R_3^{\mathcal{M}} &=& \frac{2}{m(n-1)(n-2)}\sum_{i=1}^m\sum_{\{1\leq j<k\leq n:j,k\neq i\}}(A_{ij}-p)(A_{ik}-p)(A_{jk}-p), \\
R_2^{\mathcal{M}} &=& \frac{2}{m(n-1)(n-2)}\sum_{i=1}^m\sum_{\{1\leq j<k\leq n:j,k\neq i\}}\Big[(A_{ij}-p)(A_{jk}-p) \\
&& +(A_{ik}-p)(A_{jk}-p)+(A_{ij}-p)(A_{ik}-p)\Big].
\end{eqnarray*}
The joint asymptotic distributions of $(R_3^{\mathcal{M}},R_2^{\mathcal{M}})$ are given by the following theorem. Recall that we require $m=m(n)$ is a nondecreasing function of $n$ that satisfies $m(1)=1$ and $m(l)\leq l$ for all $l\in[n]$.

\begin{theorem}\label{thm:main}
Assume $(1-p)^{-1}=O(1)$, $m\rightarrow\infty$ and $p^3mn^2\rightarrow\infty$, and then we have
$$\begin{pmatrix}
\sqrt{\frac{{f(n,m)}}{p^3(1-p)^3}}R_3^{\mathcal{M}} \\
\sqrt{\frac{{f(n,m)}}{3p^2(1-p)^2}} R_2^{\mathcal{M}}
\end{pmatrix}\leadsto N\left(\begin{pmatrix}
0 \\
0
\end{pmatrix},\begin{pmatrix}
1 & 0 \\
0 & 1
\end{pmatrix}\right),$$
under the null distribution.
\end{theorem}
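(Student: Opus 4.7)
Under the null, the centered edge indicators $X_{ij} := A_{ij} - p$ form an independent family with $\E X_{ij}^2 = p(1-p)$ and uniformly bounded higher moments, and both $R_3^{\mathcal{M}}$ and $R_2^{\mathcal{M}}$ are \emph{multilinear} polynomials in this family, of degrees three and two respectively. My plan is to (i) compute the limiting covariance exactly, (ii) reduce joint asymptotic normality to a univariate CLT via the Cram\'er--Wold device, and (iii) apply the fourth-moment theorem for multilinear polynomials of independent random variables.

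For the covariance I would reorganize the triangle statistic as a sum over unordered vertex triples $T$,
\begin{equation*}
R_3^{\mathcal{M}} = \frac{2}{m(n-1)(n-2)} \sum_{T \subset [n],\, |T|=3} \lambda(T) \prod_{e \in \binom{T}{2}} X_e, \qquad \lambda(T) := |T \cap \{1,\dots,m\}|.
\end{equation*}
Independence kills every cross-triple covariance (a non-shared edge contributes $\E X_e = 0$), and classifying triples by $\lambda(T) \in \{1,2,3\}$ yields $\sum_T \lambda(T)^2 = 9\binom{m}{3} + 4\binom{m}{2}(n-m) + m\binom{n-m}{2}$, giving $\Var(R_3^{\mathcal{M}}) = p^3(1-p)^3/f(n,m)$. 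A parallel grouping of V-shapes by underlying triple, together with the observation that two distinct V-shapes on a common triple share exactly one edge and hence have zero expected product, gives $\Var(R_2^{\mathcal{M}}) = 3p^2(1-p)^2/f(n,m)$. Moreover $\Cov(R_2^{\mathcal{M}}, R_3^{\mathcal{M}}) = 0$ \emph{exactly}: a monomial of $R_3$ carries three distinct edges forming a triangle and a monomial of $R_2$ carries two distinct edges forming a wedge, so their product has five edge-factors on at least three distinct edges, forcing at least one edge to have multiplicity one and the expectation to vanish. After the stated normalization the limiting covariance matrix is thus the identity.

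By Cram\'er--Wold it suffices, for each $(\alpha, \beta) \in \R^2$, to prove that
\begin{equation*}
W_n := \alpha \sqrt{\tfrac{f(n,m)}{3p^2(1-p)^2}}\, R_2^{\mathcal{M}} + \beta \sqrt{\tfrac{f(n,m)}{p^3(1-p)^3}}\, R_3^{\mathcal{M}} \leadsto N(0, \alpha^2 + \beta^2).
\end{equation*}
Since $W_n$ is a multilinear polynomial of degree at most three in the independent family $\{X_{ij}\}$, the natural tool is the fourth-moment theorem for such polynomials (de Jong's theorem for degenerate $U$-statistics, or the Rademacher/Bernoulli version of Nualart--Peccati): given the variance convergence already in hand, it suffices to prove $\E W_n^4 \to 3(\alpha^2 + \beta^2)^2$ together with a vanishing maximal-influence condition. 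An equivalent implementation writes $W_n = \sum_e D_e$ as a martingale-difference sum on the edge-ordered filtration $\mathcal{F}_e = \sigma(X_{e'} : e' \preceq e)$, with each $D_e$ of the form $X_e$ times a polynomial in the earlier $X_{e'}$; the conditional-variance and Lindeberg conditions of Hall--Heyde reduce to the same moment estimates.

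The main obstacle is controlling this fourth moment. Expanding $\E W_n^4$ produces a combination of $\E (R_3^{\mathcal{M}})^{k} (R_2^{\mathcal{M}})^{4-k}$ for $k = 0,\ldots,4$. For the pure terms $\E (R_3^{\mathcal{M}})^4$ and $\E (R_2^{\mathcal{M}})^4$, the three perfect pair-matchings of the four triangles (respectively V-shapes) produce the Gaussian value $3\Var(R_\cdot)^2$, and every other incidence pattern must be shown to be of smaller order by combining (i) a factor of $n$ for each free vertex label, (ii) a factor of $m/n$ for each in-$\mathcal{M}$ constraint, and (iii) a factor of $p$ for each edge-occurrence; the scaling $m\to\infty$ and $p^3 m n^2 \to \infty$ is exactly what forces every non-matching configuration to be $o(1)$ after normalization. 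The mixed terms with $0 < k < 4$ are subtler because odd moments $\E X_{ij}^3$ do not vanish for $p \neq 1/2$; nonetheless, the same combinatorial bookkeeping on the combined incidence multigraph shows that any surviving configuration must carry an extra distinct edge or vertex beyond the pure pair-matching, producing an extra power of $p$ or $1/m$ that is negligible under the scaling. The mixed contributions are therefore $o((\alpha^2+\beta^2)^2)$ and the Gaussian fourth moment is recovered, completing the CLT.
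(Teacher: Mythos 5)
Your route is genuinely different from the paper's, even though it starts from the same two ingredients: the Cram\'er--Wold reduction and the exact second-moment facts (your three covariance claims are precisely the paper's (\ref{eq:cross-m-Y}), (\ref{eq:cross-m-W}) and (\ref{eq:inde-shapes}), and your weighted count $9\binom{m}{3}+4\binom{m}{2}(n-m)+m\binom{n-m}{2}$ is exactly how $f(n,m)$ arises). From there the paper does \emph{not} use a fourth-moment theorem; it builds a martingale by revealing vertices one at a time, with $\mathcal{F}_l=\sigma(A_{ij}:i<j\le l)$ and the $l$-th increment collecting all triples whose largest vertex is $l$, and then verifies the Hall--Heyde conditions: (a) convergence of the summed conditional variances, which needs both the exact expectation ($=1$) and a separate bound on the variance of that sum, and (b) a Lindeberg condition reduced to fourth moments of the increments. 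Your primary tool --- de Jong's CLT for clean multilinear forms, which does accommodate a mixture of degrees $2$ and $3$ --- is a legitimate alternative: it replaces the paper's two localized computations by one global computation of $\mathbb{E}W_n^4$ plus an influence condition that is trivial here (each edge lies in $O(n)$ of the $\asymp mn^2$ weighted triples). The combinatorial bookkeeping is of comparable difficulty either way, and you correctly locate where $m\to\infty$ and $p^3mn^2\to\infty$ bite (the all-four-equal diagonal and the edge-sharing configurations). Your ``equivalent'' edge-indexed martingale is also fine in principle, though it is not the filtration the paper uses.

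Two points in your fourth-moment accounting need repair before this closes. First, the claim that ``the mixed contributions are $o((\alpha^2+\beta^2)^2)$'' cannot stand as written: if every $\mathbb{E}\tilde R_2^{\,k}\tilde R_3^{\,4-k}$ with $0<k<4$ vanished (writing $\tilde R_j$ for the normalized statistics), you would get $\mathbb{E}W_n^4\to 3\alpha^4+3\beta^4$, which is not $3(\alpha^2+\beta^2)^2$ unless $\alpha\beta=0$. The $k=2$ term does \emph{not} vanish: the pair matching that pairs the two wedge factors together and the two triangle factors together on essentially disjoint triples contributes $\mathbb{E}\tilde R_2^2\tilde R_3^2\to \Var(\tilde R_2)\Var(\tilde R_3)=1$, and its coefficient $6\alpha^2\beta^2$ is exactly what closes the identity $3\alpha^4+6\alpha^2\beta^2+3\beta^4=3(\alpha^2+\beta^2)^2$. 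Only the odd mixed terms $k=1,3$ (where the would-be pair matching dies because $\mathbb{E}W_\lambda Y_{\lambda'}=0$) and the non-matching configurations are negligible. Second, your heuristic ``a factor of $p$ for each edge-occurrence'' should be ``for each \emph{distinct} edge'': $\mathbb{E}|A_{ij}-p|^k\asymp p$ for every $k\ge 2$, not $p^k$. Counted per occurrence, the diagonal configuration $T_1=T_2=T_3=T_4$ in $\mathbb{E}\tilde R_3^4$ would look like $p^{12}$ per triple instead of the true $(p(1-p)^4+(1-p)p^4)^3\asymp p^3$, and you would never see that this diagonal is of relative size $1/(p^3mn^2)$ --- i.e., that $p^3mn^2\to\infty$ is the binding constraint, which it is.
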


In order to prove Theorem \ref{thm:main}, note that by Cram\'{e}r-Wold theorem, it is equivalent to prove
\begin{equation}
t_1 \sqrt{\frac{{f(n,m)}}{p^3(1-p)^3}}R_3^{\mathcal{M}} + t_2 \sqrt{\frac{{f(n,m)}}{3p^2(1-p)^2}} R_2^{\mathcal{M}} \leadsto N(0,1),\label{eq:to-prove}
\end{equation}
for any fixed $t_1$ and $t_2$ that satisfy $t_1^2+t_2^2=1$. We shall apply the following martingale central limit theorem to derive (\ref{eq:to-prove}). The version of martingale central limit theorem we use is taken from \cite{hall2014martingale}.

\begin{theorem}[\cite{hall2014martingale}]\label{thm:mds}
Suppose that for every $n\in\mathbb{N}$ and $k_n\rightarrow\infty$ the random variables $X_{n,1},...,X_{n,k_n}$ are a martingale difference sequence relative to an arbitrary filtration $\mathcal{F}_{n,1}\subset\mathcal{F}_{n,2}\subset\cdots \subset\mathcal{F}_{n,k_n}$. If
\begin{enumerate}
\item $\sum_{i=1}^{k_n}\mathbb{E}(X_{n,i}^2|\mathcal{F}_{n,i-1})\rightarrow 1$ in probability, and
\item $\sum_{i=1}^{k_n}\mathbb{E}\left(X_{n,i}^2\mathbb{I}_{\{|X_{n,i}|>\epsilon\}}|\mathcal{F}_{n,i-1}\right)\rightarrow 0$ in probability for every $\epsilon>0$,
\end{enumerate}
then $\sum_{i=1}^{k_n}X_{n,i}\leadsto N(0,1)$.
\end{theorem}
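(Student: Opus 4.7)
The plan is to establish the martingale CLT via the characteristic function method: show $\phi_n(t):=\mathbb{E}[e^{itS_n}]\to e^{-t^2/2}$ for every $t\in\mathbb{R}$, where $S_n=\sum_{i=1}^{k_n}X_{n,i}$, and then invoke L\'evy's continuity theorem. Condition (1) supplies the target variance through the quadratic Taylor term, while condition (2) controls the cubic remainder.

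First, I would reduce to uniformly bounded increments. Fix $\eta>0$ and define the truncated-and-recentered martingale differences
\begin{equation*}
\tilde X_{n,i}=X_{n,i}\mathbb{I}_{\{|X_{n,i}|\leq\eta\}}-\mathbb{E}\bigl[X_{n,i}\mathbb{I}_{\{|X_{n,i}|\leq\eta\}}\mid\mathcal{F}_{n,i-1}\bigr].
\end{equation*}
These are $\mathcal{F}_{n,i}$-adapted martingale differences with $|\tilde X_{n,i}|\leq 2\eta$. Condition (2) applied at threshold $\eta$, together with the conditional bound $\bigl|\mathbb{E}[X\,\mathbb{I}_{\{|X|>\eta\}}\mid\mathcal{F}]\bigr|^2\leq \mathbb{E}[X^2\mathbb{I}_{\{|X|>\eta\}}\mid\mathcal{F}]$, forces $\sum_i\mathbb{E}[(X_{n,i}-\tilde X_{n,i})^2\mid\mathcal{F}_{n,i-1}]\to 0$ in probability, so that $S_n-\tilde S_n\to 0$ in $L^2$ by orthogonality of martingale differences and $\tilde V_n:=\sum_i\mathbb{E}[\tilde X_{n,i}^2\mid\mathcal{F}_{n,i-1}]\to 1$ in probability. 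Hence it suffices to prove the CLT for $\tilde S_n$.

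For the bounded sequence, I would apply the elementary bound $|e^{itx}-1-itx+\tfrac12 t^2x^2|\leq C|t|^3|x|^3$ to obtain
\begin{equation*}
\mathbb{E}[e^{it\tilde X_{n,i}}\mid\mathcal{F}_{n,i-1}]=1-\tfrac12 t^2\,\mathbb{E}[\tilde X_{n,i}^2\mid\mathcal{F}_{n,i-1}]+\rho_{n,i},
\end{equation*}
where the linear term vanishes by the martingale property and $\sum_i|\rho_{n,i}|\leq C|t|^3\eta\,\tilde V_n\to 0$ upon sending $n\to\infty$ then $\eta\to 0$. A telescoping identity over the filtration then gives
\begin{equation*}
e^{it\tilde S_n}-\prod_{i=1}^{k_n}\mathbb{E}[e^{it\tilde X_{n,i}}\mid\mathcal{F}_{n,i-1}]=\sum_{j=1}^{k_n}U_{n,j}\Bigl(e^{it\tilde X_{n,j}}-\mathbb{E}[e^{it\tilde X_{n,j}}\mid\mathcal{F}_{n,j-1}]\Bigr),
\end{equation*}
with $U_{n,j}=\bigl(\prod_{i<j}e^{it\tilde X_{n,i}}\bigr)\bigl(\prod_{i>j}\mathbb{E}[e^{it\tilde X_{n,i}}\mid\mathcal{F}_{n,i-1}]\bigr)$. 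Conditioning outside-in from $j=k_n$ down to $j=1$ shows each summand has zero mean, giving $\mathbb{E}[e^{it\tilde S_n}]=\mathbb{E}\prod_i\mathbb{E}[e^{it\tilde X_{n,i}}\mid\mathcal{F}_{n,i-1}]$. Taking logarithms of this product (each factor has modulus at most one once $\eta$ is small) and substituting the Taylor formula above, the random exponent converges in probability to $-t^2/2$; bounded convergence then delivers $\phi_n(t)\to e^{-t^2/2}$.

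The main obstacle is the telescoping/swapping step: it replaces the random factor $e^{it\tilde X_{n,j}}$ by its $\mathcal{F}_{n,j-1}$-conditional expectation while the downstream factors in $U_{n,j}$ are $\mathcal{F}_{n,k_n}$-measurable, so a single application of the tower property does not eliminate them. The correct argument iteratively conditions along the nested filtration in the reverse order $j=k_n,k_n-1,\ldots,1$, and this is the one place where the full martingale-difference structure (rather than mere orthogonality) is used. The remaining ingredients---truncation via (2), vanishing of the linear term via the martingale property, and convergence of the quadratic sum via (1)---are then routine.
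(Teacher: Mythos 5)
This statement is not proved in the paper at all: it is quoted verbatim from Hall and Heyde and used as an imported black box, so there is no internal proof to compare against. Judged on its own terms, your attempt contains a fatal error at exactly the step you flag as ``the main obstacle.'' The telescoping identity itself is correct algebra, but the summands $U_{n,j}\bigl(e^{it\tilde X_{n,j}}-\mathbb{E}[e^{it\tilde X_{n,j}}\mid\mathcal{F}_{n,j-1}]\bigr)$ do \emph{not} have zero mean for $j<k_n$, and no amount of ``iterative conditioning in reverse order'' repairs this: the downstream factors $b_i=\mathbb{E}[e^{it\tilde X_{n,i}}\mid\mathcal{F}_{n,i-1}]$ with $i>j$ are $\mathcal{F}_{n,i-1}$-measurable, hence measurable with respect to $\sigma$-fields strictly finer than $\mathcal{F}_{n,j-1}$ and in general correlated with $a_j-b_j$ given $\mathcal{F}_{n,j-1}$. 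Once you condition down to $\mathcal{F}_{n,k_n-1}$ the remaining summands are already measurable and unchanged, so the tower property never factors them. Concretely, with $k_n=2$, $X_1=\pm1$ equally likely, and $X_2=\pm2$ if $X_1=1$ but $X_2=\pm1$ if $X_1=-1$, one checks at $t=\pi/2$ that $\mathbb{E}[e^{it(X_1+X_2)}]=-i/2$ while $\mathbb{E}\bigl[\prod_i\mathbb{E}[e^{itX_i}\mid\mathcal{F}_{i-1}]\bigr]=0$. So the identity $\mathbb{E}[e^{it\tilde S_n}]=\mathbb{E}\prod_i\mathbb{E}[e^{it\tilde X_{n,i}}\mid\mathcal{F}_{n,i-1}]$ on which your whole argument rests is false for martingale difference sequences; this failure is precisely what distinguishes the martingale CLT from the independent case. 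Correct proofs circumvent it, e.g.\ McLeish's device of the product $\prod_i(1+itX_{n,i})$, whose expectation \emph{is} exactly one because peeling factors from the top only ever leaves $\mathcal{F}_{n,j-1}$-measurable prefixes, combined with showing $\sum_iX_{n,i}^2\to1$ in probability; or a direct estimate of $\mathbb{E}\bigl[e^{itS_n+t^2V_n/2}\bigr]-1$ as in Hall--Heyde. Your proof supplies no substitute mechanism and no bound on the error incurred by the swap.

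A secondary gap: you claim $S_n-\tilde S_n\to0$ in $L^2$ from the fact that $\sum_i\mathbb{E}[(X_{n,i}-\tilde X_{n,i})^2\mid\mathcal{F}_{n,i-1}]\to0$ in probability. Orthogonality gives $\mathbb{E}[(S_n-\tilde S_n)^2]=\mathbb{E}\bigl[\sum_i\mathbb{E}[(X_{n,i}-\tilde X_{n,i})^2\mid\mathcal{F}_{n,i-1}]\bigr]$, and convergence in probability of the integrand does not yield convergence of its expectation without uniform integrability, which conditions (1) and (2) do not supply. The standard fix is to localize by stopping the martingale when the accumulated conditional variance exceeds a fixed level (so that the relevant sums become bounded and the event of stopping early has vanishing probability), after which one obtains convergence in probability of $S_n-\tilde S_n$, which suffices. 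This is routine but must be said; as written the $L^2$ claim is unjustified.
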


\begin{proof}[Proof of Theorem \ref{thm:main}]
Note that in (\ref{eq:to-prove}), $m=m(n)$ is a nondecreasing function of $n$. Recall the assumption that $m(1)=1$ and $m(l)\leq l$ for all $l\in[n]$. Without loss of generality, we take $\mathcal{M}=[m]$. We build a martingale sequence. Note that both $R_3^{\mathcal{M}}$ and $R_2^{\mathcal{M}}$ involve the summation $\sum_{i=1}^m\sum_{1\leq j<k\leq n:j,k,\neq i}(\cdot)$. It sums over all ${m\choose 3}$ triples from $[m]$ with multiplicity of $3$, $m{n-m \choose 2}$ triples with one node from $[m]$ and the other two from $[n]\backslash[m]$, and ${m\choose 2}(n-m)$ triples with two nodes from $[m]$ and the other one from $[n]\backslash[m]$ with multiplicity $2$. It is easy to check that
$$3{m\choose 3}+m{n-m\choose 2}+2{m\choose 2}(n-m)=\frac{m(n-1)(n-2)}{2}.$$
We collect all the $\frac{m(n-1)(n-2)}{2}$ triples in the multiset $\Delta_n$. Among all the triples in $\Delta_n$, the ones only use nodes from $[l]$ for some $l\in[n]$ are collected in the multiset $\Delta_{n,l}$. Given a triple $\lambda=(i,j,k)$, define
\begin{eqnarray}
\label{eq:def-Y} Y_{\lambda} &=& (A_{ij}-p)(A_{ik}-p)(A_{jk}-p), \\
\label{eq:def-W} W_{\lambda} &=& (A_{ij}-p)(A_{jk}-p)+(A_{ik}-p)(A_{jk}-p)+(A_{ij}-p)(A_{ik}-p).
\end{eqnarray}
For an $l\in[n]$, define
$$S_{n,l}=\frac{t_1\sum_{\lambda\in\Delta_{n,l}}Y_{\lambda}}{\frac{m(n-1)(n-2)}{2}\sqrt{\frac{p^3(1-p)^3}{f(n,m)}}}+\frac{t_2\sum_{\lambda\in\Delta_{n,l}}W_{\lambda}}{\frac{m(n-1)(n-2)}{2}\sqrt{\frac{3p^2(1-p)^2}{f(n,m)}}}.$$
Note that
$$\sum_{\lambda\in\Delta_{n,l}}Y_{\lambda}=\begin{cases}
3\sum_{1\leq i<j<k\leq l}Y_{(i,j,k)}, & l<m+1; \\
\sum_{i=1}^m\sum_{\{1\leq j<k\leq l:j,k\neq i\}}Y_{(i,j,k)}, & l\geq m+1.
\end{cases}$$
Similarly,
$$\sum_{\lambda\in\Delta_{n,l}}W_{\lambda}=\begin{cases}
3\sum_{1\leq i<j<k\leq l}W_{(i,j,k)}, & l<m+1; \\
\sum_{i=1}^m\sum_{1\leq j<k\leq l:j,k\neq i}W_{(i,j,k)}, & l\geq m+1.
\end{cases}$$
Therefore, we have
\begin{equation}
\sum_{\lambda\in\Delta_{n,l}}Y_{\lambda} - \sum_{\lambda\in\Delta_{n,l-1}}Y_{\lambda} = \begin{cases}
3\sum_{1\leq i<j\leq l-1}Y_{(i,j,l)}, & l<m+1; \\
2\sum_{1\leq i<j\leq m}Y_{(i,j,m+1)}, & l=m+1; \\
\sum_{i=1}^m\sum_{j=m+1}^{l-1}Y_{(i,j,l)} + 2\sum_{1\leq i<j\leq m}Y_{(i,j,l)}, & l>m+1.
\end{cases}\label{eq:Y-case}
\end{equation}
Similarly, we also have
\begin{equation}
\sum_{\lambda\in\Delta_{n,l}}W_{\lambda} - \sum_{\lambda\in\Delta_{n,l-1}}W_{\lambda} = \begin{cases}
3\sum_{1\leq i<j\leq l-1}W_{(i,j,l)}, & l<m+1; \\
2\sum_{1\leq i<j\leq m}W_{(i,j,m+1)}, & l=m+1; \\
\sum_{i=1}^m\sum_{j=m+1}^{l-1}W_{(i,j,l)} + 2\sum_{1\leq i<j\leq m}W_{(i,j,l)}, & l>m+1.
\end{cases}\label{eq:W-case}
\end{equation}
Define $\mathcal{F}_l$ to be the $\sigma$-algebra generated by the random variables $\{A_{ij}\}_{1\leq i<j\leq l}$. By (\ref{eq:Y-case}) and (\ref{eq:W-case}), it is not hard to check that
\begin{equation}
\mathbb{E}(S_{n,l}-S_{n,l-1}|\mathcal{F}_{l-1})=0.\label{eq:martingale}
\end{equation}
Therefore, for $X_{n,l}=S_{n,l}-S_{n,l-1}$, $\{X_{n,l}\}_l$ are a martingale difference sequence relative to the filtration $\{\mathcal{F}_l\}_l$. To apply Theorem \ref{thm:mds}, we need to calculate the asymptotic variance and check Lindeberg's condition, respectively.

\paragraph{Calculating asymptotic variance.}

Note that for any two triples $(i,j,k)$ and $(i',j',k')$,
\begin{equation}
\mathbb{E}Y_{(i,j,k)}W_{(i',j',k')}=0.\label{eq:inde-shapes}
\end{equation}
We first calculate the expectation of $\sum_{l=1}^n\mathbb{E}((S_{n,l}-S_{n,l-1})^2|\mathcal{F}_{l-1})$. Using the martingale property, we have
\begin{eqnarray}
\nonumber && \mathbb{E}\left[\sum_{l=1}^n\mathbb{E}((S_{n,l}-S_{n,l-1})^2|\mathcal{F}_{l-1})\right] \\
\nonumber &=& \mathbb{E}\left[\sum_{l=1}^n\mathbb{E}(S_{n,l}^2+S_{n,l-1}^2-2S_{n,l}S_{n,l-1}|\mathcal{F}_{l-1})\right] \\
\nonumber &=& \sum_{l=1}^n\mathbb{E}(S_{n,l}^2-S_{n,l-1}^2) \\
\nonumber &=& \mathbb{E}S_{n,n}^2 \\
\label{eq:decomp} &=& \mathbb{E}\left(\frac{t_1\sum_{\lambda\in\Delta_{n}}Y_{\lambda}}{\frac{m(n-1)(n-2)}{2}\sqrt{\frac{p^3(1-p)^3}{f(n,m)}}}\right)^2+\mathbb{E}\left(\frac{t_2\sum_{\lambda\in\Delta_{n}}W_{\lambda}}{\frac{m(n-1)(n-2)}{2}\sqrt{\frac{3p^2(1-p)^2}{f(n,m)}}}\right)^2,
\end{eqnarray}
where the last equality is due to (\ref{eq:inde-shapes}). Let $\Delta_n^{(1)}$ be the set of all triples from $[m]$, $\Delta_n^{(2)}$ be the set of all triples with one node from $[m]$ and the other two nodes from $[n]\backslash[m]$, and $\Delta_n^{(3)}$ be the set of all triples with two nodes from $[m]$ and the other one node from $[n]\backslash[m]$. It is easy to see that
$$|\Delta_n^{(1)}|={m\choose 3},\quad |\Delta_n^{(2)}|=m{n-m\choose 2},\quad\text{and}\quad |\Delta_n^{(3)}|={m\choose 2}(n-m).$$
With these notations, we have
$$\sum_{\lambda\in\Delta_n}Y_{\lambda}=3\sum_{\lambda\in\Delta_n^{(1)}}Y_{\lambda}+\sum_{\lambda\in\Delta_n^{(2)}}Y_{\lambda}+2\sum_{\lambda\in\Delta_n^{(3)}}Y_{\lambda}.$$
For any two triples $\lambda$ and $\lambda'$, direct calculation gives
\begin{equation}
\mathbb{E}Y_{\lambda}Y_{\lambda'}=\begin{cases}
p^3(1-p)^3, & \lambda=\lambda';\\
0, & \lambda\neq\lambda',
\end{cases}\label{eq:cross-m-Y}
\end{equation}
where we use $\lambda=\lambda'$ and $\lambda\neq \lambda'$ to denote the equality and inequality of two sets.
Therefore,
\begin{eqnarray*}
\mathbb{E}\left(\sum_{\lambda\in\Delta_{n}}Y_{\lambda}\right)^2 &=& 9\sum_{\lambda\in\Delta_n^{(1)}}\mathbb{E}Y_{\lambda}^2+\sum_{\lambda\in\Delta_n^{(2)}}\mathbb{E}Y_{\lambda}^2+4\sum_{\lambda\in\Delta_n^{(3)}}\mathbb{E}Y_{\lambda}^2 \\
&=& p^3(1-p)^3\left(9{m\choose 3} + m{n-m\choose 2} + 4{m\choose 2}(n-m)\right).
\end{eqnarray*}
Since for any two triples $\lambda$ and $\lambda'$, we also have
\begin{equation}
\mathbb{E}W_{\lambda}W_{\lambda'}=\begin{cases}
3p^2(1-p)^2, & \lambda=\lambda';\\
0, & \lambda\neq\lambda',
\end{cases}\label{eq:cross-m-W}
\end{equation}
similar calculation gives
$$\mathbb{E}\left(\sum_{\lambda\in\Delta_{n}}W_{\lambda}\right)^2=3p^2(1-p)^2\left(9{m\choose 3} + m{n-m\choose 2} + 4{m\choose 2}(n-m)\right).$$
Hence, by (\ref{eq:decomp}), we get
\begin{equation}
\mathbb{E}\left[\sum_{l=1}^n\mathbb{E}((S_{n,l}-S_{n,l-1})^2|\mathcal{F}_{l-1})\right]=1.\label{eq:expect}
\end{equation}

After calculating the expectation, we need to bound the variance $\Var{\left[\sum_{l=1}^n\mathbb{E}((S_{n,l}-S_{n,l-1})^2|\mathcal{F}_{l-1})\right]}$.
The first bound we use is
\begin{eqnarray}
\nonumber && \Var\left[\sum_{l=1}^n\mathbb{E}((S_{n,l}-S_{n,l-1})^2|\mathcal{F}_{l-1})\right] \\
\label{eq:var-var-Y} &\leq& 8t_1^4\frac{\Var\left[\sum_{l=1}^n\mathbb{E}\left(\left(\sum_{\lambda\in\Delta_{n,l}}Y_{\lambda} - \sum_{\lambda\in\Delta_{n,l-1}}Y_{\lambda}\right)^2\Big|\mathcal{F}_{l-1}\right)\right]}{\left(\frac{m(n-1)(n-2)}{2}\sqrt{\frac{p^3(1-p)^3}{f(n,m)}}\right)^4} \\
\label{eq:var-var-W} && + 8t_2^4\frac{\Var\left[\sum_{l=1}^n\mathbb{E}\left(\left(\sum_{\lambda\in\Delta_{n,l}}W_{\lambda} - \sum_{\lambda\in\Delta_{n,l-1}}W_{\lambda}\right)^2\Big|\mathcal{F}_{l-1}\right)\right]}{\left(\frac{m(n-1)(n-2)}{2}\sqrt{\frac{3p^2(1-p)^2}{f(n,m)}}\right)^4}.
\end{eqnarray}
We will give bounds for (\ref{eq:var-var-Y}) and (\ref{eq:var-var-W}), respectively. Note that for any $\{i,j\}$ and $\{i',j'\}$, we have
\begin{equation}
\mathbb{E}\left(Y_{(i,j,l)}Y_{(i',j',l)}|\mathcal{F}_{l-1}\right) = \begin{cases}
p^2(1-p)^2(A_{ij}-p)^2, & \{i,j\}=\{i',j'\}; \\
0, & \{i,j\}\neq\{i',j'\}.
\end{cases}\label{eq:Y-case-condition}
\end{equation}
By (\ref{eq:Y-case}) and (\ref{eq:Y-case-condition}), we get
\begin{eqnarray*}
&& \sum_{l=1}^n\mathbb{E}\left(\left(\sum_{\lambda\in\Delta_{n,l}}Y_{\lambda} - \sum_{\lambda\in\Delta_{n,l-1}}Y_{\lambda}\right)^2\Big|\mathcal{F}_{l-1}\right) \\
&=& \sum_{l=1}^m\mathbb{E}\left(\left(3\sum_{1\leq i<j\leq l-1}Y_{(i,j,l)}\right)^2\Big|\mathcal{F}_{l-1}\right) + \mathbb{E}\left(\left(2\sum_{1\leq i<j\leq m}Y_{(i,j,m+1)}\right)^2\Big|\mathcal{F}_{m}\right) \\
&& + \sum_{l=m+2}^n\mathbb{E}\left(\left(\sum_{i=1}^m\sum_{j=m+1}^{l-1}Y_{(i,j,l)} + 2\sum_{1\leq i<j\leq m}Y_{(i,j,l)} \right)^2\Big|\mathcal{F}_{l-1}\right) \\
&=& 9\sum_{l=1}^m\sum_{1\leq i<j\leq l-1}p^2(1-p)^2(A_{ij}-p)^2 + 4\sum_{1\leq i<j\leq m}p^2(1-p)^2(A_{ij}-p)^2 \\
&& + \sum_{l=m+2}^n\sum_{i=1}^m\sum_{j=m+1}^{l-1}p^2(1-p)^2(A_{ij}-p)^2 + 4\sum_{l=m+2}^n\sum_{1\leq i<j\leq m}p^2(1-p)^2(A_{ij}-p)^2.
\end{eqnarray*}
We will give bounds for the variance of the four terms above.
\begin{eqnarray*}
&& \Var\left(\sum_{l=1}^m\sum_{1\leq i<j\leq l-1}p^2(1-p)^2(A_{ij}-p)^2\right) \\
&\leq& p^4\Var\left(\sum_{j=2}^{m-1}(m-j)\sum_{i=1}^{j-1}(A_{ij}-p)^2\right) \\
&=& p^4\sum_{j=2}^{m-1}(m-j)\sum_{i=1}^{j-1}\Var((A_{ij}-p)^2) \\
&\leq& p^5\sum_{j=2}^{m-1}(m-j)^2(j-1) \\
&=& O(m^4p^5),
\end{eqnarray*}
\begin{eqnarray*}
&& \Var\left(\sum_{1\leq i<j\leq m}p^2(1-p)^2(A_{ij}-p)^2\right) \\
&=& \sum_{1\leq i<j\leq m}p^4(1-p)^4\Var((A_{ij}-p)^2) \\
&=& O(m^2p^5),
\end{eqnarray*}
\begin{eqnarray*}
&& \Var\left(\sum_{l=m+1}^n\sum_{i=1}^m\sum_{j=m+1}^{l-1}p^2(1-p)^2(A_{ij}-p)^2\right) \\
&=& \Var\left(\sum_{j=m+1}^{n-1}(j-m)\sum_{i=1}^m(A_{ij}-p)^2p^2(1-p)^2\right) \\
&\leq& p^4\sum_{j=m+1}^{n-1}(j-m)^2\sum_{i=1}^m\Var((A_{ij}-p)^2) \\
&=& O(p^5n^3m),
\end{eqnarray*}
\begin{eqnarray*}
&& \Var\left(\sum_{l=m+2}^n\sum_{1\leq i<j\leq m}p^2(1-p)^2(A_{ij}-p)^2\right) \\
&\leq& p^4n^2\Var\left(\sum_{1\leq i<j\leq m}(A_{ij}-p)^2\right) \\
&=& O(p^5n^2m^2).
\end{eqnarray*}
Hence,
$$\Var\left[\sum_{l=1}^n\mathbb{E}\left(\left(\sum_{\lambda\in\Delta_{n,l}}Y_{\lambda} - \sum_{\lambda\in\Delta_{n,l-1}}Y_{\lambda}\right)^2\Big|\mathcal{F}_{l-1}\right)\right]=O(p^5n^3m).$$
This leads to a bound for (\ref{eq:var-var-Y}), which is
$$8t_1^4\frac{\Var\left[\sum_{l=1}^n\mathbb{E}\left(\left(\sum_{\lambda\in\Delta_{n,l}}Y_{\lambda} - \sum_{\lambda\in\Delta_{n,l-1}}Y_{\lambda}\right)^2\Big|\mathcal{F}_{l-1}\right)\right]}{\left(\frac{m(n-1)(n-2)}{2}\sqrt{\frac{p^3(1-p)^3}{f(n,m)}}\right)^4}=O\left(\frac{1}{pmn}\right).$$
Now we give a bound for (\ref{eq:var-var-W}). Note that for any $\{i,j\}$ and $\{i',j'\}$, we have
\begin{equation}
\mathbb{E}\left(W_{(i,j,l)}W_{(i',j',l)}|\mathcal{F}_{l-1}\right) = \begin{cases}
p^2(1-p)^2+2p(1-p)(A_{ij}-p)^2, & \{i,j\}=\{i',j'\}; \\
p(1-p)(A_{ij}-p)(A_{i'j}-p), & i\neq i', j=j'; \\
0, & \{i,j\}\cap\{i',j'\}=\varnothing.
\end{cases}\label{eq:W-case-condition}
\end{equation}
By (\ref{eq:W-case}) and (\ref{eq:W-case-condition}), we get
\begin{eqnarray*}
&& \sum_{l=1}^n\mathbb{E}\left(\left(\sum_{\lambda\in\Delta_{n,l}}W_{\lambda} - \sum_{\lambda\in\Delta_{n,l-1}}W_{\lambda}\right)^2\Big|\mathcal{F}_{l-1}\right) \\
&=& \sum_{l=1}^m\mathbb{E}\left(\left(3\sum_{1\leq i<j\leq l-1}W_{(i,j,l)}\right)^2\Big|\mathcal{F}_{l-1}\right) + \mathbb{E}\left(\left(2\sum_{1\leq i<j\leq m}W_{(i,j,m+1)}\right)^2\Big|\mathcal{F}_{m}\right) \\
&& + \sum_{l=m+2}^n\mathbb{E}\left(\left(\sum_{i=1}^m\sum_{j=m+1}^{l-1}W_{(i,j,l)} + 2\sum_{1\leq i<j\leq m}W_{(i,j,l)} \right)^2\Big|\mathcal{F}_{l-1}\right) \\
&\leq& 9\sum_{l=1}^m\sum_{1\leq i<j\leq l-1}[p^2(1-p)^2+2p(1-p)(A_{ij}-p)^2] \\
&& + 4\sum_{1\leq i<j\leq m}[p^2(1-p)^2+2p(1-p)(A_{ij}-p)^2] \\
&& + \sum_{l=m+2}^n\sum_{i=1}^m\sum_{j=m+1}^{l-1}[p^2(1-p)^2+2p(1-p)(A_{ij}-p)^2] \\
&& + 4\sum_{l=m+2}^n\sum_{1\leq i<j\leq m}[p^2(1-p)^2+2p(1-p)(A_{ij}-p)^2] \\
&& + 18\sum_{l=1}^m\sum_{1\leq i<j\leq l-1}\sum_{\{1\leq i'\leq l-1: i'\neq i,j\}}p(1-p)(A_{ij}-p)(A_{i'j}-p) \\
&& + 8\sum_{1\leq i<j\leq m}\sum_{\{1\leq i'\leq m:i'\neq i,j\}}p(1-p)(A_{ij}-p)(A_{i'j}-p) \\
&& + 4\sum_{l=m+2}^n\sum_{i=1}^m\sum_{j=m+1}^{l-1}\sum_{\{1\leq i'\leq m:i'\neq i\}}p(1-p)(A_{ij}-p)(A_{i'j}-p) \\
&& + 4\sum_{l=m+2}^n\sum_{i=1}^m\sum_{j=m+1}^{l-1}\sum_{\{m+1\leq j'\leq l-1:j'\neq j\}}p(1-p)(A_{ij}-p)(A_{ij'}-p) \\
&& + 16\sum_{l=m+2}^n\sum_{1\leq i<j\leq m}\sum_{\{1\leq i'\leq m:i'\neq i,j\}}p(1-p)(A_{ij}-p)(A_{i'j}-p).
\end{eqnarray*}
We will give bounds for the variance of all the terms above.
\begin{eqnarray*}
&& \Var\left[\sum_{l=1}^m\sum_{1\leq i<j\leq l-1}[p^2(1-p)^2+2p(1-p)(A_{ij}-p)^2]\right] \\
&\leq& 4p^2\Var\left[\sum_{l=1}^m\sum_{1\leq i<j\leq l-1}(A_{ij}-p)^2\right] \\
&=& O(m^4p^3),
\end{eqnarray*}
\begin{eqnarray*}
&& \Var\left[\sum_{1\leq i<j\leq m}[p^2(1-p)^2+2p(1-p)(A_{ij}-p)^2]\right] \\
&\leq& 4p^2\Var\left[\sum_{1\leq i<j\leq m}(A_{ij}-p)^2\right] \\
&=& O(m^2p^3),
\end{eqnarray*}
\begin{eqnarray*}
&& \Var\left[\sum_{l=m+2}^n\sum_{i=1}^m\sum_{j=m+1}^{l-1}[p^2(1-p)^2+2p(1-p)(A_{ij}-p)^2]\right] \\
&\leq& 4p^2\Var\left[\sum_{l=m+2}^n\sum_{i=1}^m\sum_{j=m+1}^{l-1}(A_{ij}-p)^2\right] \\
&\leq& 4p^2\Var\left(\sum_{j=m+1}^{n-1}(j-m)\sum_{i=1}^m(A_{ij}-p)^2\right) \\
&\leq& 4p^2\sum_{j=m+1}^{n-1}(j-m)^2\sum_{i=1}^m\Var((A_{ij}-p)^2) \\
&=& O(p^3n^3m),
\end{eqnarray*}
\begin{eqnarray*}
&& \Var\left[\sum_{l=m+2}^n\sum_{1\leq i<j\leq m}[p^2(1-p)^2+2p(1-p)(A_{ij}-p)^2]\right] \\
&\leq& 4p^2\Var\left[\sum_{l=m+2}^n\sum_{1\leq i<j\leq m}(A_{ij}-p)^2\right] \\
&=& O(p^3n^2m^2),
\end{eqnarray*}

\begin{eqnarray*}
&& \Var\left[\sum_{l=1}^m\sum_{1\leq i<j\leq l-1}\sum_{\{1\leq i'\leq l-1: i'\neq i,j\}}p(1-p)(A_{ij}-p)(A_{i'j}-p)\right] \\
&\leq& p^2\mathbb{E}\left[\sum_{l=1}^m\sum_{1\leq i<j\leq l-1}\sum_{\{1\leq i'\leq l-1: i'\neq i,j\}}(A_{ij}-p)(A_{i'j}-p)\right]^2 \\
&\leq& p^2m\sum_{l=1}^m\mathbb{E}\left[\sum_{1\leq i<j\leq l-1}\sum_{\{1\leq i'\leq l-1: i'\neq i,j\}}(A_{ij}-p)(A_{i'j}-p)\right]^2 \\
&=& p^2m\sum_{l=1}^m\sum_{1\leq i<j\leq l-1}\sum_{\{1\leq i'\leq l-1: i'\neq i,j\}}\mathbb{E}(A_{ij}-p)^2(A_{i'j}-p)^2 \\
&=& O(p^4m^5),
\end{eqnarray*}
\begin{eqnarray*}
&& \Var\left[\sum_{1\leq i<j\leq m}\sum_{\{1\leq i'\leq m:i'\neq i,j\}}p(1-p)(A_{ij}-p)(A_{i'j}-p)\right] \\
&\leq& p^2\mathbb{E}\left[\sum_{1\leq i<j\leq m}\sum_{\{1\leq i'\leq m:i'\neq i,j\}}(A_{ij}-p)(A_{i'j}-p)\right]^2 \\
&=& O(p^4m^3),
\end{eqnarray*}
\begin{eqnarray*}
&& \Var\left[\sum_{l=m+2}^n\sum_{i=1}^m\sum_{j=m+1}^{l-1}\sum_{\{1\leq i'\leq m:i'\neq i\}}p(1-p)(A_{ij}-p)(A_{i'j}-p)\right] \\
&\leq& p^2\mathbb{E}\left[\sum_{l=m+2}^n\sum_{i=1}^m\sum_{j=m+1}^{l-1}\sum_{\{1\leq i'\leq m:i'\neq i\}}(A_{ij}-p)(A_{i'j}-p)\right]^2 \\
&\leq& p^2n\sum_{l=m+2}^n\mathbb{E}\left[\sum_{i=1}^m\sum_{j=m+1}^{l-1}\sum_{\{1\leq i'\leq m:i'\neq i\}}(A_{ij}-p)(A_{i'j}-p)\right]^2 \\
&=& p^2n\sum_{l=m+2}^n\sum_{i=1}^m\sum_{j=m+1}^{l-1}\sum_{\{1\leq i'\leq m:i'\neq i\}}\mathbb{E}(A_{ij}-p)^2(A_{i'j}-p)^2 \\
&=& O(p^4n^3m^2),
\end{eqnarray*}
\begin{eqnarray*}
&& \Var\left[\sum_{l=m+2}^n\sum_{i=1}^m\sum_{j=m+1}^{l-1}\sum_{\{m+1\leq j'\leq l-1:j'\neq j\}}p(1-p)(A_{ij}-p)(A_{ij'}-p)\right] \\
&\leq& p^2\mathbb{E}\left[\sum_{l=m+2}^n\sum_{i=1}^m\sum_{j=m+1}^{l-1}\sum_{\{m+1\leq j'\leq l-1:j'\neq j\}}(A_{ij}-p)(A_{ij'}-p)\right]^2 \\
&\leq& p^2n\sum_{l=m+2}^n\mathbb{E}\left[\sum_{i=1}^m\sum_{j=m+1}^{l-1}\sum_{\{m+1\leq j'\leq l-1:j'\neq j\}}(A_{ij}-p)(A_{ij'}-p)\right]^2 \\
&=& O(p^4n^4m),
\end{eqnarray*}
\begin{eqnarray*}
&& \Var\left[\sum_{l=m+2}^n\sum_{1\leq i<j\leq m}\sum_{\{1\leq i'\leq m:i'\neq i,j\}}p(1-p)(A_{ij}-p)(A_{i'j}-p)\right] \\
&\leq& p^2\mathbb{E}\left[\sum_{l=m+2}^n\sum_{1\leq i<j\leq m}\sum_{\{1\leq i'\leq m:i'\neq i,j\}}(A_{ij}-p)(A_{i'j}-p)\right]^2 \\
&\leq& p^2n\sum_{l=m+2}^n\mathbb{E}\left[\sum_{1\leq i<j\leq m}\sum_{\{1\leq i'\leq m:i'\neq i,j\}}(A_{ij}-p)(A_{i'j}-p)\right]^2 \\
&=& O(p^4n^2m^3).
\end{eqnarray*}
Combining all the variance bounds above, we obtain
$$\Var\left[\sum_{l=1}^n\mathbb{E}\left(\left(\sum_{\lambda\in\Delta_{n,l}}W_{\lambda} - \sum_{\lambda\in\Delta_{n,l-1}}W_{\lambda}\right)^2\Big|\mathcal{F}_{l-1}\right)\right]=O(p^3n^3m+p^4n^4m).$$
This leads to a bound for (\ref{eq:var-var-W}), which is
$$8t_2^4\frac{\Var\left[\sum_{l=1}^n\mathbb{E}\left(\left(\sum_{\lambda\in\Delta_{n,l}}W_{\lambda} - \sum_{\lambda\in\Delta_{n,l-1}}W_{\lambda}\right)^2\Big|\mathcal{F}_{l-1}\right)\right]}{\left(\frac{m(n-1)(n-2)}{2}\sqrt{\frac{3p^2(1-p)^2}{f(n,m)}}\right)^4}=O\left(\frac{1}{pmn}+\frac{1}{m}\right).$$
Hence, by combining the bounds for (\ref{eq:var-var-Y}) and (\ref{eq:var-var-W}), as long as $pmn\rightarrow\infty$ and $m\rightarrow\infty$, we have
$$\Var\left[\sum_{l=1}^n\mathbb{E}((S_{n,l}-S_{n,l-1})^2|\mathcal{F}_{l-1})\right]=o(1).$$
Together with (\ref{eq:expect}), we obtain that
$$\sum_{l=1}^n\mathbb{E}((S_{n,l}-S_{n,l-1})^2|\mathcal{F}_{l-1})=1+o_P(1).$$

\paragraph{Checking Lindeberg's condition.}

Now we start to establish Lindeberg's condition. We first reduce it to a fourth moment condition.
\begin{eqnarray}
\nonumber && \sum_{l=1}^n\mathbb{E}\left((S_{n,l}-S_{n,l-1})^2\mathbb{I}_{\{|S_{n,l}-S_{n,l-1}|>\epsilon\}}\Big|\mathcal{F}_{l-1}\right) \\
\nonumber &\leq& \sum_{l=1}^n\sqrt{\mathbb{E}\left((S_{n,l}-S_{n,l-1})^4\Big|\mathcal{F}_{l-1}\right)}\sqrt{\mathbb{P}\left(|S_{n,l}-S_{n,l-1}|>\epsilon\Big|\mathcal{F}_{l-1}\right)} \\
\nonumber &\leq& \epsilon^{-2}\sum_{l=1}^n\mathbb{E}\left((S_{n,l}-S_{n,l-1})^4\Big|\mathcal{F}_{l-1}\right) \\
\nonumber &\leq& 8\epsilon^{-2}t_1^4\sum_{l=1}^n\mathbb{E}\left(\left(\frac{\sum_{\lambda\in\Delta_{n,l}}Y_{\lambda}-\sum_{\lambda\in\Delta_{n,l-1}}Y_{\lambda}}{\frac{m(n-1)(n-2)}{2}\sqrt{\frac{p^3(1-p)^3}{f(n,m)}}}\right)^4\Big|\mathcal{F}_{l-1}\right) \\
\nonumber && + 8\epsilon^{-2}t_2^4\sum_{l=1}^n\mathbb{E}\left(\left(\frac{\sum_{\lambda\in\Delta_{n,l}}W_{\lambda}-\sum_{\lambda\in\Delta_{n,l-1}}W_{\lambda}}{\frac{m(n-1)(n-2)}{2}\sqrt{\frac{3p^2(1-p)^2}{f(n,m)}}}\right)^4\Big|\mathcal{F}_{l-1}\right).
\end{eqnarray}
Thus, it is sufficient to show that
\begin{equation}
\sum_{l=1}^n\mathbb{E}\left(\frac{\sum_{\lambda\in\Delta_{n,l}}Y_{\lambda}-\sum_{\lambda\in\Delta_{n,l-1}}Y_{\lambda}}{\frac{m(n-1)(n-2)}{2}\sqrt{\frac{p^3(1-p)^3}{f(n,m)}}}\right)^4=o(1),\label{eq:fourth-Y} 
\end{equation}
and
\begin{equation}
\sum_{l=1}^n\mathbb{E}\left(\frac{\sum_{\lambda\in\Delta_{n,l}}W_{\lambda}-\sum_{\lambda\in\Delta_{n,l-1}}W_{\lambda}}{\frac{m(n-1)(n-2)}{2}\sqrt{\frac{3p^2(1-p)^2}{f(n,m)}}}\right)^4=o(1).\label{eq:fourth-W} 
\end{equation}
We further decompose the left hand side of (\ref{eq:fourth-Y}).
By (\ref{eq:Y-case}), we get
\begin{eqnarray*}
&& \sum_{l=1}^n\mathbb{E}\left(\sum_{\lambda\in\Delta_{n,l}}Y_{\lambda} - \sum_{\lambda\in\Delta_{n,l-1}}Y_{\lambda}\right)^4 \\
&=& \sum_{l=1}^m\mathbb{E}\left(3\sum_{1\leq i<j\leq l-1}Y_{(i,j,l)}\right)^4 + \mathbb{E}\left(2\sum_{1\leq i<j\leq m}Y_{(i,j,m+1)}\right)^4 \\
&& + \sum_{l=m+2}^n\mathbb{E}\left(\sum_{i=1}^m\sum_{j=m+1}^{l-1}Y_{(i,j,l)} + 2\sum_{1\leq i<j\leq m}Y_{(i,j,l)} \right)^4 \\
&\leq&  81\sum_{l=1}^m\mathbb{E}\left(\sum_{1\leq i<j\leq l-1}Y_{(i,j,l)}\right)^4 + 16\mathbb{E}\left(\sum_{1\leq i<j\leq m}Y_{(i,j,m+1)}\right)^4 \\
&& + 8\sum_{l=m+2}^n\mathbb{E}\left(\sum_{i=1}^m\sum_{j=m+1}^{l-1}Y_{(i,j,l)}\right)^4 + 128\sum_{l=m+2}^n\mathbb{E}\left(\sum_{1\leq i<j\leq m}Y_{(i,j,l)} \right)^4.
\end{eqnarray*}
We will give bounds to the four terms above, respectively. For any pair $\tau=(i,j)$, we use the notation $(\tau,l)$ to denote the triple $(i,j,l)$. In order to deal with fourth moments, we need to study $M_{(\tau_1,\tau_2,\tau_3,\tau_4,l)}=\mathbb{E}Y_{(\tau_1,l)}Y_{(\tau_2,l)}Y_{(\tau_3,l)}Y_{(\tau_4,l)}$ for various situations of $\tau_1,\tau_2,\tau_3,\tau_4$. There are several different scenarios to consider. Note that the relations $=$ and $\neq$ are understood as relations for sets.
\begin{enumerate}
\item When $\tau_1=\tau_2=\tau_3=\tau_4$,
\begin{equation}
M_{(\tau_1,\tau_2,\tau_3,\tau_4,l)} = (p(1-p)^4+(1-p)p^4)^3.\label{eq:M1}
\end{equation}
\item When $\tau_1=\tau_2\neq \tau_3=\tau_4$,
\begin{equation}
M_{(\tau_1,\tau_2,\tau_3,\tau_4,l)} = \begin{cases}
p^6(1-p)^6, & \tau_1\cap\tau_3=\varnothing, \\
p^4(1-p)^4(p(1-p)^4+(1-p)p^4), & |\tau_1\cap\tau_3|=1.
\end{cases}\label{eq:M2}
\end{equation}
\item When $\tau_1\neq\tau_2=\tau_3=\tau_4$,
\begin{equation}
M_{(\tau_1,\tau_2,\tau_3,\tau_4,l)} = 0.\label{eq:M3}
\end{equation}
\item When $\tau_1\neq\tau_2\neq\tau_3\neq \tau_1=\tau_4$,
\begin{equation}
M_{(\tau_1,\tau_2,\tau_3,\tau_4,l)} = 0.\label{eq:M4}
\end{equation}
\item When $\tau_1$, $\tau_2$, $\tau_3$, $\tau_4$ are four different edges,
\begin{equation}
M_{(\tau_1,\tau_2,\tau_3,\tau_4,l)} = 0.\label{eq:M5}
\end{equation}
\end{enumerate}
Equipped with the moments calculations (\ref{eq:M1})-(\ref{eq:M5}), we are ready to proceed. Define the set
$$\mathcal{T}_l=\{(i,j):1\leq i<j\leq l\}.$$
For the first term,
\begin{eqnarray*}
&& \sum_{l=1}^m\mathbb{E}\left(\sum_{1\leq i<j\leq l-1}Y_{(i,j,l)}\right)^4 \\
&=& \sum_{l=1}^m\sum_{\tau_1\in\mathcal{T}_{l-1}}\sum_{\tau_2\in\mathcal{T}_{l-1}}\sum_{\tau_3\in\mathcal{T}_{l-1}}\sum_{\tau_4\in\mathcal{T}_{l-1}}M_{(\tau_1,\tau_2,\tau_3,\tau_4,l)} \\
&=& \sum_{l=1}^m\sum_{\tau\in\mathcal{T}_{l-1}}M_{(\tau,\tau,\tau,\tau,l)} + {4\choose 2}\sum_{l=1}^m\sum_{\{\tau_1,\tau_2\in\mathcal{T}_{l-1}: \tau_1\neq\tau_2\}}M_{(\tau_1,\tau_1,\tau_2,\tau_2,l)} \\
&\leq& \sum_{l=1}^m\sum_{\tau\in\mathcal{T}_{l-1}}(p(1-p)^4+(1-p)p^4)^3 + 6\sum_{l=1}^m\sum_{\tau_1\in\mathcal{T}_{l-1}}\sum_{\{\tau_2\in\mathcal{T}_{l-1}:\tau_2\cap\tau_1=\varnothing\}}p^6(1-p)^6 \\
&& + 6\sum_{l=1}^m\sum_{(i,j)\in\mathcal{T}_{l-1}}\sum_{\{i'\in[l-1]:i'\neq i,j\}}p^4(1-p)^4(p(1-p)^4+(1-p)p^4) \\
&& + 6\sum_{l=1}^m\sum_{(i,j)\in\mathcal{T}_{l-1}}\sum_{\{j'\in[l-1]:j'\neq i,j\}}p^4(1-p)^4(p(1-p)^4+(1-p)p^4) \\
&=& O\left(m^3p^3 + m^5p^6 + m^4p^5\right).
\end{eqnarray*}
The second term is obviously of the smaller order of the first term. For the third term, define
$$\mathcal{T}_{m,l}=\{(i,j):i\in[m], j\in[l]\backslash[m]\},$$
and then
\begin{eqnarray*}
&& \sum_{l=m+2}^n\mathbb{E}\left(\sum_{i=1}^m\sum_{j=m+1}^{l-1}Y_{(i,j,l)}\right)^4 \\
&=& \sum_{l=m+2}^n\sum_{\tau\in\mathcal{T}_{m,l-1}}M_{(\tau,\tau,\tau,\tau,l)} + {4\choose 2}\sum_{l=m+2}^n\sum_{\{\tau_1,\tau_2\in\mathcal{T}_{m,l-1}: \tau_1\neq\tau_2\}}M_{(\tau_1,\tau_1,\tau_2,\tau_2,l)} \\
&\leq& \sum_{l=m+2}^n\sum_{\tau\in\mathcal{T}_{m,l-1}}(p(1-p)^4+(1-p)p^4)^3 + 6\sum_{l=m+2}^n\sum_{\tau_1\in\mathcal{T}_{m,l-1}}\sum_{\{\tau_2\in\mathcal{T}_{m,l-1}:\tau_2\cap\tau_1=\varnothing\}}p^6(1-p)^6 \\
&& + 6\sum_{l=m+2}^n\sum_{(i,j)\in\mathcal{T}_{m,l-1}}\sum_{\{i'\in[m]:i'\neq i,j\}}p^4(1-p)^4(p(1-p)^4+(1-p)p^4) \\
&& + 6\sum_{l=m+2}^n\sum_{(i,j)\in\mathcal{T}_{m,l-1}}\sum_{\{j'\in[l-1]:j'\neq i,j\}}p^4(1-p)^4(p(1-p)^4+(1-p)p^4) \\
&=& O\left(n^2mp^3 + n^3m^2p^6 + n^3mp^5\right).
\end{eqnarray*}
For the fourth term, similar to the previous calculation, we get
\begin{eqnarray*}
&& \sum_{l=m+2}^n\mathbb{E}\left(\sum_{1\leq i<j\leq m}Y_{(i,j,l)} \right)^4 \\
&\leq&  n\sum_{\tau\in\mathcal{T}_{m}}(p(1-p)^4+(1-p)p^4)^3 + 6n\sum_{\tau_1\in\mathcal{T}_{m}}\sum_{\{\tau_2\in\mathcal{T}_{m}:\tau_2\cap\tau_1=\varnothing\}}p^6(1-p)^6 \\
&& + 6n\sum_{(i,j)\in\mathcal{T}_{m}}\sum_{\{i'\in[m]:i'\neq i,j\}}p^4(1-p)^4(p(1-p)^4+(1-p)p^4) \\
&& + 6n\sum_{(i,j)\in\mathcal{T}_{m}}\sum_{\{j'\in[m]:j'\neq i,j\}}p^4(1-p)^4(p(1-p)^4+(1-p)p^4) \\
&=& O\left(nm^2p^3+nm^4p^6+nm^3p^5\right).
\end{eqnarray*}
Combining the bounds above, we get
$$\sum_{l=1}^n\mathbb{E}\left(\sum_{\lambda\in\Delta_{n,l}}Y_{\lambda} - \sum_{\lambda\in\Delta_{n,l-1}}Y_{\lambda}\right)^4=O(n^2mp^3 + n^3m^2p^6 + n^3mp^5),$$
which directly implies
$$\sum_{l=1}^n\mathbb{E}\left(\frac{\sum_{\lambda\in\Delta_{n,l}}Y_{\lambda}-\sum_{\lambda\in\Delta_{n,l-1}}Y_{\lambda}}{\frac{m(n-1)(n-2)}{2}\sqrt{\frac{p^3(1-p)^3}{f(n,m)}}}\right)^4=O\left(\frac{1}{p^3mn^2}+\frac{1}{n}+\frac{1}{pmn}\right).$$
Therefore, (\ref{eq:fourth-Y}) holds as long as $p^3mn^2\rightarrow\infty$.

To show (\ref{eq:fourth-W}), we decompose the left hand side of (\ref{eq:fourth-W}). By (\ref{eq:W-case}), we get
\begin{eqnarray*}
&& \sum_{l=1}^n\mathbb{E}\left(\sum_{\lambda\in\Delta_{n,l}}W_{\lambda} - \sum_{\lambda\in\Delta_{n,l-1}}W_{\lambda}\right)^4 \\
&=& \sum_{l=1}^m\mathbb{E}\left(3\sum_{1\leq i<j\leq l-1}W_{(i,j,l)}\right)^4 + \mathbb{E}\left(2\sum_{1\leq i<j\leq m}W_{(i,j,m+1)}\right)^4 \\
&& + \sum_{l=m+2}^n\mathbb{E}\left(\sum_{i=1}^m\sum_{j=m+1}^{l-1}W_{(i,j,l)} + 2\sum_{1\leq i<j\leq m}W_{(i,j,l)} \right)^4 \\
&\leq&  81\sum_{l=1}^m\mathbb{E}\left(\sum_{1\leq i<j\leq l-1}W_{(i,j,l)}\right)^4 + 16\mathbb{E}\left(\sum_{1\leq i<j\leq m}W_{(i,j,m+1)}\right)^4 \\
&& + 8\sum_{l=m+2}^n\mathbb{E}\left(\sum_{i=1}^m\sum_{j=m+1}^{l-1}W_{(i,j,l)}\right)^4 + 128\sum_{l=m+2}^n\mathbb{E}\left(\sum_{1\leq i<j\leq m}W_{(i,j,l)} \right)^4.
\end{eqnarray*}
We will give bounds to the four terms above, respectively. In order to deal with fourth moments, we need to study $N_{(\tau_1,\tau_2,\tau_3,\tau_4,l)}=\mathbb{E}W_{(\tau_1,l)}W_{(\tau_2,l)}W_{(\tau_3,l)}W_{(\tau_4,l)}$ for various situations of $\tau_1,\tau_2,\tau_3,\tau_4$. There are several different scenarios to consider. 
\begin{enumerate}
\item When $\tau_1=\tau_2=\tau_3=\tau_4$,
\begin{equation}
N_{(\tau_1,\tau_2,\tau_3,\tau_4,l)} \leq 54(p(1-p)^4+(1-p)p^4)^2.\label{eq:N1}
\end{equation}
\item When $\tau_1=\tau_2\neq \tau_3=\tau_4$,
\begin{equation}
N_{(\tau_1,\tau_2,\tau_3,\tau_4,l)} \leq \begin{cases}
9p^4(1-p)^4, & \tau_1\cap\tau_3=\varnothing, \\
10[p(1-p)^4+(1-p)p^4]p^2(1-p)^2, & |\tau_1\cap\tau_3|=1.
\end{cases}\label{eq:N2}
\end{equation}
\item When $\tau_1\neq\tau_2=\tau_3=\tau_4$,
\begin{equation}
N_{(\tau_1,\tau_2,\tau_3,\tau_4,l)} = 0.\label{eq:N3}
\end{equation}
\item When $\tau_1\neq\tau_2\neq\tau_3\neq \tau_1=\tau_4$,
\begin{equation}
N_{(\tau_1,\tau_2,\tau_3,\tau_4,l)} \leq\begin{cases}
p^3(1-p)^3(1+2p), & (\tau_1,\tau_2,\tau_3)\text{ form a triangle},\\
0, & \text{otherwise}.
\end{cases}\label{eq:N4}
\end{equation}
\item When $\tau_1$, $\tau_2$, $\tau_3$, $\tau_4$ are four different edges,
\begin{equation}
N_{(\tau_1,\tau_2,\tau_3,\tau_4,l)} = \begin{cases}
p^4(1-p)^4, & (\tau_1,\tau_2,\tau_3,\tau_4)\text{ form a square}, \\
0, & \text{otherwise}.
\end{cases}\label{eq:N5}
\end{equation}
\end{enumerate}
Equipped with the moments calculations (\ref{eq:N1})-(\ref{eq:N5}), we are ready to proceed. Recall the definitions of $\mathcal{T}_l$ and $\mathcal{T}_{m,l}$. For the first term,
\begin{eqnarray*}
&& \sum_{l=1}^m\mathbb{E}\left(\sum_{1\leq i<j\leq l-1}W_{(i,j,l)}\right)^4 \\
&=& \sum_{l=1}^m\sum_{\tau_1\in\mathcal{T}_{l-1}}\sum_{\tau_2\in\mathcal{T}_{l-1}}\sum_{\tau_3\in\mathcal{T}_{l-1}}\sum_{\tau_4\in\mathcal{T}_{l-1}}N_{(\tau_1,\tau_2,\tau_3,\tau_4,l)} \\
&\leq& \sum_{l=1}^m\sum_{\tau\in\mathcal{T}_{l-1}}N_{(\tau,\tau,\tau,\tau,l)} + {4\choose 2}\sum_{l=1}^m\sum_{\{\tau_1,\tau_2\in\mathcal{T}_{l-1}: \tau_1\neq\tau_2\}}N_{(\tau_1,\tau_1,\tau_2,\tau_2,l)} \\
&& + 24\sum_{l=1}^m\sum_{1\leq i<j<k\leq l-1}N_{((i,j),(j,k),(i,k),(i,j),l)} + 24\sum_{l=1}^m\sum_{1\leq i<j<k<h\leq l-1}N_{((i,j),(j,k),(k,h),(i,h),l)}\\
&\leq& \sum_{l=1}^m\sum_{\tau\in\mathcal{T}_{l-1}}54(p(1-p)^4+(1-p)p^4)^2  + 6\sum_{l=1}^m\sum_{\tau_1\in\mathcal{T}_{l-1}}\sum_{\{\tau_2\in\mathcal{T}_{l-1}:\tau_2\cap\tau_1=\varnothing\}}9p^4(1-p)^4 \\
&& + 6\sum_{l=1}^m\sum_{(i,j)\in\mathcal{T}_{l-1}}\sum_{\{i'\in[l-1]:i'\neq i,j\}}10[p(1-p)^4+(1-p)p^4]p^2(1-p)^2 \\
&& + 6\sum_{l=1}^m\sum_{(i,j)\in\mathcal{T}_{l-1}}\sum_{\{j'\in[l-1]:j'\neq i,j\}}10[p(1-p)^4+(1-p)p^4]p^2(1-p)^2\\
&& + 24\sum_{l=1}^m\sum_{1\leq i<j<k\leq l-1}p^3(1-p)^3(1+2p) + 24\sum_{l=1}^m\sum_{1\leq i<j<k<h\leq l-1}p^4(1-p)^4 \\
&=& O\left(m^3p^2 + m^5p^4 + m^4p^3\right).
\end{eqnarray*}
The second term is obviously of the smaller order of the first term. For the third term,
\begin{eqnarray*}
&& \sum_{l=m+2}^n\mathbb{E}\left(\sum_{i=1}^m\sum_{j=m+1}^{l-1}W_{(i,j,l)}\right)^4 \\
&\leq& \sum_{l=m+2}^n\sum_{\tau\in\mathcal{T}_{m,l-1}}N_{(\tau,\tau,\tau,\tau,l)} + {4\choose 2}\sum_{l=m+2}^n\sum_{\{\tau_1,\tau_2\in\mathcal{T}_{m,l-1}: \tau_1\neq\tau_2\}}N_{(\tau_1,\tau_1,\tau_2,\tau_2,l)} \\
&\leq& \sum_{l=m+2}^n\sum_{\tau\in\mathcal{T}_{m,l-1}}54(p(1-p)^4+(1-p)p^4)^2 + 6\sum_{l=m+2}^n\sum_{\tau_1\in\mathcal{T}_{m,l-1}}\sum_{\{\tau_2\in\mathcal{T}_{m,l-1}:\tau_2\cap\tau_1=\varnothing\}}9p^4(1-p)^4 \\
&& + 6\sum_{l=m+2}^n\sum_{(i,j)\in\mathcal{T}_{m,l-1}}\sum_{\{i'\in[m]:i'\neq i,j\}}10[p(1-p)^4+(1-p)p^4]p^2(1-p)^2 \\
&& + 6\sum_{l=m+2}^n\sum_{(i,j)\in\mathcal{T}_{m,l-1}}\sum_{\{j'\in[l-1]:j'\neq i,j\}}10[p(1-p)^4+(1-p)p^4]p^2(1-p)^2 \\
&=& O(n^2mp^2+n^3m^2p^4+n^3mp^3).
\end{eqnarray*}
For the fourth term, similar to the previous calculation, we get
\begin{eqnarray*}
&& \sum_{l=m+2}^n\mathbb{E}\left(\sum_{1\leq i<j\leq m}W_{(i,j,l)} \right)^4 \\
&\leq&  n\sum_{\tau\in\mathcal{T}_{m}}54(p(1-p)^4+(1-p)p^4)^2  + 6n\sum_{\tau_1\in\mathcal{T}_{m}}\sum_{\{\tau_2\in\mathcal{T}_{m}:\tau_2\cap\tau_1=\varnothing\}}9p^4(1-p)^4 \\
&& + 6n\sum_{(i,j)\in\mathcal{T}_{m}}\sum_{\{i'\in[m]:i'\neq i,j\}}10[p(1-p)^4+(1-p)p^4]p^2(1-p)^2 \\
&& + 6n\sum_{(i,j)\in\mathcal{T}_{m}}\sum_{\{j'\in[m]:j'\neq i,j\}}10[p(1-p)^4+(1-p)p^4]p^2(1-p)^2\\
&& + 24n\sum_{1\leq i<j<k\leq m}p^3(1-p)^3(1+2p) + 24n\sum_{1\leq i<j<k<h\leq m}p^4(1-p)^4 \\
&=& O\left(nm^2p^2+nm^4p^4+nm^3p^3\right).
\end{eqnarray*}
Combining the bounds above, we get
$$\sum_{l=1}^n\mathbb{E}\left(\sum_{\lambda\in\Delta_{n,l}}W_{\lambda} - \sum_{\lambda\in\Delta_{n,l-1}}W_{\lambda}\right)^4=O(n^2mp^2+n^3m^2p^4+n^3mp^3),$$
which directly implies
$$\sum_{l=1}^n\mathbb{E}\left(\frac{\sum_{\lambda\in\Delta_{n,l}}W_{\lambda}-\sum_{\lambda\in\Delta_{n,l-1}}W_{\lambda}}{\frac{m(n-1)(n-2)}{2}\sqrt{\frac{3p^2(1-p)^2}{f(n,m)}}}\right)^4=O\left(\frac{1}{p^2mn^2}+\frac{1}{n}+\frac{1}{pmn}\right).$$
Therefore, (\ref{eq:fourth-W}) also holds when $p^3mn^2\rightarrow\infty$. The conclusions of (\ref{eq:fourth-Y}) and (\ref{eq:fourth-W}) imply that Lindeberg's condition holds. Thus, we have derived the desired asymptotic distribution conditioning on $\mathcal{M}=[m]$ by applying Theorem \ref{thm:mds}. It is easy to see that the conditional limit holds for any realization of $\mathcal{M}$. Therefore, we have proved Theorem \ref{thm:main}. 
\end{proof}

Now we are ready to give proofs of Theorem \ref{thm:vanilla}, Corollary \ref{cor:chi}, Theorem \ref{thm:sample-node} and Theorem \ref{thm:sample-triple}.

\begin{proof}[Proof of Theorem \ref{thm:vanilla}]
When $m=n$, Theorem \ref{thm:main} reduces to
$$\begin{pmatrix}
\sqrt{\frac{{n\choose 3}}{p^3(1-p)^3}}R_3 \\
\sqrt{\frac{{n\choose 3}}{3p^2(1-p)^2}} R_2
\end{pmatrix}\leadsto N\left(\begin{pmatrix}
0 \\
0
\end{pmatrix},\begin{pmatrix}
1 & 0 \\
0 & 1
\end{pmatrix}\right),$$
when $(1-p)^{-1}=O(1)$ and $np\rightarrow\infty$. Note the representations (\ref{eq:T1-rep}), (\ref{eq:T2-rep}) that will be stated in the proof of Proposition \ref{prop:corr}. Then, (\ref{eq:T1-rep}), (\ref{eq:T2-rep}) and Slutsky's theorem imply the desired result.
\end{proof}

\begin{proof}[Proof of Corollary \ref{cor:chi}]
This is a direct implication of Theorem \ref{thm:vanilla} by applying Slutsky's theorem.
\end{proof}

\begin{proof}[Proof of Theorem \ref{thm:type-1}]
It is sufficient to prove $T^2=O_P(1)$ under the null distribution. By the definition of $T^2$ and the fact that $(\hat{p}-p)^2=O_P(pn^{-2})$, it suffices to prove $n^3T_3^2/p^3=O_P(1)$ and $n^3T_2^2/p^2=O_P(1)$ under the condition $n^2p\rightarrow\infty$, respectively. In view of the representations (\ref{eq:T1-rep}) and (\ref{eq:T2-rep}), we only need to show
\begin{equation}
\frac{n^3R_3^2}{p^3}=O_P(1),\quad \frac{n^3R_2^2}{p^2}=O_P(1),\quad\text{and}\quad \frac{n^3(\hat{p}-p)^4}{p^2}=O_P(1).\label{eq:easyV}
\end{equation}
Using (\ref{eq:cross-m-Y}) and (\ref{eq:cross-m-W}), it is easy to check that (\ref{eq:easyV}) holds.
\end{proof}

\begin{proof}[Proof of Theorem \ref{thm:sample-node}]
Similar to (\ref{eq:T1-rep}) and (\ref{eq:T2-rep}), we have
\begin{eqnarray*}
T_3^{\mathcal{M}} &=& -R_3^{\mathcal{M}}-pR_2^{\mathcal{M}}+(\hat{p}^{\mathcal{M}}-p)^2(2p+\hat{p}^{\mathcal{M}}), \\
T_2^{\mathcal{M}} &=& 3R_3^{\mathcal{M}}-(1-3p)R_2^{\mathcal{M}}+3(1-2p-\hat{p}^{\mathcal{M}})(\hat{p}^{\mathcal{M}}-p)^2.
\end{eqnarray*}
Since $\mathbb{E}(\hat{p}^{\mathcal{M}}-p)^2=O\left(\frac{p}{mn}\right)$, the third terms in the expansions of both $T_3^{\mathcal{M}}$ and $T_2^{\mathcal{M}}$ are negligible when $p^3mn^2\rightarrow\infty$ and $m\rightarrow\infty$. Therefore, Slutsky's theorem implies the desired result.
\end{proof}

\begin{proof}[Proof of Theorem \ref{thm:sample-triple}]
We first observe the decomposition
\begin{eqnarray*}
T_3^{\Delta} &=& -R_3^{\Delta}-pR_2^{\Delta}+(\hat{p}^{\Delta}-p)^2(2p+\hat{p}^{\Delta}), \\
T_2^{\Delta} &=& 3R_3^{\Delta}-(1-3p)R_2^{\Delta}+3(1-2p-\hat{p}^{\Delta})(\hat{p}^{\Delta}-p)^2.
\end{eqnarray*}
The variance has decomposition
\begin{eqnarray*}
\mathbb{E}(\hat{p}^{\Delta}-p)^2 &=& \mathbb{E}\Var(\hat{p}^{\Delta}|A)+\Var(\mathbb{E}(\hat{p}^{\Delta}|A)), \\
\Var(R_3^{\Delta}) &=& \mathbb{E}\Var(R_3^{\Delta}|A)+\Var(\mathbb{E}(R_3^{\Delta}|A)), \\
\Var(R_2^{\Delta}) &=& \mathbb{E}\Var(R_2^{\Delta}|A)+\Var(\mathbb{E}(R_2^{\Delta}|A)).
\end{eqnarray*}
Note that $\mathbb{E}(\hat{p}^{\Delta}|A)=\hat{p}$, $\mathbb{E}(R_3^{\Delta}|A)=R_3$ and $\mathbb{E}(R_2^{\Delta}|A)=R_2$. Thus,
$$\Var(\mathbb{E}(\hat{p}^{\Delta}|A))=O\left(\frac{p}{n^2}\right),\quad \Var(\mathbb{E}(R_3^{\Delta}|A))=\frac{p^3(1-p)^3}{{n\choose 3}},\quad \Var(\mathbb{E}(R_2^{\Delta}|A))=\frac{3p^2(1-p)^2}{{n\choose 3}}.$$
Define i.i.d. random variables $\lambda_1,...,\lambda_{|\Delta|}$, each of which takes an element of $\{(i,j,k):1\leq i<j<k\leq n\}$ with probability $1/{n\choose 3}$. Then, we can write
$$\hat{p}^{\Delta}=\frac{1}{|\Delta|}\sum_{h=1}^{|\Delta|}E_{\lambda_h},\quad R_3^{\Delta}=\frac{1}{|\Delta|}\sum_{h=1}^{|\Delta|}Y_{\lambda_h},\quad R_2^{\Delta}=\frac{1}{|\Delta|}\sum_{h=1}^{|\Delta|}W_{\lambda_h},$$
where $E_{\lambda}=(A_{ij}+A_{ik}+A_{jk})/3$ for $\lambda=(i,j,k)$, and the definitions of $Y_{\lambda}$ and $W_{\lambda}$ are given by (\ref{eq:def-Y}) and (\ref{eq:def-W}). We calculate expected the conditional variance.
\begin{eqnarray*}
\mathbb{E}\Var(E_{\lambda_h}|A) &=& \mathbb{E}\left(\mathbb{E}(E_{\lambda_h}^2|A) - \mathbb{E}(E_{\lambda_h}|A)^2\right) \\
&=& O(p),
\end{eqnarray*}
\begin{eqnarray*}
\mathbb{E}\Var(Y_{\lambda_h}|A) &=& \mathbb{E}\left(\mathbb{E}(Y_{\lambda_h}^2|A) - \mathbb{E}(Y_{\lambda_h}|A)^2\right) \\
&=& \mathbb{E}\left(\frac{1}{{n\choose 3}}\sum_{1\leq i<j<k\leq n}Y_{(i,j,k)}^2-\left(\frac{1}{{n\choose 3}}\sum_{1\leq i<j<k\leq n}Y_{(i,j,k)}\right)^2\right) \\
&=& p^3(1-p)^3\left(1-\frac{1}{{n\choose 3}}\right),
\end{eqnarray*}
where the last equality is due to (\ref{eq:cross-m-Y}). The term $\mathbb{E}\Var(W_{\lambda_h}|A)$ can be calculated in a similar way suing (\ref{eq:cross-m-W}).
\begin{eqnarray*}
\mathbb{E}\Var(W_{\lambda_h}|A) &=& \mathbb{E}\left(\mathbb{E}(W_{\lambda_h}^2|A) - \mathbb{E}(W_{\lambda_h}|A)^2\right) \\
&=& \mathbb{E}\left(\frac{1}{{n\choose 3}}\sum_{1\leq i<j<k\leq n}W_{(i,j,k)}^2-\left(\frac{1}{{n\choose 3}}\sum_{1\leq i<j<k\leq n}W_{(i,j,k)}\right)^2\right) \\
&=& 3p^2(1-p)^2\left(1-\frac{1}{{n\choose 3}}\right).
\end{eqnarray*}
By the variance decomposition above, we have
\begin{eqnarray*}
\mathbb{E}(\hat{p}^{\Delta}-p)^2 &=& O\left(\frac{p}{n^2}+\frac{p}{|\Delta|}\right), \\
\Var(R_3^{\Delta}) &=& (1+o(1))\left(\frac{1}{{n\choose 3}}+\frac{1}{|\Delta|}\right)p^3(1-p)^3, \\
\Var(R_2^{\Delta}) &=& (1+o(1))\left(\frac{1}{{n\choose 3}}+\frac{1}{|\Delta|}\right)3p^2(1-p)^2.
\end{eqnarray*}
Therefore, as long as $|\Delta|^2p\rightarrow\infty$, we have
\begin{eqnarray}
\label{eq:mercedes} \frac{T_3^{\Delta}}{-R_3^{\Delta}-pR_2^{\Delta}} &=& 1+o_P(1), \\
\label{eq:benz} \frac{T_2^{\Delta}}{3R_3^{\Delta}-(1-3p)R_2^{\Delta}} &=& 1+o_P(1).
\end{eqnarray}
Thus, it is sufficient to derive the joint asymptotic distribution of $(R_3^{\Delta},R_2^{\Delta})$. We first study the conditional distribution given $A$. Note that
$$\Var(Y_{\lambda_h}|A)=\frac{1}{{n\choose 3}}\sum_{1\leq i<j<k\leq n}Y_{(i,j,k)}^2-\left(\frac{1}{{n\choose 3}}\sum_{1\leq i<j<k\leq n}Y_{(i,j,k)}\right)^2.$$
We need to show that $\Var(Y_{\lambda_h}|A)$ concentrates on its expected value. We will use the following property
\begin{equation}
\mathbb{E}(Y_{\lambda}^2-p^3(1-p)^3)(Y_{\lambda'}^2-p^3(1-p)^3) = \begin{cases}
O(p^3), & \lambda=\lambda', \\
O(p^5), & |\lambda\cap\lambda'|=1, \\
0, & \text{otherwise}.
\end{cases}\label{eq:4cross-Y}
\end{equation}
Note
$$\mathbb{E}\left(\frac{1}{{n\choose 3}}\sum_{1\leq i<j<k\leq n}Y_{(i,j,k)}\right)^2=O\left(\frac{p^3}{n^3}\right),$$
and
\begin{eqnarray*}
&& \Var\left(\frac{1}{{n\choose 3}}\sum_{1\leq i<j<k\leq n}Y_{(i,j,k)}^2\right) \\
&=& \frac{1}{{n\choose 3}^2}\sum_{1\leq i<j<k\leq n}\sum_{1\leq i'<j'<k'\leq n}\mathbb{E}\left(Y_{(i,j,k)}^2-p^3(1-p)^3\right)\left(Y_{(i',j',k')}^2-p^3(1-p)^3\right) \\
&\leq& \frac{1}{{n\choose 3}^2}\sum_{1\leq i<j<k\leq n}\mathbb{E}\left(Y_{(i,j,k)}^2-p^3(1-p)^3\right)^2 \\
&& + \frac{2}{{n\choose 3}^2}\sum_{1\leq i<j<k\leq n}\sum_{\{1\in[n]:l\neq i,j,k\}}\mathbb{E}\left(Y_{(i,j,k)}^2-p^3(1-p)^3\right)\left(Y_{(i,j,l)}^2-p^3(1-p)^3\right) \\
&& + \frac{2}{{n\choose 3}^2}\sum_{1\leq i<j<k\leq n}\sum_{\{1\in[n]:l\neq i,j,k\}}\mathbb{E}\left(Y_{(i,j,k)}^2-p^3(1-p)^3\right)\left(Y_{(l,j,k)}^2-p^3(1-p)^3\right) \\
&& + \frac{2}{{n\choose 3}^2}\sum_{1\leq i<j<k\leq n}\sum_{\{1\in[n]:l\neq i,j,k\}}\mathbb{E}\left(Y_{(i,j,k)}^2-p^3(1-p)^3\right)\left(Y_{(i,l,k)}^2-p^3(1-p)^3\right) \\
&=& O\left(\frac{p^3}{n^3}+\frac{p^5}{n^2}\right),
\end{eqnarray*}
where the last equality is due to (\ref{eq:4cross-Y}). Compared with the order of $\mathbb{E}\Var(Y_{\lambda_h}|A)$, when $np\rightarrow\infty$, we have
$$\Var(Y_{\lambda_h}|A)=(1+o_P(1))p^3(1-p)^3.$$
Now we study
$$\Var(W_{\lambda_h}|A)=\frac{1}{{n\choose 3}}\sum_{1\leq i<j<k\leq n}W_{(i,j,k)}^2-\left(\frac{1}{{n\choose 3}}\sum_{1\leq i<j<k\leq n}W_{(i,j,k)}\right)^2.$$
We have the following property
\begin{equation}
\mathbb{E}(W_{\lambda}^2-3p^2(1-p)^2)(W_{\lambda'}^2-3p^2(1-p)^2) = \begin{cases}
O(p^2), & \lambda=\lambda', \\
O(p^3), & |\lambda\cap\lambda'|=1, \\
0, & \text{otherwise}.
\end{cases}\label{eq:4cross-W}
\end{equation}
Since
$$\mathbb{E}\left(\frac{1}{{n\choose 3}}\sum_{1\leq i<j<k\leq n}W_{(i,j,k)}\right)^2=O\left(\frac{p^2}{n^3}\right),$$
and
\begin{eqnarray*}
&& \Var\left(\frac{1}{{n\choose 3}}\sum_{1\leq i<j<k\leq n}W_{(i,j,k)}^2\right) \\
&=& \frac{1}{{n\choose 3}^2}\sum_{1\leq i<j<k\leq n}\sum_{1\leq i'<j'<k'\leq n}\mathbb{E}\left(W_{(i,j,k)}^2-3p^2(1-p)^2\right)\left(W_{(i',j',k')}^2-3p^2(1-p)^2\right) \\
&\leq& \frac{1}{{n\choose 3}^2}\sum_{1\leq i<j<k\leq n}\mathbb{E}\left(W_{(i,j,k)}^2-3p^2(1-p)^2\right)^2 \\
&& + \frac{2}{{n\choose 3}^2}\sum_{1\leq i<j<k\leq n}\sum_{\{1\in[n]:l\neq i,j,k\}}\mathbb{E}\left(W_{(i,j,k)}^2-3p^2(1-p)^2\right)\left(W_{(i,j,l)}^2-3p^2(1-p)^2\right) \\
&& + \frac{2}{{n\choose 3}^2}\sum_{1\leq i<j<k\leq n}\sum_{\{1\in[n]:l\neq i,j,k\}}\mathbb{E}\left(W_{(i,j,k)}^2-3p^2(1-p)^2\right)\left(W_{(l,j,k)}^2-3p^2(1-p)^2\right) \\
&& + \frac{2}{{n\choose 3}^2}\sum_{1\leq i<j<k\leq n}\sum_{\{1\in[n]:l\neq i,j,k\}}\mathbb{E}\left(W_{(i,j,k)}^2-3p^2(1-p)^2\right)\left(W_{(i,l,k)}^2-3p^2(1-p)^2\right) \\
&=& O\left(\frac{p^2}{n^3}+\frac{p^3}{n^2}\right),
\end{eqnarray*}
where the last equality is due to (\ref{eq:4cross-W}). When $np\rightarrow\infty$, we have
$$\Var(W_{\lambda_h}|A)=(1+o_P(1))3p^2(1-p)^2.$$
We also need to study the conditional covariance between $R_3^{\Delta}$ and $R_2^{\Delta}$. We need the following property
\begin{equation}
\mathbb{E}Y_{\lambda}W_{\lambda}Y_{\lambda'}W_{\lambda'} = \begin{cases}
O(p^3), & \lambda=\lambda',\\
O(p^5), & |\lambda\cap\lambda'|=1, \\
0, & \text{otherwise}.
\end{cases}\label{eq:cross-cross}
\end{equation}
Note that
\begin{eqnarray*}
&& \Cov(Y_{\lambda_h},W_{\lambda_h}|A) \\
&=& \mathbb{E}(Y_{\lambda_h}W_{\lambda_h}|A)-\mathbb{E}(Y_{\lambda_h}|A)\mathbb{E}(W_{\lambda_h}|A) \\
&=& \frac{1}{{n\choose 3}}\sum_{1\leq i<j<k\leq n}Y_{(i,j,k)}W_{(i,j,k)}-\left(\frac{1}{{n\choose 3}}\sum_{1\leq i<j<k\leq n}Y_{(i,j,k)}\right)\left(\frac{1}{{n\choose 3}}\sum_{1\leq i<j<k\leq n}W_{(i,j,k)}\right).
\end{eqnarray*}
Thus, $\mathbb{E}\Cov(Y_{\lambda_h},W_{\lambda_h}|A)=0$ in view of (\ref{eq:inde-shapes}). Since
$$\left(\frac{1}{{n\choose 3}}\sum_{1\leq i<j<k\leq n}Y_{(i,j,k)}\right)\left(\frac{1}{{n\choose 3}}\sum_{1\leq i<j<k\leq n}W_{(i,j,k)}\right)=O_P\left(\frac{p^{5/2}}{n^3}\right),$$
and
\begin{eqnarray*}
&& \Var\left(\frac{1}{{n\choose 3}}\sum_{1\leq i<j<k\leq n}Y_{(i,j,k)}W_{(i,j,k)}\right) \\
&=& \frac{1}{{n\choose 3}^2}\sum_{1\leq i<j<k\leq n}\sum_{1\leq i'<j'<k'\leq n}\mathbb{E}Y_{(i,j,k)}W_{(i,j,k)}Y_{(i',j',k')}W_{(i',j',k')} \\
&\leq& \frac{1}{{n\choose 3}^2}\sum_{1\leq i<j<k\leq n}\mathbb{E}Y_{(i,j,k)}^2W_{(i,j,k)}^2 \\
&& + \frac{2}{{n\choose 3}^2}\sum_{1\leq i<j<k\leq n}\sum_{\{1\in[n]:l\neq i,j,k\}}\mathbb{E}Y_{(i,j,k)}W_{(i,j,k)}Y_{(i,j,l)}W_{(i,j,l)} \\
&& + \frac{2}{{n\choose 3}^2}\sum_{1\leq i<j<k\leq n}\sum_{\{1\in[n]:l\neq i,j,k\}}\mathbb{E}Y_{(i,j,k)}W_{(i,j,k)}Y_{(i,l,k)}W_{(i,l,k)} \\
&& + \frac{2}{{n\choose 3}^2}\sum_{1\leq i<j<k\leq n}\sum_{\{1\in[n]:l\neq i,j,k\}}\mathbb{E}Y_{(i,j,k)}W_{(i,j,k)}Y_{(l,j,k)}W_{(l,j,k)} \\
&=& O\left(\frac{p^3}{n^3}+\frac{p^5}{n^2}\right).
\end{eqnarray*}
Therefore, $\Cov(Y_{\lambda_h},W_{\lambda_h}|A)=O_P\left(\sqrt{\frac{p^3}{n^3}+\frac{p^5}{n^2}}\right)$. We are ready to show that
\begin{equation}
t_1 \sqrt{\frac{{|\Delta|}}{p^3(1-p)^3}}(R_3^{\Delta}-R_3) + t_2 \sqrt{\frac{{|\Delta|}}{3p^2(1-p)^2}} (R_2^{\Delta}-R_2)\Bigg|A \leadsto N(0,1), \label{eq:to-prove-Delta}
\end{equation}
for any fixed $t_1$ and $t_2$ that satisfy $t_1^2+t_2^2=1$. With the previous calculations for $\Var(Y_{\lambda_h}|A)$, $\Var(W_{\lambda_h}|A)$ and $\Cov(Y_{\lambda_h},W_{\lambda_h}|A)$, we have
$$\Var\left(t_1 \sqrt{\frac{{|\Delta|}}{p^3(1-p)^3}}R_3^{\Delta} + t_2 \sqrt{\frac{{|\Delta|}}{3p^2(1-p)^2}} R_2^{\Delta}\Bigg|A\right)=1+o_P(1),$$
as long as $np\rightarrow\infty$. Therefore, in order that (\ref{eq:to-prove-Delta}) holds, it is sufficient to establish Lyapunov's condition,
\begin{eqnarray*}
&& \frac{1}{|\Delta|^2}\sum_{h=1}^{|\Delta|}\mathbb{E}\left(\frac{Y_{\lambda_h}}{p^{3/2}}+\frac{W_{\lambda_h}}{p}\right)^4 \\
&\leq& \frac{8\mathbb{E}|Y_{\lambda_h}|^4}{p^6|\Delta|} + \frac{8|W_{\lambda_h}|^4}{p^4|\Delta|} \\
&=& O\left(\frac{1}{|\Delta|p^3} + \frac{1}{|\Delta|p^2}\right) \\
&=& o(1),
\end{eqnarray*}
as long as $|\Delta|p^3\rightarrow\infty$. Therefore, (\ref{eq:to-prove-Delta}) holds under the conditions $(n^3+|\Delta|)p^3\rightarrow\infty$. Combining with the joint asymptotic distribution of $(R_3,R_2)$, we obtain the unconditional asymptotic distribution of $(R_3^{\Delta},R_2^{\Delta})$. The final result is a consequence of (\ref{eq:mercedes}) and (\ref{eq:benz}).
\end{proof}

\subsection{Proofs of power analysis}\label{sec:pf-power}

This section gives proofs of Theorem \ref{thm:SBM}, Theorem \ref{thm:SBM-k}, Theorem \ref{thm:configuration}, Theorem \ref{thm:latent}, Theorem \ref{thm:error-M} and Theorem \ref{thm:error-Delta}. We first state an auxiliary lemma. Recall the definition of $\hat{p}$ in (\ref{eq:p-hat}). We also define 
$$\bar{p}=\frac{1}{{n\choose 2}}\sum_{1\leq i<j\leq n}p_{ij},\quad \hat{q}=\frac{1}{{n\choose 3}}\sum_{1\leq i<j<k\leq n}A_{ij}A_{ik}A_{jk},\quad\bar{q}=\frac{1}{{n\choose 3}}\sum_{1\leq i<j<k\leq n}p_{ij}p_{ik}p_{jk}.$$
Consider the model where $A_{ij}\sim \text{Bernoulli}(p_{ij})$ independently for all $1\leq i<j\leq n$. The following lemma establishes error bounds for $\hat{p}^3$ and $\hat{q}$.

\begin{lemma}\label{lem:var-p-q}
We have
$$|\hat{p}^3-\bar{p}^3|=O_P\left(\frac{\rho^{5/2}}{n}+\frac{\rho^2}{n^2}+\frac{\rho^{3/2}}{n^3}\right),$$
and
$$|\hat{q}-\bar{q}|=O_P\left(\frac{\rho^{5/2}}{n}+\frac{\rho^{3/2}}{n^{3/2}}\right),$$
where $\rho=\max_{1\leq i<j\leq n}p_{ij}$.
\end{lemma}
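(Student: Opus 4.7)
The plan is to establish both bounds via standard second-moment computations followed by Chebyshev's inequality.

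\textbf{First bound.} Since the $A_{ij}$ are mutually independent Bernoullis, one has
\[
\Var(\hat p) = \binom{n}{2}^{-2}\sum_{i<j}p_{ij}(1-p_{ij}) = O\!\left(\frac{\rho}{n^2}\right),
\]
so $\hat p - \bar p = O_P(\sqrt{\rho}/n)$. I then factor
\[
\hat p^3 - \bar p^3 = (\hat p - \bar p)\bigl(\hat p^2 + \hat p\bar p + \bar p^2\bigr).
\]
Using $\bar p \le \rho$ together with $\hat p - \bar p = O_P(\sqrt\rho/n)$ gives $\hat p^2 + \hat p\bar p + \bar p^2 = O_P(\rho^2 + \rho/n^2)$. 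Multiplying through produces the rate $O_P(\rho^{5/2}/n + \rho^{3/2}/n^3)$, which already absorbs the middle term $\rho^2/n^2$ of the claimed bound by AM--GM.

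\textbf{Second bound.} I compute $\Var(\hat q)$ directly from
\[
\Var(\hat q) = \binom{n}{3}^{-2}\sum_{T,T'}\Cov\!\bigl(A_{ij}A_{ik}A_{jk},\;A_{i'j'}A_{i'k'}A_{j'k'}\bigr),
\]
and split the sum according to $|T\cap T'|\in\{0,1,2,3\}$ with $T=\{i,j,k\}$, $T'=\{i',j',k'\}$. When $|T\cap T'|\le 1$ the two triangles have disjoint edge sets, so independence of the $A_{ij}$ forces the covariance to be zero. The diagonal ($|T\cap T'|=3$) contributes $\binom{n}{3}$ terms each bounded by $p_{ij}p_{ik}p_{jk}=O(\rho^3)$. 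In the edge-sharing case ($|T\cap T'|=2$), the triangles share a single edge $(i,j)$, and a direct expansion using $A_{ij}^2=A_{ij}$ yields
\[
\Cov\!\bigl(A_{ij}A_{ik}A_{jk},\,A_{ij}A_{il}A_{jl}\bigr) = p_{ij}(1-p_{ij})p_{ik}p_{jk}p_{il}p_{jl} = O(\rho^5),
\]
and a combinatorial count shows there are $O(n^4)$ such ordered pairs. Summing gives
\[
\Var(\hat q) = O\!\left(\frac{\rho^3}{n^3} + \frac{\rho^5}{n^2}\right),
\]
whose square root is $O(\rho^{3/2}/n^{3/2} + \rho^{5/2}/n)$, and Chebyshev's inequality finishes the argument.

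\textbf{Main obstacle.} The only delicate step is the covariance bookkeeping for $\Var(\hat q)$: correctly identifying that only pairs of triples sharing at least two vertices can produce a nonzero covariance (because only then do they share an edge), and counting the edge-sharing pairs as $O(n^4)$ rather than the naive $O(n^5)$ one would get by choosing the two triples independently. Once this is in place, the rest is routine arithmetic.
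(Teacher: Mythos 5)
Your proof is correct, and the first bound is obtained exactly as in the paper (variance of $\hat p$ is $O(\rho/n^2)$, then expand the cube; whether one uses the binomial expansion of $(\hat p-\bar p+\bar p)^3$ or the factorization $\hat p^3-\bar p^3=(\hat p-\bar p)(\hat p^2+\hat p\bar p+\bar p^2)$ is immaterial, and your AM--GM remark about the middle term is fine). For the second bound you take a different, though equally valid, route to the same variance estimate $\Var(\hat q)=O(\rho^3/n^3+\rho^5/n^2)$: you expand $\Var(\hat q)$ as a double sum of covariances over pairs of triples and do casework on the overlap, correctly observing that only pairs sharing an edge (i.e.\ two vertices) contribute, that the diagonal gives $\binom{n}{3}\cdot O(\rho^3)$, and that the $O(n^4)$ edge-sharing pairs each contribute $p_{ij}(1-p_{ij})p_{ik}p_{jk}p_{il}p_{jl}=O(\rho^5)$. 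The paper instead decomposes each centered summand $A_{ij}A_{ik}A_{jk}-p_{ij}p_{ik}p_{jk}$ into mutually uncorrelated linear, quadratic, and cubic terms in the centered edges $A_{ij}-p_{ij}$ (a Hoeffding-type decomposition) and bounds the variance of each component separately; there the linear part produces the $\rho^5/n^2$ term and the cubic part the $\rho^3/n^3$ term. Your direct covariance bookkeeping is more elementary and self-contained; the paper's decomposition has the side benefit that the same components $\wt Y_\lambda$, $\wt W_\lambda$ are reused in the distributional results elsewhere in the paper. Either way the conclusion via Chebyshev is the same.
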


Lemma \ref{lem:var-p-q} will be proved in Section \ref{sec:pf-tech}.
Now we are ready to state the proofs of Theorem \ref{thm:SBM}, Theorem \ref{thm:SBM-k} and Theorem \ref{thm:configuration}.

\begin{proof}[Proof of Theorem \ref{thm:SBM}]
In order to show that power converges to $1$, it is sufficient to prove
\begin{equation}
\frac{n^3T_3^2}{(a+b)^3}\rightarrow\infty,\label{eq:suff-SBM}
\end{equation}
in probability under the alternative distribution. This is because
$$\mathbb{P}(T^2> C_{\alpha})\geq\mathbb{P}\left({n\choose 3}\frac{T_3^2}{\hat{p}^3(1-\hat{p})^3+3\hat{p}^4(1-\hat{p})^2}>C_{\alpha}\right),$$
which converges to $1$ if (\ref{eq:suff-SBM}) holds when $n^2(a+b)\rightarrow\infty$.
Note that
$$T_3^2\geq \frac{1}{2}(\bar{p}^3-\bar{q})^2-2(\hat{p}^3-\bar{p}^3)^2-2(\hat{q}-\bar{q})^2.$$
The orders of $(\hat{p}^3-\bar{p}^3)^2$ and $(\hat{q}-\bar{q})^2$ are controlled by $O_P\left(\frac{(a+b)^5}{n^2}+\frac{(a+b)^3}{n^3}\right)$ according to Lemma \ref{lem:var-p-q}. Now we study $(\bar{p}^3-\bar{q})^2$. Let $n_1$ and $n_2$ be the sizes of the two communities, respectively. Then, define $\gamma=n_1/n$ and $1-\gamma=n_2/n$. Define
$$\bar{q}(a,b)=[\gamma^3+(1-\gamma)^3]a^3+[3\gamma^2(1-\gamma)+3(1-\gamma)^2\gamma]ab^2,$$
and
$$\bar{p}(a,b)=a+2\gamma(1-\gamma)(b-a).$$
Then, we have the bounds
$$|\bar{p}(a,b)^3-\bar{p}^3|=O\left(\frac{(a+b)^3}{n}\right),\quad\text{and}\quad |\bar{q}(a,b)-\bar{q}|=O\left(\frac{(a+b)^3}{n}\right).$$
Finally, we study the difference $\bar{q}(a,b)-\bar{p}(a,b)^3$. Define $\delta(a,b)=\bar{q}(a,b)-\bar{p}^3(a,b)$.
Similarly, we can define $\bar{q}(b,a)$, $\bar{p}^3(b,a)$ and $\delta(b,a)$. Note that
$$\bar{q}(a,b)+\bar{q}(b,a)=[b+\gamma(a-b)]^3+[a+\gamma(b-a)]^3,$$
$$\bar{q}(a,b)-\bar{q}(b,a)=[\gamma(a+b)-b]^3+[a-\gamma(a+b)]^3,$$
$$\bar{p}^3(a,b)+\bar{p}^3(b,a)=[a+2\gamma(1-\gamma)(b-a)]^3+[b+2\gamma(1-\gamma)(a-b)]^3,$$
$$\bar{p}^3(a,b)-\bar{p}^3(b,a)=[a+2\gamma(1-\gamma)(b-a)]^3-[b+2\gamma(1-\gamma)(a-b)]^3.$$
Therefore,
\begin{eqnarray*}
&& \delta(a,b) + \delta(b,a) \\
&=& \bar{q}(a,b)+\bar{q}(b,a) - [\bar{p}^3(a,b)+\bar{p}^3(b,a)] \\
&=&  \gamma^3[1-8(1-\gamma)^3](b-a)^3 + 3\gamma[1-2(1-\gamma)]a^2(b-a) + 3\gamma^2[1-4(1-\gamma)^2]a(b-a)^2 \\
&& + \gamma^3[1-8(1-\gamma)^3](a-b)^3 + 3\gamma[1-2(1-\gamma)]b^2(a-b) + 3\gamma^2[1-4(1-\gamma)^2]b(a-b)^2 \\
&=& 3\gamma(1-\gamma)[1-4\gamma(1-\gamma)](a+b)(a-b)^2,
\end{eqnarray*}
and
\begin{eqnarray*}
&& \delta(a,b) - \delta(b,a) \\
&=& \bar{q}(a,b)-\bar{q}(b,a) - [\bar{p}^3(a,b)-\bar{p}^3(b,a)] \\
&=& [3\gamma(1-\gamma)-12\gamma^2(1-\gamma)^2+16\gamma^3(1-\gamma)^3](a-b)^3.
\end{eqnarray*}
This gives the equation
$$\delta(a,b)=\frac{3\gamma(1-\gamma)[1-4\gamma(1-\gamma)](a+b)(a-b)^2}{2}+\frac{[3\gamma(1-\gamma)-12\gamma^2(1-\gamma)^2+16\gamma^3(1-\gamma)^3](a-b)^3}{2}.$$
Therefore, under the assumption that $a+b=O(n^{-1/2})$, a sufficient condition for (\ref{eq:suff-SBM}) is $\frac{n(a-b)^2}{a+b}\rightarrow\infty$.
\end{proof}

\begin{proof}[Proof of Theorem \ref{thm:SBM-k}]
The proof is similar to that of Theorem \ref{thm:SBM}. The only difference is the analysis of $(\bar{p}^3-\bar{q})^2$. Define
$$\bar{p}^*=\frac{1}{k}a+\frac{k-1}{k}b,\quad \bar{q}^*=\frac{1}{k^2}a^3+\frac{3(k-1)}{k^2}ab^2+\frac{(k-1)(k-2)}{k^2}b^3.$$
Then, it is easy to see that
$$\left|(\bar{p}^*)^3-\bar{p}^3\right|=O\left(\frac{(a+b)^3}{n}\right),\quad |\bar{q}^*-\bar{q}|=O\left(\frac{(a+b)^3}{n}\right).$$
Since
$$\bar{q}^*-(\bar{p}^*)^3=\frac{k-1}{k^3}(a-b)^3.$$
A sufficient condition is $a+b=O(n^{-1/2})$ and $\frac{n(a-b)^2}{k^{4/3}(a+b)}\rightarrow\infty$.
\end{proof}

\begin{proof}[Proof of Theorem \ref{thm:configuration}]
Similar to the proof of Theorem \ref{thm:SBM}, we need to show
$\frac{n^3T_3^2}{\hat{p}^3}\rightarrow\infty$ in probability under the alternative distribution.
Under the assumption $n^2\rho\rightarrow\infty$, it is sufficient to show $\frac{n^3T_3^2}{\rho^3}\rightarrow\infty$. Note that
$$T_3^2\geq \frac{1}{2}(\bar{p}^3-\bar{q})^2-2(\hat{p}^3-\bar{p}^3)^2-2(\hat{q}-\bar{q})^2.$$
The orders of $(\hat{p}^3-\bar{p}^3)^2$ and $(\hat{q}-\bar{q})^2$ are controlled by $O_P\left(\frac{\rho^5}{n^2}+\frac{\rho^3}{n^3}\right)$ according to Lemma \ref{lem:var-p-q}. Now we study $(\bar{p}^3-\bar{q})^2$. Note that
$$\left|\bar{p}^3-\left(\frac{1}{n}\sum_{i=1}^n\theta_i\right)^6\right|=O\left(\frac{\rho^3}{n}\right),\quad\text{and}\quad\left|\bar{q}-\left(\frac{1}{n}\sum_{i=1}^n\theta_i^2\right)^3\right|=O\left(\frac{\rho^3}{n}\right).$$
Hence, a sufficient condition is $\frac{\delta^2}{\frac{\rho^3}{n^3}+\frac{\rho^5}{n^2}}\rightarrow\infty$.
\end{proof}

Before proving Theorem \ref{thm:latent}, we need another lemma that bounds the error of $\hat{p}^3$ and $\hat{q}$ under the graphon model. Define
$$\tilde{p}=\frac{1}{{n\choose 2}}\sum_{1\leq i<j\leq n}\mathbb{E}\left[g(\xi_i)^Tg(\xi_j)\right],\quad \tilde{q}=\frac{1}{{n\choose 3}}\sum_{1\leq i<j<k\leq n}\mathbb{E}\left[g(\xi_i)^Tg(\xi_j)g(\xi_i)^Tg(\xi_k)g(\xi_j)^Tg(\xi_k)\right],$$
where $g(\xi_i)$ is understood as the vector $(g_1(\xi_{i1}),...,g_r(\xi_{ir}))^T$.

\begin{lemma}\label{lem:var-p-q-latent}
We have
$$|\hat{p}^3-\tilde{p}^3|=O_P\left(\frac{\|g\|_{\infty}^6}{\sqrt{n}}+\frac{\|g\|_{\infty}^5}{n}+\frac{\|g\|_{\infty}^4}{n^2}+\frac{\|g\|_{\infty}^3}{n^3}\right),$$
and
$$|\hat{q}-\tilde{q}|=O_P\left(\frac{\|g\|_{\infty}^{6}}{\sqrt{n}}+\frac{\|g\|_{\infty}^5}{n}+\frac{\|g\|_{\infty}^3}{n^{3/2}}\right),$$
where $\|g\|_{\infty}=\sqrt{\sum_{l=1}^r\|g_l\|_{\infty}^2}$.
\end{lemma}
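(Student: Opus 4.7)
The plan is to prove both bounds by decomposing each discrepancy into a conditional-sampling piece (fluctuations of $\hat{p}$ and $\hat{q}$ around their conditional expectations given $\xi$) and a latent-variable piece (fluctuations of those conditional expectations around their unconditional means). Introduce
\[
\tilde{p}_\xi := \mathbb{E}(\hat{p}\given\xi) = \frac{1}{{n\choose 2}}\sum_{1\le i<j\le n} g(\xi_i)^T g(\xi_j),
\]
\[
\tilde{q}_\xi := \mathbb{E}(\hat{q}\given\xi) = \frac{1}{{n\choose 3}}\sum_{1\le i<j<k\le n} g(\xi_i)^Tg(\xi_j)\cdot g(\xi_i)^Tg(\xi_k)\cdot g(\xi_j)^Tg(\xi_k),
\]
and split $|\hat{p}^3-\tilde{p}^3|\le |\hat{p}^3-\tilde{p}_\xi^3|+|\tilde{p}_\xi^3-\tilde{p}^3|$, and similarly for $|\hat{q}-\tilde{q}|$.

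Conditionally on $\xi$, the array $\{A_{ij}\}$ is an inhomogeneous Bernoulli network with success probabilities $p_{ij}=g(\xi_i)^Tg(\xi_j)$, each deterministically bounded by $\|g\|_\infty^2$, and the corresponding $\bar{p}$ and $\bar{q}$ of Lemma \ref{lem:var-p-q} equal $\tilde{p}_\xi$ and $\tilde{q}_\xi$. Applying Lemma \ref{lem:var-p-q} conditionally with $\rho=\|g\|_\infty^2$ therefore yields
\[
|\hat{p}^3-\tilde{p}_\xi^3|=O_P\!\left(\frac{\|g\|_\infty^5}{n}+\frac{\|g\|_\infty^4}{n^2}+\frac{\|g\|_\infty^3}{n^3}\right),\qquad |\hat{q}-\tilde{q}_\xi|=O_P\!\left(\frac{\|g\|_\infty^5}{n}+\frac{\|g\|_\infty^3}{n^{3/2}}\right).
\]
The conditional-in-probability rate transfers to an unconditional one because the $O_P$ bounds of Lemma \ref{lem:var-p-q} are driven by variance estimates; applying Markov's inequality after taking $\mathbb{E}_\xi$ of the conditional variance gives the same tail bound.

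For the latent-variable pieces, $\tilde{p}_\xi$ and $\tilde{q}_\xi$ are $U$-statistics in the independent variables $\xi_1,\ldots,\xi_n$ of orders $2$ and $3$, with kernels deterministically bounded by $\|g\|_\infty^2$ and $\|g\|_\infty^6$ respectively. Hoeffding's variance decomposition then gives $\Var(\tilde{p}_\xi)=O(\|g\|_\infty^4/n)$ and $\Var(\tilde{q}_\xi)=O(\|g\|_\infty^{12}/n)$, and Chebyshev produces $|\tilde{p}_\xi-\tilde{p}|=O_P(\|g\|_\infty^2/\sqrt{n})$ and $|\tilde{q}_\xi-\tilde{q}|=O_P(\|g\|_\infty^6/\sqrt{n})$. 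To turn the first into a bound on cubes, factor $a^3-b^3=(a-b)(a^2+ab+b^2)$ and use $|\tilde{p}_\xi|,|\tilde{p}|\le\|g\|_\infty^2$ to obtain
\[
|\tilde{p}_\xi^3-\tilde{p}^3|\le 3\|g\|_\infty^4|\tilde{p}_\xi-\tilde{p}|=O_P(\|g\|_\infty^6/\sqrt{n}).
\]
Combining with the two conditional-sampling bounds via the triangle inequality recovers the two claimed rates. The only ingredient absent from the configuration-model analog is the extra $U$-statistic term $\|g\|_\infty^6/\sqrt{n}$, which corresponds precisely to the $\Var(\mathbb{E}(\hat{F}_3\given\xi))$ contribution flagged in the remarks following Theorem \ref{thm:latent}; there is no essential obstacle beyond the two-source variance bookkeeping.
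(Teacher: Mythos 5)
Your proof is correct, and it rests on the same two-source decomposition as the paper's: conditionally on $\xi$ the network is an inhomogeneous Bernoulli model, so the error splits into a conditional-sampling fluctuation and a fluctuation of the conditional mean, which is exactly the paper's $\mathbb{E}(\cdot-\tilde{p})^2=\mathbb{E}\Var(\cdot\given\xi)+\Var(\mathbb{E}(\cdot\given\xi))$ bookkeeping. Where you differ is in execution, and your version is the more modular one. For the conditional piece you simply invoke Lemma \ref{lem:var-p-q} with $\rho=\|g\|_\infty^2$ (correctly noting that its $O_P$ bounds are uniform-in-$\xi$ variance bounds, so they integrate to unconditional bounds); the paper recomputes $\mathbb{E}\Var(\hat{p}\given\xi)$ from scratch and only gestures at the earlier lemma for $\hat{q}$. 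For the latent piece you identify $\mathbb{E}(\hat{p}\given\xi)$ and $\mathbb{E}(\hat{q}\given\xi)$ as bounded-kernel $U$-statistics of orders $2$ and $3$ and quote the Hoeffding variance bound $O(\|g\|_\infty^4/n)$ and $O(\|g\|_\infty^{12}/n)$, whereas the paper carries out the equivalent computation by hand, rewriting $\mathbb{E}(\hat{q}\given\xi)$ as $\Tr\bigl(\tfrac{1}{\binom{n}{3}}\sum_{i<j<k}X_iX_jX_k\bigr)$ with $X_i=g(\xi_i)g(\xi_i)^T$ and bounding $|\Tr(U^3)-\Tr(V^3)|$ via Frobenius norms. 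The two routes give identical rates; yours is shorter and makes the origin of the leading $\|g\|_\infty^6/\sqrt{n}$ term (the first-order Hoeffding projection, i.e.\ $\Var(\mathbb{E}(\hat{F}_3\given\xi))$) more transparent, while the paper's trace computation is self-contained and avoids citing $U$-statistic theory. Your handling of the cube via $a^3-b^3=(a-b)(a^2+ab+b^2)$ with $|\tilde{p}_\xi|,|\tilde{p}|\le\|g\|_\infty^2$ reproduces the claimed four-term bound exactly.
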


The proof of Lemma \ref{lem:var-p-q-latent} will be given in Section \ref{sec:pf-tech}. Now we are ready to state the proof of Theorem \ref{thm:latent}.

\begin{proof}[Proof of Theorem \ref{thm:latent}]
Again, under the condition $n^2\rho\rightarrow\infty$, it is sufficient to show that $n^3T_3^2/\rho^3\rightarrow\infty$. Note that
$$T_3^2\geq \frac{1}{2}(\bar{p}^3-\bar{q})^2-2(\hat{p}^3-\bar{p}^3)^2-2(\hat{q}-\bar{q})^2.$$
The orders of $(\hat{p}^3-\tilde{p}^3)^2$ and $(\hat{q}-\tilde{q})^2$ are controlled by $O_P\left(\frac{\rho^6}{n}+\frac{\rho^5}{n^2}+\frac{\rho^3}{n^3}\right)$ according to Lemma \ref{lem:var-p-q-latent}. Now we study $(\tilde{p}^3-\tilde{q})^2$. Note that
$$\left|\tilde{p}^3-\left\|\mathbb{E}g(\xi)\right\|^6\right|=O\left(\frac{\rho^3}{n}\right),\quad\text{and}\quad\left|\tilde{q}-\Tr\left[\left(\mathbb{E}g(\xi)g(\xi)^T\right)^3\right]\right|=O\left(\frac{\rho^3}{n}\right).$$
Hence, a sufficient condition is $\frac{\delta^2}{\frac{\rho^6}{n}+\frac{\rho^5}{n^2}+\frac{\rho^3}{n^3}}\rightarrow\infty$.
\end{proof}

The proof of Theorem \ref{thm:error-M} requires the following two lemmas.
\begin{lemma}\label{lem:var-p-q-M}
We have
$$|(\hat{p}^{\mathcal{M}})^3-\bar{p}^3|=O_P\left(\frac{\rho^{3/2}}{(mn)^{3/2}}+\frac{\rho^2}{mn}+\frac{\rho^{5/2}}{(mn)^{1/2}}+\frac{\rho^3}{m^{1/2}}\right),$$
and
$$|\hat{q}^{\mathcal{M}}-\bar{q}|=O_P\left(\frac{\rho^3}{m^{1/2}}+\frac{\rho^{5/2}}{(mn)^{1/2}}+\frac{\rho^{3/2}}{m^{1/2}n}\right),$$
where $\rho=\max_{1\leq i<j\leq n}p_{ij}$.
\end{lemma}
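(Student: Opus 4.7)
The plan is to reduce to Lemma \ref{lem:var-p-q} via the triangle inequalities
\begin{equation*}
|(\hat{p}^{\mathcal{M}})^3-\bar{p}^3|\leq |(\hat{p}^{\mathcal{M}})^3-\hat{p}^3|+|\hat{p}^3-\bar{p}^3|,\qquad |\hat{q}^{\mathcal{M}}-\bar{q}|\leq |\hat{q}^{\mathcal{M}}-\hat{q}|+|\hat{q}-\bar{q}|,
\end{equation*}
so that the non-sampling summands are controlled directly by Lemma \ref{lem:var-p-q}; one checks that each term in those bounds is dominated by the corresponding term in the current claim as soon as $m\leq n$. It therefore suffices to bound the two sampling errors $|(\hat{p}^{\mathcal{M}})^3-\hat{p}^3|$ and $|\hat{q}^{\mathcal{M}}-\hat{q}|$ by conditioning on $A$ and averaging over the random set $\mathcal{M}$.

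Write $\hat{p}^{\mathcal{M}}=\frac{1}{m}\sum_{i\in\mathcal{M}}P_i$ and $\hat{q}^{\mathcal{M}}=\frac{1}{m}\sum_{i\in\mathcal{M}}Q_i$, where
\begin{equation*}
P_i=\frac{1}{3\binom{n-1}{2}}\sum_{\{1\le j<k\le n:\,j,k\neq i\}}(A_{ij}+A_{ik}+A_{jk}),\qquad Q_i=\frac{1}{\binom{n-1}{2}}\sum_{\{1\le j<k\le n:\,j,k\neq i\}}A_{ij}A_{ik}A_{jk}.
\end{equation*}
A bookkeeping identity (each unordered triple $\{a,b,c\}$ is counted three times, once per choice of distinguished vertex) yields $\frac{1}{n}\sum_i P_i=\hat{p}$ and $\frac{1}{n}\sum_i Q_i=\hat{q}$. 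Since $\mathcal{M}$ is a uniform size-$m$ subset of $[n]$, the without-replacement variance formula gives
\begin{equation*}
\mathbb{E}_A\!\left[\Var_{\mathcal{M}}(\hat{p}^{\mathcal{M}}\mid A)\right]=\frac{n-m}{mn(n-1)}\sum_{i=1}^n\mathbb{E}[(P_i-\hat{p})^2],
\end{equation*}
and analogously for $\hat{q}^{\mathcal{M}}-\hat{q}$. Markov's inequality will convert these into in-probability statements, and the identity $a^3-b^3=(a-b)(a^2+ab+b^2)$, combined with $\hat{p},\hat{p}^{\mathcal{M}}=O_P(\rho)$ (immediate from Markov since $\mathbb{E}[\hat{p}],\mathbb{E}[\hat{p}^{\mathcal{M}}]=O(\rho)$ and both quantities are nonnegative), will lift any rate on $\hat{p}^{\mathcal{M}}-\hat{p}$ to one on $(\hat{p}^{\mathcal{M}})^3-\hat{p}^3$ at the cost of an extra factor of $\rho^2$.

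What remains is the moment calculation of $\sum_i\mathbb{E}[(P_i-\hat{p})^2]$ and $\sum_i\mathbb{E}[Q_i^2]$. For $P_i$, splitting the inner sum into the ``touching'' edges $A_{ij},A_{ik}$ and the ``non-touching'' edges $A_{jk}$, and using $\sum_{j<k,\,j,k\neq i}A_{jk}=\binom{n}{2}\hat{p}-d_i$, produces the clean algebraic identity $P_i-\hat{p}=\frac{n-3}{3\binom{n-1}{2}}(d_i-\bar{d})$, where $d_i=\sum_{j\neq i}A_{ij}$ and $\bar{d}=\frac{1}{n}\sum_i d_i$; a routine mean/variance decomposition using $\mathbb{E}[d_i]=O(n\rho)$ and $\Var(d_i)=O(n\rho)$ then gives $\sum_i\mathbb{E}[(d_i-\bar{d})^2]=O(n^3\rho^2+n^2\rho)$, and hence $\mathbb{E}[(\hat{p}^{\mathcal{M}}-\hat{p})^2]=O(\rho^2/m+\rho/(mn))$. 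For $Q_i$ I will expand $\mathbb{E}[Q_i^2]$ over pairs of index pairs $\{j,k\},\{j',k'\}\subset[n]\setminus\{i\}$ and split into three cases---coincident, overlap in one element, or disjoint---contributing respectively $O(n^2\rho^3)$, $O(n^3\rho^5)$, and $O(n^4\rho^6)$ before normalization by $\binom{n-1}{2}^2$, so that $\mathbb{E}[(\hat{q}^{\mathcal{M}}-\hat{q})^2]=O(\rho^3/(mn^2)+\rho^5/(mn)+\rho^6/m)$. Taking square roots and combining with the Lemma \ref{lem:var-p-q} bounds (which are dominated because $m\leq n$) yields the two stated rates. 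The main---but essentially routine---bookkeeping step is the three-case expansion of $\mathbb{E}[Q_i^2]$, which mirrors the moment computations already performed in the proofs of Theorem \ref{thm:main} and Lemma \ref{lem:var-p-q}.
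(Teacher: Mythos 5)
Your proposal is correct and follows essentially the same route as the paper: the triangle-inequality split into a sampling error plus the full-graph deviation is exactly the paper's conditional variance decomposition $\mathbb{E}(\cdot-\bar{q})^2=\mathbb{E}\Var(\cdot\,|A)+\Var(\hat{q})$, with the second piece delegated to Lemma \ref{lem:var-p-q} and the first bounded via the without-replacement variance formula and crude second moments of the per-node statistics $P_i,Q_i$ (the paper's $H_i,G_i$). Your explicit identity $P_i-\hat{p}=\tfrac{n-3}{3\binom{n-1}{2}}(d_i-\bar{d})$ and the three-case expansion of $\mathbb{E}[Q_i^2]$ are minor cosmetic variants of the paper's computations and yield the same rates.
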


\begin{lemma}\label{lem:var-p-q-latent-M}
We have
$$|(\hat{p}^{\mathcal{M}})^3-\tilde{p}^3|=O_P\left(\frac{\|g\|_{\infty}^{3}}{(mn)^{3/2}}+\frac{\|g\|_{\infty}^4}{mn}+\frac{\|g\|_{\infty}^{5}}{(mn)^{1/2}}+\frac{\|g\|_{\infty}^6}{m^{1/2}}\right),$$
and
$$|\hat{q}^{\mathcal{M}}-\tilde{q}|=O_P\left(\frac{\|g\|_{\infty}^6}{m^{1/2}}+\frac{\|g\|_{\infty}^{5}}{(mn)^{1/2}}+\frac{\|g\|_{\infty}^{3}}{m^{1/2}n}\right),$$
where $\|g\|_{\infty}=\sqrt{\sum_{l=1}^r\|g_l\|_{\infty}^2}$.
\end{lemma}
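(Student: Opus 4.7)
The plan is to reduce to Lemma~\ref{lem:var-p-q-M} by conditioning on the latent variables, and then to control the residual $\xi$-variability via standard $U$-statistic concentration. This mirrors the way Lemma~\ref{lem:var-p-q-latent} is obtained from Lemma~\ref{lem:var-p-q}.

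First I would split each target quantity via the triangle inequality,
$$|(\hat{p}^{\mathcal{M}})^3-\tilde{p}^3|\le|(\hat{p}^{\mathcal{M}})^3-\bar{p}^3|+|\bar{p}^3-\tilde{p}^3|,\qquad |\hat{q}^{\mathcal{M}}-\tilde{q}|\le|\hat{q}^{\mathcal{M}}-\bar{q}|+|\bar{q}-\tilde{q}|,$$
where $\bar{p}$ and $\bar{q}$ are the quantities defined just before Lemma~\ref{lem:var-p-q}, now random through $p_{ij}=g(\xi_i)^T g(\xi_j)$.

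For the first piece of each decomposition, condition on $\xi=(\xi_1,\dots,\xi_n)$. Given $\xi$, the entries $A_{ij}$ are independent Bernoulli with $p_{ij}\le\|g\|_\infty^2$, so Lemma~\ref{lem:var-p-q-M} applies with $\rho=\|g\|_\infty^2$ and delivers exactly the rates claimed in the conclusion of the present lemma. Because the bound in Lemma~\ref{lem:var-p-q-M} depends on the edge distribution only through the deterministic quantity $\rho$, the conditional-in-probability rate transfers to an unconditional $O_P$ rate with the same constants.

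For the second piece, $\bar{p}$ and $\bar{q}$ are $U$-statistics in the iid latent variables $\xi_i$, with symmetric kernels bounded by $\|g\|_\infty^2$ and $\|g\|_\infty^6$ respectively. Hoeffding's variance identity for $U$-statistics gives $\Var(\bar{p})=O(\|g\|_\infty^4/n)$ and $\Var(\bar{q})=O(\|g\|_\infty^{12}/n)$. Combining Chebyshev with the elementary bound $|\bar{p}^3-\tilde{p}^3|\le 3\|g\|_\infty^4\,|\bar{p}-\tilde{p}|$ yields $|\bar{p}^3-\tilde{p}^3|,\,|\bar{q}-\tilde{q}|=O_P(\|g\|_\infty^6/\sqrt{n})$. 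Since $m\le n$ this is absorbed into the $\|g\|_\infty^6/m^{1/2}$ term already present in the conditional bound, and adding the two contributions gives the stated rates.

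There is no real obstacle: the only point needing attention is the uniformity in $\xi$ of the rate supplied by Lemma~\ref{lem:var-p-q-M}, but this is immediate because that lemma's constants depend on the underlying edge distribution only through $\rho$. The remainder is routine bookkeeping.
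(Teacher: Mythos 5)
Your argument is correct and delivers the stated rates, but it organizes the error differently from the paper. The paper uses the law of total variance conditioning on the adjacency matrix $A$, writing $\mathbb{E}(\hat{p}^{\mathcal{M}}-\tilde{p})^2=\mathbb{E}\Var(\hat{p}^{\mathcal{M}}\mid A)+\Var(\hat{p})$ (and likewise for $\hat{q}^{\mathcal{M}}$), then imports the first term from the internal computations in the proof of Lemma \ref{lem:var-p-q-M} with $\rho$ replaced by $\|g\|_{\infty}^2$ and the second term from Lemma \ref{lem:var-p-q-latent}. You instead split at $\bar{p},\bar{q}$ and condition on $\xi$: the pieces $|(\hat{p}^{\mathcal{M}})^3-\bar{p}^3|$ and $|\hat{q}^{\mathcal{M}}-\bar{q}|$ are handled by invoking Lemma \ref{lem:var-p-q-M} as a black box for the conditional inhomogeneous Bernoulli model with $\rho\le\|g\|_{\infty}^2$, while $|\bar{p}^3-\tilde{p}^3|$ and $|\bar{q}-\tilde{q}|$ are $O_P(\|g\|_{\infty}^6/\sqrt{n})$ fluctuations of bounded $U$-statistics in the i.i.d.\ latent variables, absorbed into the $\|g\|_{\infty}^6/m^{1/2}$ term since $m\le n$. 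Your $U$-statistic step in effect reproduces the $\Var(\mathbb{E}(\cdot\mid\xi))$ computations inside the proof of Lemma \ref{lem:var-p-q-latent}, so the two arguments rest on the same ingredients and give identical rates; the one extra point your route needs, which you correctly flag, is that the implied constants in Lemma \ref{lem:var-p-q-M} depend on the edge probabilities only through $\rho$, so the conditional-on-$\xi$ bound integrates to an unconditional $O_P$ statement. The benefit of your version is that it treats Lemma \ref{lem:var-p-q-M} as a self-contained statement rather than reusing steps of its proof; the paper's version avoids any uniformity discussion by working directly with exact variance decompositions.
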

The proofs of the two lemmas will be proved in Section \ref{sec:pf-tech}.

\begin{proof}[Proof of Theorem \ref{thm:error-M}]
The consistency of the Type-I error is an immediate consequence of Corollary \ref{cor:chi-node}. To see the power under a stochastic block model or a configuration model, we use the same argument in the proofs of Theorem \ref{thm:SBM}, Theorem \ref{thm:SBM-k} and Theorem \ref{thm:configuration}. The only difference is that by Lemma \ref{lem:var-p-q-M}, the error $|(\hat{p}^{\mathcal{M}})^3-\bar{p}^3|^2+|\hat{q}^{\mathcal{M}}-\bar{q}|^2$ is bounded by
$O\left(\frac{\rho^6}{m}+{\frac{\rho^3}{mn^2}}\right)$ under the assumption that $n\rho\rightarrow\infty$.
Depending on whether $\rho=O(n^{-2/3})$ or not, either ${\frac{\rho^3}{mn^2}}$ or $\frac{\rho^6}{m}$ dominates the error. For the power under a low-rank latent variable model, we use the same argument in the proof of Theorem \ref{thm:latent}. Note that when $n\rho\rightarrow\infty$, the error $|(\hat{p}^{\mathcal{M}})^3-\tilde{p}^3|^2+|\hat{q}^{\mathcal{M}}-\tilde{q}|^2$ is bounded by $O\left(\frac{\rho^6}{m}+{\frac{\rho^3}{mn^2}}\right)$, with $\|g\|_{\infty}^2\leq\rho$. This completes the proof.
\end{proof}

The proof of of Theorem \ref{thm:error-Delta} requires the following two lemmas.
\begin{lemma}\label{lem:var-p-q-Delta}
We have
$$|(\hat{p}^{\Delta})^3-\bar{p}^3|=O_P\left(\frac{\rho^{5/2}}{n}+\frac{\rho^2}{n^2}+\frac{\rho^{3/2}}{n^3}+\frac{\rho^{5/2}}{|\Delta|^{1/2}}+\frac{\rho^2}{|\Delta|}+\frac{\rho^{3/2}}{|\Delta|^{3/2}}\right),$$
and
$$|\hat{q}^{\Delta}-\bar{q}|=O_P\left(\frac{\rho^{5/2}}{n}+\frac{\rho^{3/2}}{n^{3/2}}+\frac{\rho^{3/2}}{|\Delta|^{1/2}}\right),$$
where $\rho=\max_{1\leq i<j\leq n}p_{ij}$.
\end{lemma}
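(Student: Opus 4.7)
The plan is to reduce Lemma \ref{lem:var-p-q-Delta} to Lemma \ref{lem:var-p-q} by splitting out the additional noise introduced by the triple-sampling layer. Specifically, I would write
$$(\hat{p}^{\Delta})^3-\bar{p}^3 \;=\; \bigl[(\hat{p}^{\Delta})^3-\hat{p}^3\bigr]+\bigl[\hat{p}^3-\bar{p}^3\bigr], \qquad \hat{q}^{\Delta}-\bar{q} \;=\; \bigl[\hat{q}^{\Delta}-\hat{q}\bigr]+\bigl[\hat{q}-\bar{q}\bigr],$$
so that the second bracket in each identity is handled directly by Lemma \ref{lem:var-p-q} and produces precisely the $n$-dependent terms $\rho^{5/2}/n+\rho^2/n^2+\rho^{3/2}/n^3$ and $\rho^{5/2}/n+\rho^{3/2}/n^{3/2}$ in the stated bounds. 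It then remains to control the two sampling gaps $\hat{p}^{\Delta}-\hat{p}$ and $\hat{q}^{\Delta}-\hat{q}$ conditional on $A$.

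The key structural observation, already exploited in the proof of Theorem \ref{thm:sample-triple}, is that conditional on $A$ the sampled triples $\lambda_1,\ldots,\lambda_{|\Delta|}$ are i.i.d.\ uniform, so $\hat{p}^{\Delta}$ and $\hat{q}^{\Delta}$ are averages of $|\Delta|$ bounded i.i.d.\ random variables with conditional means $\hat{p}$ and $\hat{q}$, respectively. Since $(A_{ij}+A_{ik}+A_{jk})/3\in[0,1]$ and $A_{ij}A_{ik}A_{jk}\in\{0,1\}$, the crude inequality ``second moment $\leq$ first moment'' gives
$$\mathbb{E}\bigl[\Var(\hat{p}^{\Delta}\mid A)\bigr]\leq \frac{\bar{p}}{|\Delta|}\leq \frac{\rho}{|\Delta|}, \qquad \mathbb{E}\bigl[\Var(\hat{q}^{\Delta}\mid A)\bigr]\leq \frac{\bar{q}}{|\Delta|}\leq \frac{\rho^3}{|\Delta|}.$$
Conditional Chebyshev followed by integration in $A$ yields $|\hat{p}^{\Delta}-\hat{p}|=O_P(\rho^{1/2}|\Delta|^{-1/2})$ and $|\hat{q}^{\Delta}-\hat{q}|=O_P(\rho^{3/2}|\Delta|^{-1/2})$; the latter is exactly the extra $\rho^{3/2}/|\Delta|^{1/2}$ term needed for $|\hat{q}^{\Delta}-\bar{q}|$, so the triangle-frequency part of the lemma is complete.

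For the edge statistic I would then expand
$$(\hat{p}^{\Delta})^3-\hat{p}^3 \;=\; 3\hat{p}^2(\hat{p}^{\Delta}-\hat{p})+3\hat{p}(\hat{p}^{\Delta}-\hat{p})^2+(\hat{p}^{\Delta}-\hat{p})^3.$$
Using $\hat{p}=\bar{p}+O_P(\rho^{1/2}/n)=O_P(\rho)$ in the regime of interest (the alternative case $n^2\rho=O(1)$ makes the whole network essentially empty, so all four statistics are zero with probability tending to one and the claim is vacuous), the three summands are respectively of order $\rho^{5/2}/|\Delta|^{1/2}$, $\rho^2/|\Delta|$ and $\rho^{3/2}/|\Delta|^{3/2}$, which are exactly the three remaining $|\Delta|$-dependent terms in the stated bound.

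The main obstacle is purely bookkeeping: one must verify that the very crude variance bound ``$\mathbb{E} X^2\leq \mathbb{E} X$'' is already tight enough to produce the sharp orders $\rho/|\Delta|$ and $\rho^3/|\Delta|$, and that replacing $\hat{p}^2$ by $O_P(\rho^2)$ in the cubic expansion is legitimate under the mild regime where the graph is not almost surely empty. Both reductions are immediate from the boundedness of the summands and the concentration of $\hat{p}$ around $\bar{p}$, so the lemma reduces cleanly to the already-established Lemma \ref{lem:var-p-q}.
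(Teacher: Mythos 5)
Your proposal is correct and follows essentially the same route as the paper: the paper also conditions on $A$, uses $\mathbb{E}(\hat{p}^{\Delta}\mid A)=\hat{p}$ and $\mathbb{E}(\hat{q}^{\Delta}\mid A)=\hat{q}$ together with the law of total variance (your two-bracket split is the same decomposition), bounds the conditional variances by $O(\rho/|\Delta|)$ and $O(\rho^3/|\Delta|)$ via the same second-moment-at-most-first-moment observation, invokes Lemma \ref{lem:var-p-q} for the $n$-dependent terms, and finishes with the cubic expansion. The only cosmetic difference is that the paper expands $(\hat{p}^{\Delta})^3-\bar{p}^3$ directly around $\bar{p}$ rather than around $\hat{p}$, which avoids your side remark about the regime $n^2\rho=O(1)$ (the cross terms are in any case dominated by the stated bound since $n^2|\Delta|^{1/2}\geq\min(n^3,|\Delta|^{3/2})$).
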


\begin{lemma}\label{lem:var-p-q-latent-Delta}
We have
$$|\hat{p}^{\Delta}-\tilde{p}^3|=O_P\left(\frac{\|g\|_{\infty}^6}{\sqrt{n}}+\frac{\|g\|_{\infty}^5}{n}+\frac{\|g\|_{\infty}^4}{n^2}+\frac{\|g\|_{\infty}^3}{n^3}+\frac{\|g\|_{\infty}^{5}}{|\Delta|^{1/2}}+\frac{\|g\|_{\infty}^4}{|\Delta|}+\frac{\|g\|_{\infty}^{3}}{|\Delta|^{3/2}}\right),$$
$$|\hat{q}^{\Delta}-\tilde{q}|=O_P\left(\frac{\|g\|_{\infty}^{6}}{\sqrt{n}}+\frac{\|g\|_{\infty}^5}{n}+\frac{\|g\|_{\infty}^3}{n^{3/2}}+\frac{\|g\|_{\infty}^3}{|\Delta|^{1/2}}\right),$$
where $\|g\|_{\infty}=\sqrt{\sum_{l=1}^r\|g_l\|_{\infty}^2}$.
\end{lemma}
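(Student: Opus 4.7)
The strategy is to split the two target differences by a triangle inequality into a ``$\Delta$-sampling'' piece and an ``adjacency-randomness'' piece, dispatching the latter via Lemma \ref{lem:var-p-q-latent} and the former by conditioning on $A$ and exploiting the i.i.d.\ structure of the multiset $\Delta$. Concretely, I would write
$$|(\hat{p}^{\Delta})^3-\tilde{p}^3|\leq |(\hat{p}^{\Delta})^3-\hat{p}^3| + |\hat{p}^3-\tilde{p}^3|,\qquad |\hat{q}^{\Delta}-\tilde{q}|\leq |\hat{q}^{\Delta}-\hat{q}| + |\hat{q}-\tilde{q}|,$$
and invoke Lemma \ref{lem:var-p-q-latent} to control $|\hat{p}^3-\tilde{p}^3|$ and $|\hat{q}-\tilde{q}|$. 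This already accounts for the first four terms of the bound on $|(\hat{p}^{\Delta})^3-\tilde{p}^3|$ and the first three terms of the bound on $|\hat{q}^{\Delta}-\tilde{q}|$.

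For the remaining $\Delta$-sampling pieces, I would condition on $A$. Under this conditioning, $\hat{p}^{\Delta}$ and $\hat{q}^{\Delta}$ are averages of $|\Delta|$ i.i.d.\ random variables with means $\hat{p}$ and $\hat{q}$ respectively. Using $A_{ij}^2=A_{ij}$ and $A_{ij}A_{ik}\leq A_{ij}$, the conditional second moments are bounded by the first moments, so $\Var(\hat{p}^{\Delta}\,|\,A)\leq \hat{p}/|\Delta|$ and $\Var(\hat{q}^{\Delta}\,|\,A)\leq \hat{q}/|\Delta|$. Since $|f(\xi_i,\xi_j)|\leq \|g\|_{\infty}^2$ yields $\mathbb{E}\hat{p}=\tilde{p}\leq \|g\|_{\infty}^2$ and $\mathbb{E}\hat{q}=\tilde{q}\leq \|g\|_{\infty}^6$, Markov's inequality gives $\hat{p}=O_P(\|g\|_{\infty}^2)$ and $\hat{q}=O_P(\|g\|_{\infty}^6)$, and then Chebyshev's inequality produces
$$|\hat{p}^{\Delta}-\hat{p}|=O_P\!\left(\|g\|_{\infty}/|\Delta|^{1/2}\right),\qquad |\hat{q}^{\Delta}-\hat{q}|=O_P\!\left(\|g\|_{\infty}^3/|\Delta|^{1/2}\right).$$
The second of these supplies the missing $\|g\|_{\infty}^3/|\Delta|^{1/2}$ contribution in the bound on $|\hat{q}^{\Delta}-\tilde{q}|$.

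To convert the first into the cubic statement, I would invoke the factorization $a^3-b^3=(a-b)(a^2+ab+b^2)$. Combining $\hat{p}=O_P(\|g\|_{\infty}^2)$ with $\hat{p}^{\Delta}=O_P(\|g\|_{\infty}^2+\|g\|_{\infty}/|\Delta|^{1/2})$ yields
$$(\hat{p}^{\Delta})^2+\hat{p}^{\Delta}\hat{p}+\hat{p}^2 = O_P\!\left(\|g\|_{\infty}^4 + \|g\|_{\infty}^3/|\Delta|^{1/2} + \|g\|_{\infty}^2/|\Delta|\right),$$
and multiplying by $|\hat{p}^{\Delta}-\hat{p}|=O_P(\|g\|_{\infty}/|\Delta|^{1/2})$ reproduces precisely the three leftover terms $\|g\|_{\infty}^5/|\Delta|^{1/2}+\|g\|_{\infty}^4/|\Delta|+\|g\|_{\infty}^3/|\Delta|^{3/2}$. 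The main obstacle is not conceptual but bookkeeping: one must carefully track which powers of $\|g\|_{\infty}$ originate from envelope bounds on $p_{ij}$, which from the cubing, and which from the $\Delta$-sampling variance, and verify in particular that the cross term $\hat{p}^{\Delta}\hat{p}$ lands inside the displayed bound rather than creating new terms.
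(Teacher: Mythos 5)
Your proposal is correct and follows essentially the same route as the paper: the paper proves this lemma by the law-of-total-variance decomposition $\mathbb{E}(\hat{q}^{\Delta}-\tilde{q})^2=\mathbb{E}\Var(\hat{q}^{\Delta}\,|\,A)+\Var(\mathbb{E}(\hat{q}^{\Delta}\,|\,A))$, bounding the first term by the conditional i.i.d.\ sampling variance (as in Lemma~\ref{lem:var-p-q-Delta}, with $\rho$ replaced by $\|g\|_{\infty}^2$) and the second by Lemma~\ref{lem:var-p-q-latent}, which is exactly your triangle-inequality split into $|\hat{q}^{\Delta}-\hat{q}|+|\hat{q}-\tilde{q}|$ together with conditional Chebyshev. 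Your cube-factorization bookkeeping for $(\hat{p}^{\Delta})^3-\hat{p}^3$ also reproduces the stated terms correctly.
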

The proofs of the two lemmas will be proved in Section \ref{sec:pf-tech}.

\begin{proof}[Proof of Theorem \ref{thm:error-Delta}]
The consistency of the Type-I error is an immediate consequence of Corollary \ref{cor:chi-triple}. To see the power under a stochastic block model or a configuration model, we use the same argument in the proofs of Theorem \ref{thm:SBM}, Theorem \ref{thm:SBM-k} and Theorem \ref{thm:configuration}. The only difference is that by Lemma \ref{lem:var-p-q-M}, the error $|(\hat{p}^{\Delta})^3-\bar{p}^3|^2+|\hat{q}^{\Delta}-\bar{q}|^2$ is bounded by
$O\left(\frac{\rho^5}{n^2}+\frac{\rho^3}{|\Delta|}\right)$ under the assumption that $|\Delta|=o(n^3)$.
This leads to the desired conclusion of power for the first three cases. For the power under a low-rank latent variable model, we use the same argument in the proof of Theorem \ref{thm:latent}. Note that when $|\Delta|=o(n^3)$, the error $|(\hat{p}^{\Delta})^3-\tilde{p}^3|^2+|\hat{q}^{\Delta}-\tilde{q}|^2$ is bounded by $O\left(\frac{\rho^6}{n}+\frac{\rho^5}{n^2}+\frac{\rho^3}{|\Delta|}\right)$, with $\|g\|_{\infty}^2\leq\rho$. This completes the proof.
\end{proof}

\subsection{Proof of the lower bound}\label{sec:pf-lb}

\begin{proof}[Proof of Theorem \ref{thm:SBM-lower}]
Given $a$ and $b$, let $P$ be the probability distribution induced by the Erd\H{o}s-R\'{e}nyi graph with edge probability $\frac{a+b}{2}$. Let $Q$ be the probability distribution of an SBM with uniformly assigned community labels. That is, for each $i\in[n]$, $z(i)=1$ or $z(i)=-1$ with probability $1/2$ independently. Then, for each $i<j$, $Q(A_{ij}=1|z(i)=z(j))=a$ and $Q(A_{ij}=1|z(i)\neq z(j))=b$ independently. For any testing function $\phi$, we use Cauchy-Schwarz inequality to derive
$$Q\phi\leq \sqrt{\int \left(\frac{dQ}{dP}\right)^2dP}\sqrt{P\phi}.$$
The chi-squared divergence is bounded by the following result.
\begin{proposition}\label{prop:affinity}
When $a+b=o(1)$ and $\frac{n(a-b)^2}{2(a+b)}<1-c$ for some arbitrarily small constant $c\in(0,1)$, there exists a constant $C>0$ such that $\sqrt{\int \left(\frac{dQ}{dP}\right)^2dP}<C$.
\end{proposition}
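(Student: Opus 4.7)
The plan is to compute the chi-squared divergence directly by exploiting the mixture structure of $Q$, then reduce to a Gaussian smoothing bound on a quadratic exponential moment of a signed sum.

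First I would write $Q = \E_z Q_z$, where $Q_z$ is the product Bernoulli law given a label vector $z \in \{\pm 1\}^n$ drawn uniformly. Because $P$ has independent edges,
\[
\int \left(\frac{dQ}{dP}\right)^2 dP
= \E_{z,z'}\, \E_P\!\left[\frac{dQ_z}{dP}\frac{dQ_{z'}}{dP}\right]
= \E_{z,z'} \prod_{i<j}\left(1 + \frac{(p_{ij}^z - p)(p_{ij}^{z'} - p)}{p(1-p)}\right),
\]
where $z,z'$ are independent copies and $p=(a+b)/2$. Writing $p_{ij}^z - p = \tfrac{a-b}{2}\,z(i)z(j)$ and setting $\sigma(i) = z(i)z'(i)$ (which are i.i.d.\ uniform $\pm 1$), this becomes
\[
\E_{\sigma} \prod_{i<j} \bigl(1 + \rho\, \sigma(i)\sigma(j)\bigr),
\qquad \rho \defeq \frac{(a-b)^2}{4p(1-p)}.
\]

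Next I would linearize each factor. Since $\sigma(i)\sigma(j) \in \{\pm 1\}$, we have the identity $1 + \rho s = \sqrt{1-\rho^2}\, e^{\alpha s}$ with $\alpha = \tfrac12 \log\frac{1+\rho}{1-\rho}$. Using $\sum_{i<j}\sigma(i)\sigma(j) = \tfrac12(S^2 - n)$ with $S=\sum_i \sigma(i)$, the product becomes
\[
(1-\rho^2)^{\binom{n}{2}/2}\, e^{-\alpha n/2}\, e^{\alpha S^2/2}.
\]
Now Gaussian smoothing gives $e^{\alpha S^2/2} = \E_{G\sim N(0,1)}\, e^{\sqrt{\alpha}\, G S}$, and independence of the $\sigma(i)$'s yields $\E_\sigma e^{\sqrt{\alpha}\, G S} = \cosh(\sqrt{\alpha}\, G)^n \le e^{\alpha n G^2/2}$. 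Taking $\E_G$ produces the bound $\E_\sigma e^{\alpha S^2/2} \le (1-\alpha n)^{-1/2}$, valid whenever $\alpha n < 1$.

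The final step is to check that the hypothesis $\tfrac{n(a-b)^2}{2(a+b)} < 1-c$ together with $a+b = o(1)$ forces $\alpha n \le 1 - c/2$ for all large $n$, giving a uniform bound on the chi-squared divergence. Since $p(1-p) = \tfrac{a+b}{2}(1+o(1))$, we get $\rho = \tfrac{(a-b)^2}{2(a+b)}(1+o(1))$, so $n\rho \le 1-c+o(1)$, and because $\alpha = \rho + O(\rho^3)$ with $\rho = o(1)$, also $\alpha n \le 1-c+o(1)$; combining with the prefactor $(1-\rho^2)^{\binom{n}{2}/2} e^{-\alpha n/2}$ (which is $\le 1$) gives a constant bound $C = C(c)$. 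The main delicate point is the transition from $\rho n$ to $\alpha n$ and the verification that all approximations are uniform in $(a,b)$ within the assumed regime; the remaining work is routine.
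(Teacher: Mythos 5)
Your argument is correct, and the opening reduction coincides with the paper's: both expand the chi-squared divergence over two independent label vectors, observe that each edge contributes a factor $1+t\,\sigma(i)\sigma(j)$ with $\sigma(i)=z(i)z'(i)$ i.i.d.\ Rademacher and $t=\frac{(a-b)^2}{4p(1-p)}$ (your $\rho$ is the paper's $t$; the paper reserves $\rho$ for the overlap $\frac{1}{n}\sum_i\sigma(i)z(i)=S/n$), and reduce the whole problem to bounding $\E\exp\bigl(\tfrac{t}{2}S^2\bigr)$ under $nt<1-c$. Where you genuinely differ is the final step: the paper rewrites the product as $(1+t)^{s_+}(1-t)^{s_-}$ by counting agreeing and disagreeing pairs, massages it into $\E\exp\bigl(\tfrac{n^2t}{2}\rho^2\bigr)$, and then cites Lemma~5.5 of \cite{mossel2012stochastic} for this exponential moment, whereas you use the exact tilting identity $1+ts=\sqrt{1-t^2}\,e^{\alpha s}$ and prove the moment bound yourself via Gaussian smoothing, $\E_\sigma e^{\alpha S^2/2}\le \E_G\cosh(\sqrt{\alpha}\,G)^n\le(1-\alpha n)^{-1/2}$ for $\alpha n<1$. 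This makes the proof self-contained, and your bookkeeping with $\alpha=\tfrac12\log\tfrac{1+t}{1-t}=t+O(t^3)$ is in fact slightly more careful than the paper's intermediate inequality $\bigl(\tfrac{1+t}{1-t}\bigr)^{n^2\rho^2/4}\le\exp\bigl(\tfrac{n^2t}{2}\rho^2\bigr)$, which as written is off by a $1+O(t^2)$ factor in the exponent (harmless since $t=O(1/n)$, but your version makes the correction explicit). The only point worth stating more prominently is that the prefactor bound and the step from $nt<1-c$ to $n\alpha\le 1-c/2$ both rely on $t=o(1)$, which follows from $a+b=o(1)$ together with the hypothesis $\tfrac{n(a-b)^2}{2(a+b)}<1-c$; you do note this, so the proof is complete.
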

The proof of this proposition is given in the end of the section.
This conclusion implies that $Q\phi\leq C\sqrt{P\phi}$. We need to give a bound for $\inf_{\mathcal{S}(\pi,a,b)}\mathbb{P}\phi$. Note that the distribution $Q$ has decomposition $Q=\sum_z w(z)Q_z$, where $Q_z$ is the conditional distribution of $\{A_{ij}\}_{1\leq i<j\leq n}$ given the community label $z$, and $w$ is the uniform distribution on $z$. Define
$$\tilde{w}(z)=\frac{w(z)\mathbb{I}_{z\in\mathcal{Z}_{n,2}(\pi)}}{\sum_{z\in\mathcal{Z}_{n,2}(\pi)}w(z)},\quad\text{and}\quad \tilde{Q}=\sum_z\tilde{w}(z)Q_z.$$
By data processing inequality,
\begin{eqnarray*}
\TV(Q,\tilde{Q}) &\leq& \TV(w,\tilde{w}) \\
&\leq& 2\sum_{z\notin\mathcal{Z}_{n,2}(\pi)}w(z) \\
&\leq& 4\exp\left(-2n(\pi-1/2)^2\right),
\end{eqnarray*}
where the last inequality is Hoeffding's inequality. Finally,
$$\inf_{\mathcal{S}(\pi,a,b)}\mathbb{P}\phi\leq \tilde{Q}\phi\leq Q\phi + \TV(Q,\tilde{Q})\leq C\sqrt{P\phi}+4\exp\left(-2n(\pi-1/2)^2\right).$$
Therefore, the proof is complete.
\end{proof}

\begin{proof}[Proof of Proposition \ref{prop:affinity}]
The proof is basically rewriting the argument in \cite{mossel2012stochastic} to make sure their result still holds for a wider range of parameters and also simplify some of their steps. Consider $p_{ij}$ that depends on $z$ such that $p_{ij}=a\mathbb{I}_{\{z(i)=z(j)\}}+b\mathbb{I}_{\{z(i)\neq z(j)\}}$. Similarly, define $p_{ij}'=a\mathbb{I}_{\{\sigma(i)=\sigma(j)\}}+b\mathbb{I}_{\{\sigma(i)\neq \sigma(j)\}}$. Both $\sigma$ and $z$ are uniformly sampled form $\{-1,1\}$ independently. Direct calculation gives
$$\int \left(\frac{dQ}{dP}\right)^2dP=\mathbb{E}_{\sigma,z}\prod_{1\leq i<j\leq n}\left(\frac{p_{ij}p_{ij}'}{(a+b)/2}+\frac{(1-p_{ij})(1-p_{ij}')}{1-(a+b)/2}\right).$$
Note that when $\sigma(i)\sigma(j)=z(i)z(j)$,
$$\frac{p_{ij}p_{ij}'}{(a+b)/2}+\frac{(1-p_{ij})(1-p_{ij}')}{1-(a+b)/2}=1+t,$$
and when $\sigma(i)\sigma(j)\neq z(i)z(j)$,
$$\frac{p_{ij}p_{ij}'}{(a+b)/2}+\frac{(1-p_{ij})(1-p_{ij}')}{1-(a+b)/2}=1-t,$$
where
$$t=\frac{\left(\frac{a-b}{2}\right)^2}{[(a+b)/2][1-(a+b)/2]}.$$
Thus,
$$\int \left(\frac{dQ}{dP}\right)^2dP=\mathbb{E}_{\sigma,z}(1+t)^{s_+}(1-t)^{s_-},$$
where $s_+$ and $s_-$ are the numbers of pairs $(i,j)$ such that $\sigma(i)\sigma(j)=z(i)z(j)$ and $\sigma(i)\sigma(j)\neq z(i)z(j)$, respectively. Following \cite{mossel2012stochastic}, we define $\rho=\frac{1}{n}\sum_{i=1}^n\sigma(i)z(i)$. Then,
\begin{eqnarray*}
&& \mathbb{E}_{\sigma,z}(1+t)^{s_+}(1-t)^{s_-} \\
&=& \mathbb{E}_{\sigma,z}(1+t)^{(1+\rho^2)\frac{n^2}{4}-\frac{n}{2}}(1-t)^{(1-\rho^2)\frac{n^2}{4}} \\
&=& (1-t^2)^{\frac{n^2}{4}}(1+t)^{-\frac{n}{2}}\mathbb{E}_{\sigma,z}(1+t)^{\frac{n^2\rho^2}{4}}(1-t)^{-\frac{n^2\rho^2}{4}} \\
&\leq& \mathbb{E}_{\sigma,z}(1+t)^{\frac{n^2\rho^2}{4}}(1-t)^{-\frac{n^2\rho^2}{4}} \\
&\leq& \mathbb{E}_{\sigma,z}\exp\left(\frac{n^2t}{2}\rho^2\right).
\end{eqnarray*}
By Lemma 5.5 in \cite{mossel2012stochastic}, $\mathbb{E}_{\sigma,z}\exp\left(\frac{n^2t}{2}\rho^2\right)<C$ when $nt<1-c$ for some constant $c\in(0,1)$. This leads to the desired condition in the result.
\end{proof}

\subsection{Proofs of technical lemmas}\label{sec:pf-tech}

\begin{proof}[Proof of Proposition \ref{prop:corr}]
For $T_3,T_2,T_1$ defined in Section \ref{sec:equations}, we have
\begin{eqnarray}
\label{eq:T1-rep} T_3 &=& -R_3-pR_2+(\hat{p}-p)^2(2p+\hat{p}), \\
\label{eq:T2-rep} T_2 &=& 3R_3-(1-3p)R_2+3(1-2p-\hat{p})(\hat{p}-p)^2, \\
\nonumber T_1 &=& -3R_3-(1-3p)R_2+3(\hat{p}-2p)(\hat{p}-p)^2,
\end{eqnarray}
where $R_3$ and $R_2$ are defined as $R_3^{\mathcal{M}}$ and $R_2^{\mathcal{M}}$ with $\mathcal{M}=[n]$.
By (\ref{eq:cross-m-Y}) and (\ref{eq:cross-m-W}), it is not hard to calculate that
$$\Var(R_3)=\frac{1}{{n\choose 3}}p^3(1-p^3),\quad \Var(R_2)=\frac{1}{{n\choose 3}}3p^2(1-p)^2.$$
Moreover,
$$\mathbb{E}(\hat{p}-p)^2=\frac{1}{{n\choose 2}}p(1-p).$$
Therefore, when $p=o(1)$, we have
$$T_3=-(1+o_P(1))R_3,\quad T_2=-(1+o_P(1))R_2,\quad\text{and}\quad T_1=-(1+o_P(1))R_2.$$
The relation (\ref{eq:inde-shapes}) implies that $\Corr(R_3,R_2)=0$, which directly leads to the desired conclusion.
\end{proof}

\begin{proof}[Proof of Proposition \ref{prop:delta-configuration}]
Since $\frac{1}{n}\sum_{i=1}^n\theta_i^2\geq \left(\frac{1}{n}\sum_{i=1}^n\theta_i\right)^2$, the result follows by taking cube on both sides.
\end{proof}

\begin{proof}[Proof of Proposition \ref{prop:delta-latent}]
Note that
\begin{eqnarray*}
\|\mathbb{E}g(\xi)\|^6 &=& \left(\sum_{l=1}^r\left[\mathbb{E}g_l(\xi_{l1})\right]^2\right)^3 \\
&=& \sum_{a=1}^r\sum_{b=1}^r\sum_{c=1}^r\left[\mathbb{E}g_a(\xi_{a1})\right]^2\left[\mathbb{E}g_b(\xi_{b1})\right]^2\left[\mathbb{E}g_c(\xi_{c1})\right]^2 
\end{eqnarray*}
On the other hand,
\begin{eqnarray*}
\Tr\left[\left(\mathbb{E}g(\xi)g(\xi)^T\right)^3\right] &=& \sum_{a=1}^r\sum_{b=1}^r\sum_{c=1}^r\mathbb{E}\left(g_a(\xi_{a1})g_a(\xi_{a2})g_b(\xi_{b2})g_b(\xi_{b3})g_c(\xi_{c3})g_c(\xi_{c1})\right).
\end{eqnarray*}
For each triple $(a,b,c)$, we discuss three cases. In the first case, $a=b=c$, and then
\begin{eqnarray*}
&& \mathbb{E}\left(g_a(\xi_{a1})g_a(\xi_{a2})g_b(\xi_{b2})g_b(\xi_{b3})g_c(\xi_{c3})g_c(\xi_{c1})\right) \\
&=& \left(\mathbb{E}g_a(\xi_{a1})^2\right)^3 \\
&\geq& \left[\mathbb{E}g_a(\xi_{a1})\right]^2\left[\mathbb{E}g_b(\xi_{b1})\right]^2\left[\mathbb{E}g_c(\xi_{c1})\right]^2.
\end{eqnarray*}
In the second case, $a=b\neq c$, and then
\begin{eqnarray*}
&& \mathbb{E}\left(g_a(\xi_{a1})g_a(\xi_{a2})g_b(\xi_{b2})g_b(\xi_{b3})g_c(\xi_{c3})g_c(\xi_{c1})\right) \\
&=& \left(\mathbb{E}g_a(\xi_{a1})^2\right)^2\mathbb{E}\left(g_c(\xi_{c3})g_c(\xi_{c1})\right) \\
&\geq& \left[\mathbb{E}g_a(\xi_{a1})\right]^2\left[\mathbb{E}g_b(\xi_{b1})\right]^2\left[\mathbb{E}g_c(\xi_{c1})\right]^2.
\end{eqnarray*}
Finally, when $a\neq b\neq c\neq a$,  we have
$$\mathbb{E}\left(g_a(\xi_{a1})g_a(\xi_{a2})g_b(\xi_{b2})g_b(\xi_{b3})g_c(\xi_{c3})g_c(\xi_{c1})\right)=\left[\mathbb{E}g_a(\xi_{a1})\right]^2\left[\mathbb{E}g_b(\xi_{b1})\right]^2\left[\mathbb{E}g_c(\xi_{c1})\right]^2.$$
Therefore, $\Tr\left[\left(\mathbb{E}g(\xi)g(\xi)^T\right)^3\right]\geq \|\mathbb{E}g(\xi)\|^6$. The above derivation shows that the necessary and sufficient condition for the equality to hold is $\Var(g_a(\xi_{a1}))=0$ for all $a\in[r]$, which implies a constant function.
\end{proof}

\begin{proof}[Proof of Lemma \ref{lem:var-p-q}]
We first analyze $|\hat{p}^3-\bar{p}^3|$. Direct calculation gives
$$|\hat{p}^3-\bar{p}^3|=|(\hat{p}-\bar{p})^3+3(\hat{p}-\bar{p})^2\bar{p}+3(\hat{p}-\bar{p})\bar{p}^2|.$$
Since
$$\mathbb{E}(\hat{p}-\bar{p})^2\leq \frac{1}{{n\choose 2}}\bar{p}=O\left(\frac{\bar{p}}{n^2}\right),$$
then $|\hat{p}-\bar{p}|=O_P\left(\sqrt{\bar{p}}/n\right)=O_P\left(\sqrt{\rho}/n\right)$, which implies
$$|\hat{p}^3-\bar{p}^3|=O_P\left(\frac{\rho^{5/2}}{n}+\frac{\rho^2}{n^2}+\frac{\rho^{3/2}}{n^3}\right).$$
Now we are going to analyze $|\hat{q}-\bar{q}|$. A critical decomposition formula we need is
\begin{eqnarray}
\nonumber && A_{12}A_{13}A_{23}-p_{12}p_{13}p_{23} \\
\nonumber &=& p_{13}p_{23}(A_{12}-p_{12}) + p_{12}p_{23}(A_{13}-p_{13}) + p_{12}p_{23}(A_{23}-p_{23}) \\
\nonumber && + p_{23}(A_{12}-p_{12})(A_{13}-p_{13}) + p_{13}(A_{12}-p_{12})(A_{23}-p_{23}) + p_{12}(A_{23}-p_{23})(A_{13}-p_{13}) \\
\label{eq:critical} && + (A_{12}-p_{12})(A_{13}-p_{13})(A_{23}-p_{23}).
\end{eqnarray}
Given a triple $\lambda=(i,j,k)$, define
\begin{eqnarray*}
\wt{Y}_{\lambda} &=& (A_{ij}-p_{ij})(A_{ik}-p_{ik})(A_{jk}-p_{jk}), \\
\wt{W}_{\lambda} &=& p_{ik}(A_{ij}-p_{ij})(A_{jk}-p_{jk})+p_{ij}(A_{ik}-p_{ik})(A_{jk}-p_{jk})+p_{jk}(A_{ij}-p_{ij})(A_{ik}-p_{ik}).
\end{eqnarray*}
By the definitions, it is easy to see that
$$\mathbb{E}\wt{Y}_{\lambda}\wt{Y}_{\lambda'}=\begin{cases}
p_{ij}(1-p_{ij})p_{ik}(1-p_{ik})p_{jk}(1-p_{jk}), & \lambda=\lambda',\\
0, & \lambda\neq \lambda',
\end{cases}$$
and
$$\mathbb{E}\wt{W}_{\lambda}\wt{W}_{\lambda'}=\begin{cases}
p_{ij}p_{jk}p_{ik}\left[p_{ik}(1-p_{ij})(1-p_{ik})+p_{ij}(1-p_{jk})(1-p_{ik})+p_{jk}(1-p_{ij})(1-p_{jk})\right], & \lambda=\lambda',\\
0, & \lambda\neq \lambda'.
\end{cases}$$
The decomposition (\ref{eq:critical}) implies
\begin{eqnarray*}
\hat{q}-\bar{q} &=& \sum_{1\leq i<j\leq n}\left[\frac{1}{{n\choose 3}}\sum_{\{k\in[n]:k\neq i,j\}} p_{ik}p_{jk}\right](A_{ij}-p_{ij}) \\
&& + \frac{1}{{n\choose 3}}\sum_{1\leq i<j<k\leq n}\wt{Y}_{\lambda} + \frac{1}{{n\choose 3}}\sum_{1\leq i<j<k\leq n}\wt{W}_{\lambda}.
\end{eqnarray*}
We will bound the variance of each of the three terms. For the first term,
\begin{eqnarray*}
&& \Var\left[\sum_{1\leq i<j\leq n}\left[\frac{1}{{n\choose 3}}\sum_{\{k\in[n]:k\neq i,j\}} p_{ik}p_{jk}\right](A_{ij}-p_{ij})\right] \\
&=& \sum_{1\leq i<j\leq n}\left[\frac{1}{{n\choose 3}}\sum_{\{k\in[n]:k\neq i,j\}} p_{ik}p_{jk}\right]^2\Var(A_{ij}-p_{ij}) \\
&=& O\left(\frac{\rho^5}{n^2}\right).
\end{eqnarray*}
For the second term,
\begin{eqnarray*}
\Var\left[\frac{1}{{n\choose 3}}\sum_{1\leq i<j<k\leq n}\wt{Y}_{\lambda}\right] &=& \frac{1}{{n\choose 3}^2}\sum_{1\leq i<j<k\leq n}\mathbb{E}\wt{Y}_{\lambda}^2 \\
&=& O\left(\frac{\rho^3}{n^3}\right).
\end{eqnarray*}
For the third term,
\begin{eqnarray*}
\Var\left[\frac{1}{{n\choose 3}}\sum_{1\leq i<j<k\leq n}\wt{W}_{\lambda}\right] &=& \frac{1}{{n\choose 3}^2}\sum_{1\leq i<j<k\leq n}\mathbb{E}\wt{W}_{\lambda}^2 \\
&=& O\left(\frac{\rho^4}{n^3}\right).
\end{eqnarray*}
Combining the variance bounds for the three terms, we obtain
$$|\hat{q}-\bar{q}|=O_P\left(\frac{\rho^{5/2}}{n}+\frac{\rho^{3/2}}{n^{3/2}}\right).$$
Thus, the proof is complete.
\end{proof}

\begin{proof}[Proof of Lemma \ref{lem:var-p-q-latent}]
We first study $|\hat{p}-\tilde{p}|$. The variance has decomposition
\begin{equation}
\mathbb{E}(\hat{p}-\tilde{p})^2 = \mathbb{E}\Var(\hat{p}|\xi) + \Var(\mathbb{E}(\hat{p}|\xi)).\label{eq:var-d}
\end{equation}
The first term $\mathbb{E}\Var(\hat{p}|\xi)$ has bound
$$\mathbb{E}\Var(\hat{p}|\xi)\leq \frac{1}{{n\choose 2}^2}\sum_{1\leq i<j\leq n}\mathbb{E}(g(\xi_i)^Tg(\xi_j))=O\left(\frac{\|g\|_{\infty}^2}{n^2}\right).$$
We study the second term $\Var(\mathbb{E}(\hat{p}|\xi))$. Note that the conditional expectation is
$$\mathbb{E}(\hat{p}|\xi)=\frac{1}{{n\choose 2}}\sum_{1\leq i<j\leq n}g(\xi_i)^Tg(\xi_j)=\left\|\frac{1}{n}\sum_{i=1}^ng(\xi_i)\right\|^2-\frac{1}{n^2}\sum_{i=1}^n\|g(\xi_i)\|^2.$$
Since $\mathbb{E}\hat{p}=\|\mathbb{E}g(\xi)\|^2$, we have
\begin{eqnarray*}
\Var(\mathbb{E}(\hat{p}|\xi)) &=& \mathbb{E}\left(\left\|\frac{1}{n}\sum_{i=1}^ng(\xi_i)\right\|^2-\frac{1}{n^2}\sum_{i=1}^n\|g(\xi_i)\|^2-\|\mathbb{E}g(\xi)\|^2\right)^2 \\
&\leq& 2\mathbb{E}\left(\left\|\frac{1}{n}\sum_{i=1}^ng(\xi_i)\right\|^2-\|\mathbb{E}g(\xi)\|^2\right)^2 + 2\left(\frac{1}{n}\|g\|_{\infty}^2\right)^2 \\
&=& 2\mathbb{E}\left(\left(\frac{1}{n}\sum_{i=1}^ng(\xi_i)-\mathbb{E}g(\xi)\right)^T\left(\frac{1}{n}\sum_{i=1}^ng(\xi_i)+\mathbb{E}g(\xi)\right)\right)^2 + 2\left(\frac{1}{n}\|g\|_{\infty}^2\right)^2 \\
&\leq& 8\|g\|_{\infty}^2\mathbb{E}\left\|\frac{1}{n}\sum_{i=1}^ng(\xi_i)-\mathbb{E}g(\xi)\right\|^2 + 2\left(\frac{1}{n}\|g\|_{\infty}^2\right)^2 \\
&\leq& 8\frac{\|g\|_{\infty}^2}{n}{\sum_{l=1}^r\|g_l\|_{\infty}^2}  + 2\left(\frac{1}{n}\|g\|_{\infty}^2\right)^2 \\
&\leq& \frac{8\|g\|_{\infty}^4}{n} + \frac{2\|g\|_{\infty}^4}{n^2}.
\end{eqnarray*}
Then, the variance decomposition (\ref{eq:var-d}) implies
$$\mathbb{E}(\hat{p}-\tilde{p})^2=O\left(\frac{\|g\|_{\infty}^4}{n} + \frac{\|g\|_{\infty}^2}{n^2}\right),$$
which leads to
$$|\hat{p}^3-\tilde{p}^3|=O_P\left(\frac{\|g\|_{\infty}^3}{n^3}+\frac{\|g\|_{\infty}^4}{n^2}+\frac{\|g\|_{\infty}^5}{n}+\frac{\|g\|_{\infty}^6}{\sqrt{n}}\right).$$

Now we study $|\hat{q}-\tilde{q}|$. The variance has decomposition
\begin{equation}
\mathbb{E}(\hat{q}-\tilde{q})^2=\mathbb{E}\Var(\hat{q}|\xi)+\Var(\mathbb{E}(\hat{q}|\xi)).\label{eq:var-q-d}
\end{equation}
Note that
\begin{eqnarray*}
\mathbb{E}(\hat{q}|\xi) &=& \frac{1}{{n\choose 3}}\sum_{1\leq i<j<k\leq n}g(\xi_i)^Tg(\xi_j)g(\xi_j)^Tg(\xi_k)g(\xi_k)^Tg(\xi_i) \\
&=& \Tr\left(\frac{1}{{n\choose 3}}\sum_{1\leq i<j<k\leq n}g(\xi_i)g(\xi_i)^Tg(\xi_j)g(\xi_j)^Tg(\xi_k)g(\xi_k)^T\right) \\
&=& \Tr\left(\frac{1}{{n\choose 3}}\sum_{1\leq i<j<k\leq n}X_iX_jX_k\right),
\end{eqnarray*}
where we have used the notation $X_i=g(\xi_i)g(\xi_i)^T$. Therefore,
\begin{eqnarray*}
&& \left|\mathbb{E}(\hat{q}|\xi)-\Tr\left(\left(\frac{1}{n}\sum_{i=1}^nX_i\right)^3\right)\right| \\
&=&\left|\Tr\left(\frac{1}{{n\choose 3}}\sum_{i<j<k}X_iX_jX_k\right)-\Tr\left(\left(\frac{1}{n}\sum_{i=1}^nX_i\right)^3\right)\right| \\
&\leq& \frac{C}{n}\max_{1\leq i\leq n}\fnorm{X_i}^{3} \\
&\leq& \frac{C\|g\|_{\infty}^6}{n}.
\end{eqnarray*}
Hence,
$$
\Var(\mathbb{E}(\hat{q}|\xi)) \leq 2\mathbb{E}\left|\Tr\left(\left(\frac{1}{n}\sum_{i=1}^nX_i\right)^3\right)-\Tr\left(\left(\frac{1}{n}\sum_{i=1}^n\mathbb{E}X_i\right)^3\right)\right|^2 + \frac{C'\|g\|_{\infty}^{12}}{n^2}.
$$
Consider the matrix
$$U=\frac{1}{n}\sum_{i=1}^nX_i,\quad V=\frac{1}{n}\sum_{i=1}^n\mathbb{E}X_i.$$
Then, it is easy to check that
\begin{eqnarray*}
\left|\Tr(U^3)-\Tr(V^3)\right| &=& \left|\Tr((U-V)^3) + 3\Tr((U-V)V^2) + 3\Tr((U-V)^2V)\right| \\
&\leq& \fnorm{U-V}\left(\fnorm{U-V}^2+\fnorm{V}^2+\fnorm{U-V}\fnorm{V}\right).
\end{eqnarray*}
Thus
\begin{eqnarray*}
&& \left|\Tr\left(\left(\frac{1}{n}\sum_{i=1}^nX_i\right)^3\right)-\Tr\left(\left(\frac{1}{n}\sum_{i=1}^n\mathbb{E}X_i\right)^3\right)\right| \\
&\leq& \left|\Tr\left(\frac{1}{n}\sum_{i=1}^n(X_i-\mathbb{E}X_i)\right)\right|\|g\|_{\infty}^4,
\end{eqnarray*}
which implies
\begin{eqnarray*}
\Var(\mathbb{E}(\hat{q}|\xi)) &\leq& 2\mathbb{E}\left|\Tr\left(\frac{1}{n}\sum_{i=1}^n(X_i-\mathbb{E}X_i)\right)\right|^2\|g\|_{\infty}^8 + 2\frac{C^2\|g\|_{\infty}^{12}}{n^2} \\
&\leq& \frac{2\|g\|_{\infty}^8\mathbb{E}|\Tr(X_1)|^2}{n} + 2\frac{C^2\|g\|_{\infty}^{12}}{n^2} \\
&=& O\left(\frac{\|g\|_{\infty}^{12}}{n}\right).
\end{eqnarray*}
Similar to the previous analysis in the proof of Lemma \ref{lem:var-p-q}, we also have
$$\mathbb{E}\Var(\hat{q}|\xi)=O\left(\frac{\|g\|_{\infty}^{10}}{n^2}+\frac{\|g\|_{\infty}^{6}}{n^3}\right).$$
Hence, due to (\ref{eq:var-q-d}), we have
$$|\hat{q}-\tilde{q}|=O_P\left(\frac{\|g\|_{\infty}^{6}}{\sqrt{n}}+\frac{\|g\|_{\infty}^5}{n}+\frac{\|g\|_{\infty}^3}{n^{3/2}}\right).$$
The proof is complete.
\end{proof}

\begin{proof}[Proof of Lemma \ref{lem:var-p-q-M}]
We first study $|\hat{p}^{\mathcal{M}}-\bar{p}|$. It is not hard to see that $\mathbb{E}\hat{p}^{\mathcal{M}}=\bar{p}$. The variance has decomposition
$$\mathbb{E}(\hat{p}^{\mathcal{M}}-\bar{p})^2=\mathbb{E}\Var(\hat{p}^{\mathcal{M}}|A)+\Var(\mathbb{E}(\hat{p}^{\mathcal{M}}|A)).$$
Note that $\mathbb{E}(\hat{p}^{\mathcal{M}}|A)=\hat{p}$. Thus, $\Var(\mathbb{E}(\hat{p}^{\mathcal{M}}|A))=O(n^{-2}\rho)$. To study $\hat{p}^{\mathcal{M}}$, note that it is the average of $m$ uniform draws without replacement from
$$\left\{\frac{1}{{n-1\choose 2}}\sum_{\{1\leq j<k\leq n: j,k\neq i\}}\frac{A_{ij}+A_{ik}+A_{jk}}{3}\right\}_{i\in[n]}.$$
The above set is denoted as $\{H_i\}_{i\in[n]}$. Then,
\begin{eqnarray*}
\mathbb{E}\Var(\hat{p}^{\mathcal{M}}|A) &=& \frac{n-m}{n(m-1)}\mathbb{E}\left[\frac{1}{n}\sum_{i=1}^nH_i^2-\left(\frac{1}{n}\sum_{i=1}^nH_i\right)^2\right] \\
&\leq& \frac{n-m}{n(m-1)}\frac{1}{n}\sum_{i=1}^n\mathbb{E}H_i^2 \\
&=&  \frac{n-m}{n(m-1)}\frac{1}{n}\sum_{i=1}^n\left(\Var(H_i)+(\mathbb{E}H_i)^2\right) \\
&=& O\left(\frac{n-m}{n(m-1)}\left(\rho^2+\frac{\rho}{n}\right)\right).
\end{eqnarray*}
Hence,
$$(\hat{p}^{\mathcal{M}}-\bar{p})^2=O_P\left(\frac{n-m}{n(m-1)}\left(\rho^2+\frac{\rho}{n}\right) + \frac{\rho}{n^2}\right)=O_P\left(\frac{\rho^2}{m}+\frac{\rho}{mn}\right).$$
This implies that
$$|(\hat{p}^{\mathcal{M}})^3-\bar{p}^3|=O_P\left(\frac{\rho^{3/2}}{(mn)^{3/2}}+\frac{\rho^2}{mn}+\frac{\rho^{5/2}}{(mn)^{1/2}}+\frac{\rho^3}{m^{1/2}}\right).$$

We continue to study $|\hat{q}^{\mathcal{M}}-\bar{q}|$. Again, we have $\mathbb{E}\hat{q}^{\mathcal{M}}=\bar{q}$ and
$$\mathbb{E}(\hat{q}^{\mathcal{M}}-\bar{q})^2=\mathbb{E}\Var(\hat{q}^{\mathcal{M}}|A)+\Var(\mathbb{E}(\hat{q}^{\mathcal{M}}|A)).$$
Note that $\mathbb{E}(\hat{q}^{\mathcal{M}}|A)=\hat{q}$. Thus, $\Var(\mathbb{E}(\hat{q}^{\mathcal{M}}|A))=O\left(\frac{\rho^5}{n^2}+\frac{\rho^3}{n^3}\right)$, which was derived in the proof of Lemma \ref{lem:var-p-q}. To study $\hat{q}^{\Delta}$, note that it is the average of $m$ uniform draws without replacement from
$$\left\{\frac{1}{{n-1\choose 2}}\sum_{\{1\leq j<k\leq n: j,k\neq i\}}A_{ij}A_{ik}A_{jk}\right\}_{i\in[n]}.$$
The above set is denoted as $\{G_i\}_{i\in[n]}$. Then,
\begin{eqnarray*}
\mathbb{E}\Var(\hat{p}^{\mathcal{M}}|A) &=& \frac{n-m}{n(m-1)}\mathbb{E}\left[\frac{1}{n}\sum_{i=1}^nG_i^2-\left(\frac{1}{n}\sum_{i=1}^nG_i\right)^2\right] \\
&\leq& \frac{n-m}{n(m-1)}\frac{1}{n}\sum_{i=1}^n\mathbb{E}G_i^2 \\
&=&  \frac{n-m}{n(m-1)}\frac{1}{n}\sum_{i=1}^n\left(\Var(G_i)+(\mathbb{E}G_i)^2\right) \\
&=& O\left(\frac{n-m}{n(m-1)}\left(\rho^6+\frac{\rho^5}{n}+\frac{\rho^3}{n^2}\right)\right) \\
&=& O\left(\frac{\rho^6}{m}+\frac{\rho^5}{mn}+\frac{\rho^3}{mn^2}\right),
\end{eqnarray*}
where the calculation of $\Var(G_i)$ is with the help of the decomposition (\ref{eq:critical}). Thus, the proof is complete
\end{proof}

\begin{proof}[Proof of Lemma \ref{lem:var-p-q-latent-M}]
Note the variance decomposition
$$\mathbb{E}(\hat{p}^{\mathcal{M}}-\tilde{p})^2=\mathbb{E}\Var(\hat{p}^{\mathcal{M}}|A)+\Var(\mathbb{E}(\hat{p}^{\mathcal{M}}|A)),$$
and
$$\mathbb{E}(\hat{q}^{\mathcal{M}}-\tilde{q})^2=\mathbb{E}\Var(\hat{q}^{\mathcal{M}}|A)+\Var(\mathbb{E}(\hat{q}^{\mathcal{M}}|A)).$$
The first terms of the two equations have been studied in the proof of Lemma \ref{lem:var-p-q-M}, where $\rho$ is replaced by $\|g\|_{\infty}^2$. The second terms of the two equations have been studied in the proof of Lemma \ref{lem:var-p-q-latent}. We, therefore, obtain the desired conclusion.
\end{proof}

\begin{proof}[Proof of Lemma \ref{lem:var-p-q-Delta}]
We first study $|\hat{p}^{\Delta}-\bar{p}|$. It is not hard to see that $\mathbb{E}\hat{p}^{\Delta}=\bar{p}$. The variance has decomposition
$$\mathbb{E}(\hat{p}^{\Delta}-\bar{p})^2=\mathbb{E}\Var(\hat{p}^{\Delta}|A)+\Var(\mathbb{E}(\hat{p}^{\Delta}|A)).$$
Note that $\mathbb{E}(\hat{p}^{\Delta}|A)=\hat{p}$. Thus, $\Var(\mathbb{E}(\hat{p}^{\Delta}|A))=O(n^{-2}\rho)$. To study $\hat{p}^{\Delta}$, note that it is the average of $|\Delta|$ uniform draws with replacement from
$$\left\{\frac{A_{ij}+A_{ik}+A_{jk}}{3}\right\}_{1\leq i<j<k\leq n}.$$
Then,
\begin{eqnarray*}
\mathbb{E}\Var(\hat{p}^{\Delta}|A) &\leq& \frac{1}{|\Delta|}\frac{1}{{n\choose 3}}\sum_{1\leq i<j<k\leq n}\mathbb{E}\left(\frac{A_{ij}+A_{ik}+A_{jk}}{3}\right)^2 \\
&=& O\left(\frac{\rho}{|\Delta|}\right).
\end{eqnarray*}
Hence,
$$(\hat{p}^{\Delta}-\bar{p})^2=O_P\left(\frac{\rho}{n^2}+\frac{\rho}{|\Delta|}\right),$$
which implies
$$|\hat{p}^{\Delta}-\bar{p}|^3=O_P\left(\frac{\rho^{5/2}}{n}+\frac{\rho^2}{n^2}+\frac{\rho^{3/2}}{n^3}+\frac{\rho^{5/2}}{|\Delta|^{1/2}}+\frac{\rho^2}{|\Delta|}+\frac{\rho^{3/2}}{|\Delta|^{3/2}}\right).$$

We continue to study $|\hat{q}^{\Delta}-\bar{q}|$. Again, we have $\mathbb{E}\hat{q}^{\Delta}=\bar{q}$ and
$$\mathbb{E}(\hat{q}^{\Delta}-\bar{q})^2=\mathbb{E}\Var(\hat{q}^{\Delta}|A)+\Var(\mathbb{E}(\hat{q}^{\Delta}|A)).$$
Note that $\mathbb{E}(\hat{q}^{\Delta}|A)=\hat{q}$. Thus, $\Var(\mathbb{E}(\hat{q}^{\Delta}|A))=O\left(\frac{\rho^5}{n^2}+\frac{\rho^3}{n^3}\right)$, which was derived in the proof of Lemma \ref{lem:var-p-q}. To study $\hat{q}^{\Delta}$, note that it is the average of $|\Delta|$ uniform draws with replacement from
$$\left\{A_{ij}A_{ik}A_{jk}\right\}_{1\leq i<j<k\leq n}.$$
Then,
\begin{eqnarray*}
\mathbb{E}\Var(\hat{p}^{\Delta}|A) &\leq& \frac{1}{|\Delta|}\frac{1}{{n\choose 3}}\sum_{1\leq i<j<k\leq n}\mathbb{E}\left(A_{ij}A_{ik}A_{jk}\right)^2 \\
&=& O\left(\frac{\rho^3}{|\Delta|}\right).
\end{eqnarray*}
Hence,
$$|\hat{q}^{\Delta}-\bar{q}|=O_P\left(\frac{\rho^{5/2}}{n}+\frac{\rho^{3/2}}{n^{3/2}}+\frac{\rho^{3/2}}{|\Delta|^{1/2}}\right).$$
Thus, the proof is complete.
\end{proof}

\begin{proof}[Proof of Lemma \ref{lem:var-p-q-latent-Delta}]
The proof is the same as that of Lemma \ref{lem:var-p-q-latent-M}.
\end{proof}

\section*{Acknowledgements}

The authors thank Fengnan Gao for suggesting the martingale central limit theorem in \cite{hall2014martingale}, and Scarlett Li for help with the simulations.

\setlength{\bibsep}{8pt}
\bibliographystyle{apalike}
\bibliography{ref}

\end{document}